\documentclass[pdflatex,sn-mathphys-num]{sn-jnl}

\usepackage{graphicx}%
\usepackage{subcaption}
\usepackage{amsmath,amssymb,amsfonts}%
\usepackage{amsthm}%
\usepackage{mathrsfs}%
\usepackage{mathdots}
\usepackage[title]{appendix}%
\usepackage{longtable}

\theoremstyle{thmstyleone}%
\newtheorem{theorem}{Theorem}
\newtheorem{lemma}{Lemma}%

\theoremstyle{thmstyletwo}%
\newtheorem{example}{Example}%

\theoremstyle{thmstylethree}%
\newtheorem{definition}{Definition}%

\begin{document}

\title{Flat subspaces of the $SL(n,\mathbb{R})$ chiral equations}


\author*[1]{\fnm{I. A.} \sur{Sarmiento-Alvarado}}\email{ignacio.sarmiento@cinvestav.mx}

\author[2]{\fnm{P.} \sur{Wiederhold}}\email{petra.wiederhold@cinvestav.mx}
\equalcont{These authors contributed equally to this work.}

\author[1]{\fnm{T.} \sur{Matos}}\email{tonatiuh.matos@cinvestav.mx}
\equalcont{These authors contributed equally to this work.}

\affil*[1]{\orgdiv{Departamento de F\'{\i}sica}, \orgname{Centro de Investigaci\'on y de Estudios Avanzados del Instituto Polit\'ecnico Nacional}, \orgaddress{\street{Av. Instituto Polit\'ecnico Nacional 2508}, \city{San Pedro Zacatenco}, \postcode{07360}, \state{CDMX}, \country{M\'exico}}}

\affil[2]{\orgdiv{Departamento de Control Autom\'atico}, \orgname{Centro de Investigaci\'on y de Estudios Avanzados del Instituto Polit\'ecnico Nacional}, \orgaddress{\street{Av. Instituto Polit\'ecnico Nacional 2508}, \city{San Pedro Zacatenco}, \postcode{07360}, \state{CDMX}, \country{M\'exico}}}

\abstract{
In this work, we introduce a method for finding exact solutions to the vacuum Einstein field equations in higher dimensions from a given solution to the chiral equation.
When considering a $n + 2$-dimensional spacetime with $n$ commutative Killing vectors, the metric tensor can take the form $\hat g = f ( \rho, \zeta ) ( d \rho^2 + d \zeta^2 ) + g_{\mu \nu} ( \rho, \zeta ) d x^\mu d x^\nu$.
Then, the Einstein field equations in vacuum reduce to a chiral equation, $( \rho g_{, z} g ^{-1} )_{, \bar z} + ( \rho g_{, \bar z} g ^{-1} )_{, z} = 0$, and two differential equations, $( \ln f \rho ^{1-1/n} )_{, Z} = \frac{\rho}{2} \operatorname{tr} ( g_{, _Z} g^{-1} )^2$, where $g \in SL( n, \mathbb{R} )$ is the normalized matrix representation of $g_{\mu \nu}$, $z = \rho + i \zeta$ and $Z = z, \bar z$.
We use the ansatz $g = g ( \xi^a )$, where the parameters $\xi^a$ depend on $z$ and $\bar z$ and satisfy a generalized Laplace equation, $( \rho \xi^a _{, z} )_{, \bar z} + ( \rho \xi^a _{, \bar z} )_{, z} = 0$.
The chiral equation to the Killing equation, $A_{a , \xi^b} + A_{b , \xi^a} = 0$, where $A_a = g_{, \xi^a} g^{-1}$.
Furthermore, we assume that the matrices $A_a$ commute with each other; in this way, they fulfill the Killing equation.
}

\keywords{Chiral equation, Einstein field equations, Jordan matrices, Special linear group}

\maketitle

\section{Introduction}

It is well known that the Kaluza-Klein theory, based on a five-dimensional Riemannian geometry, is a proposal to unify gravity and electromagnetism \cite{kaluza_921, Klein:1926tv}.
Its impact was such that it was generalized to dimensions more than five, with the goal of unifying gravitation with electroweak and strong interactions \cite{Bailin:1987jd}.
In a $d$-dimensional theory, it is assumed that the whole space $U$ has a topology of a principal fiber bundle \cite{nakahara_2003}, where the base space is our 4-dimensional spacetime and the fiber is a Lie group $G$, which is known as the inner space.
In addition, $U$ is supposed to have a metric $\hat{g}$, which is invariant under the left action of $G$ over $U$.
In local coordinates, $\hat{g}$ has the form
\begin{equation}
\label{d-dim Kaluza Klein metric}
    \hat{g} = g_{\alpha \beta} dx^\alpha dx^\beta + h_{m n} ( \omega^m + A^m_\alpha dx^\alpha ) ( \omega^n + A^n_\beta dx^\beta ) ,
\end{equation}
where $g_{\alpha \beta}$, $h_{m n}$ and $A^m_\alpha$ depend on $x^\alpha$; $\{ \omega^m \}$ is a basis of left-invariant 1-forms on $G$ for all $\alpha, \beta \in \{ 1, \ldots, 4 \}$ and $m, n \in \{ 5, \ldots, d \}$.
In Eq. (\ref{d-dim Kaluza Klein metric}), $g_{\alpha \beta} dx^\alpha dx^\beta$ is the metric on spacetime, $h_{m n} \omega^m \omega^n$ is the metric on the fiber, and $A^m_\alpha dx^\alpha$ is the $G$-connection.
The Kaluza-Klein theory is one of the first theories based on higher dimensions and is considered as an important precursor to the string theory.
The string theory and the superstring theory unify all interactions in 26 and 10 dimensions, respectively.
Therefore, knowing exact solutions to the Einstein field equations (EFE) in higher dimensions gives a better understanding of gravity in relation to the other fundamental interactions of nature and matter.

Finding exact solutions to the EFE is not an easy work, especially in higher-dimensional spaces, due to the non-linearity and complexity of the field equations.
For a $(n + 2)$-dimensional space endowed with a metric, which admits $n$ commutative Killing vectors, there exists a coordinate system such that the metric tensor has the form \cite{Matos89,Sarmiento-Alvarado2023}
\begin{equation}
\label{metric tensor}
    \hat g = f ( \rho, \zeta ) ( d \rho^2 + d \zeta^2 ) + g_{i j} ( \rho, \zeta ) d x^i d x^j \text{ for all } i, j \in \{ 3, \ldots, n+2 \} \text{,}
\end{equation}
where $f$ is a positive function and $\rho = \sqrt{ -\det g_{i j} }$.
Thus, the EFE in vacuum, $R_{A B} = 0$ for all $A, B \in \{ 1, \ldots, n+2 \}$,  are separated into parts:
\begin{align}
\label{chiral eq g}
&    ( \rho g_{, z} g ^{-1} )_{, \bar z} + ( \rho g_{, \bar z} g ^{-1} )_{, z} = 0
,
\\\label{SL invariant field eq f}
&
    ( \ln f \rho ^{1-1/n} )_{, Z} = \frac{1}{2} \rho \operatorname{tr} ( g_{, _Z} g^{-1} )^2
    \text{ for } Z = z \text{ and } Z = \bar z,
\end{align}
where $g$ is a symmetric matrix in $SL ( n, \mathbb{R} )$, defined in terms of the components of the metric tensor as $g = -\rho^{-2/n} g_{i j}$, and $z = \rho + i \zeta$ .

Several exact solutions to the chiral equation \eqref{chiral eq g} can be found in the literature.
In \cite{10.1063/1.529991,Matos1993}, the authors solved the chiral equation (\ref{chiral eq g}) considering $g$ as a member of the groups $SL ( 2, \mathbb{R} )$, $SL ( 3, \mathbb{R} )$ and $SL ( 4, \mathbb{R} )$.
Recently, the authors in \cite{Sarmiento-Alvarado2023} solved it for any $n>1$.
To achieve this, they used the properties of Jordan matrices and algebraic methods from linear algebra theory.

The advantage of knowing the exact solutions of the $SL ( n, \mathbb{R} )$-invariant chiral equation is that it allows the construction of solutions to the EFE in vacuum.
For this reason, in this work, we solve the chiral equation for a symmetric matrix $g$ in $SL ( n, \mathbb{R} )$ using the algebraic method introduced in \cite{Sarmiento-Alvarado2023}, as well as techniques from linear algebra theory.
Once $g$ is obtained, we present the procedure for constructing exact solutions to vacuum EFE written in the form of a Kaluza-Klein metric \eqref{d-dim Kaluza Klein metric}.

This article is organized as follows.
In Section \ref{section: chiral equation}, we solve the $SL ( n, \mathbb{R} )$-invariant chiral equation for a symmetric matrix.
Given a solution to the chiral equation, in Section \ref{section: Kaluza-Klein metrics}, we show the procedure to obtain an exact solution to the EFE written as a Kaluza-Klein metric \eqref{d-dim Kaluza Klein metric}.
In Section \ref{section: centralizer}, we determined a pair of commuting matrices using the centralizer of matrices.
In Section \ref{section: commutative algebras}, we find the largest commutative algebras contained in $\mathfrak{sl} ( n, \mathbb{R} )$.
In Section \ref{section: subspace ia}, we introduce the set $\mathcal{I} ( \mathfrak{A} )$.
In Section \ref{section: example}, we apply our results to find exact solutions to the EFE in vacuum.
Finally, in Section \ref{section: conclusions}, we present some conclusions.

Throughout this paper, the indices $a, b, c$ take values in $\{ 1, \ldots, r \}$ for some $r \in \{ 1, \ldots, n - 1 \}$.
$\mathbf{M}_{m , n}$ is the set of all $m \times n$ matrices over $\mathbb{R}$.
In the case $m = n$, we only write $\mathbf{M}_m$.
$I_m$ is the identity matrix.
The zero matrix is denoted by $0_{m,n}$ or $0_m$ if $m = n$, or even by $0$ when there is no confusion.
We denote the set of all  $m \times m$ matrices that satisfy $g^T = g$ by $\mathbf{Sym}_m$.
$\mathbb{R} [ x ]$ is the set of all polynomials over $\mathbb{R}$.

\section{Chiral equation}
\label{section: chiral equation}

In this section, we will solve the chiral equation.

We assume that $g$ depends on a set of parameters $\xi^a = \xi^a ( z, \bar{z} )$, that is $g = g ( \xi^a ( z, \bar{z} ) )$.
Furthermore, the parameters $\xi^a$ fulfill the geodesic equation
\begin{equation}
\label{geodesic eq}
    \left( \rho \xi^c _{, z} \right)_{, \bar z} + \left( \rho \xi^c _{, \bar z} \right)_{, z} + 2 \rho \Gamma^c_{a b} \xi^a_{, z} \xi^b_{, \bar z} = 0
    \text{ for all } a, b, c \in \{ 1, \ldots, r \}
\end{equation}
of a Riemannian space $V$ with Christoffel symbols $\Gamma^c_{a b}$.
Here, $1 < r \leq n$.
Thus, the chiral equation (\ref{chiral eq g}) becomes
\begin{equation}
\label{Killing eq A Riemannian space}
    A_{a ; b} + A_{b ; a} = 0
\text{,}
\end{equation}
where we have defined the matrices
\begin{equation}
\label{def matrices A}
    A_a = g_{, \xi^a} g^{-1}
\text{.}
\end{equation}
For a flat space $V$, the Christoffel symbols are zero, thus Eqs. \eqref{geodesic eq} and \eqref{Killing eq A Riemannian space} reduces to
\begin{align}
\label{gen Laplace eq}
    \left( \rho \xi^c _{, z} \right)_{, \bar z} + \left( \rho \xi^c _{, \bar z} \right)_{, z} & = 0 \text{,}
\\\label{Killing eq A flat space}
    A_{a , b} + A_{b , a} & = 0 \text{,}
\end{align}
respectively.

From Eq. \eqref{def matrices A}, we obtain that the matrices $A_a$ are traceless as a consequence of the fact that the determinant of $g$ is constant.
This means that the matrices $A_a$ are members of the Lie algebra $\mathfrak{sl} ( n, \mathbb{R} )$.
Observe that the chiral equation (\ref{chiral eq g}) is invariant under transformations $g \to C g C ^T$, then the matrices $A_a$ vary as $A_a \to C A_a C^{-1}$, where $C$ is a constant matrix in $SL ( n, \mathbb{R} )$.
In this way, the set of matrices $A_a$ is separated into similarity equivalence classes.

The matrices $A_a$ satisfy Eq. \eqref{Killing eq A flat space} when they are constants.
Since the partial derivative of $A_a$ with respect to $\xi^b$, denoted by $A_{a, b}$, is
\begin{equation}
    A_{a, b} = -\frac{1}{2} [A_a, A_b]\text{,}
\end{equation}
then $\{ A_1, \ldots, A_r \}$ is a set of pairwise commuting matrices in $\mathfrak{sl} ( n, \mathbb{R} )$.
The matrix $g ( \xi^a )$ is determined by Lemma \ref{lemma: solution diff eq g}.
\begin{lemma}
\label{lemma: solution diff eq g}
    Let $\{ A_1, \ldots, A_r \}$ be a subset of pairwise commuting matrices in $\mathbf{M}_n$ and let $g \in \mathbf{Sym}_{n}$ be a matrix function of the variables $\xi^1, \ldots, \xi^r$.
    If $g_{, \xi^a} = A_a g$ for all $a \in \{ 1, \ldots, r \}$, then $g ( \xi^a ) = e ^{\xi^a A_a} g_0$, where $g_0$ is a constant matrix that satisfies $A_a g_0 = g_0 A_a ^T$.
\end{lemma}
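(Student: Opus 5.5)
The plan is to remove the exponential factor and show that what remains is constant. Define $h(\xi) = e^{-\xi^a A_a} g(\xi)$. Since the $A_a$ commute pairwise, the exponential of a sum factors: $e^{-\xi^a A_a} = \prod_a e^{-\xi^a A_a}$, with factors in any order. Hence
\begin{equation*}
\partial_{\xi^b} e^{-\xi^a A_a} = -A_b e^{-\xi^a A_a} = -e^{-\xi^a A_a} A_b .
\end{equation*}
Differentiating $h$ with the product rule and using the hypothesis $g_{,\xi^b} = A_b g$ gives
\begin{equation*}
h_{,\xi^b} = -A_b e^{-\xi^a A_a} g + e^{-\xi^a A_a} A_b g = 0 .
\end{equation*}
The last equality uses that $A_b$ commutes with $e^{-\xi^a A_a}$, because $A_b$ commutes with every $A_a$. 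So $h$ is constant on the (connected) domain of the $\xi^a$. Call this constant $g_0$; then $g = e^{\xi^a A_a} g_0$.

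It remains to derive the condition on $g_0$, and this uses the symmetry of $g$.

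\emph{First step.} Transpose the hypothesis. Since $g^T = g$,
\begin{equation*}
g_{,\xi^a} = (g_{,\xi^a})^T = g A_a^T ,
\end{equation*}
so $A_a g = g A_a^T$ at every point.

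\emph{Second step.} Evaluate at $\xi = 0$, where $g(0) = g_0$. This gives $A_a g_0 = g_0 A_a^T$ directly.

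If the domain does not contain the origin, I would instead substitute $g = e^{\xi^b A_b} g_0$ into $A_a g = g A_a^T$. Because $A_a$ commutes with $e^{\xi^b A_b}$, the identity becomes $e^{\xi^b A_b} A_a g_0 = e^{\xi^b A_b} g_0 A_a^T$. Cancelling the invertible exponential yields the same relation.

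The argument is essentially routine. The only point requiring care is the derivative of the matrix exponential with respect to $\xi^b$: the formula $\partial_{\xi^b} e^{\xi^a A_a} = A_b e^{\xi^a A_a}$ holds only because the $A_a$ commute. I would justify it either via the factorization into a product of single exponentials, or termwise from the power series, where commutativity lets $A_b$ be pulled to the front.

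Finally, I would note that symmetry of $g_0$ follows as well. Setting $\xi = 0$ in $g^T = g$ gives $g_0^T = g_0$. This is consistent with $e^{\xi^a A_a} g_0$ being symmetric, since the relation $A_a g_0 = g_0 A_a^T$ propagates to $e^{\xi^a A_a} g_0 = g_0 e^{\xi^a A_a^T}$.
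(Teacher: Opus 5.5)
Your proof is correct and essentially the same as the paper's. The only difference is that the paper integrates one variable at a time: it writes $g = e^{\xi^1 A_1} h$, uses commutativity to get $h_{,\xi^2} = A_2 h$, and so on. You instead use the single integrating factor $e^{-\xi^a A_a}$, but both rest on $A_b$ commuting with the exponential, and your derivation of $A_a g_0 = g_0 A_a^T$ (transpose the hypothesis, substitute $g = e^{\xi^b A_b} g_0$, cancel the invertible exponential) is exactly the paper's argument.
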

\begin{proof}
    First, we solve $g_{, \xi^1} = A_1 g$.
    Then, $g = e ^{\xi ^1 A_1} h$, where $h$ is a matrix function that depends on $\xi_2, \ldots, \xi_r$.
    Substituting this solution into $g_{, \xi^2} = A_2 g$, we obtain $h_{, \xi^2} = e ^{-\xi ^1 A_1} A_2 e ^{\xi ^1 A_1} h$.
    Since $A_1$ and $A_s$ commute, it follows that $h_{, \xi^2} = A_2 h$. 
    Following the same steps, we get $ g = e ^{ \xi^1 A_1 } \cdots e ^{ \xi^r A_r } g_0 = e ^{\xi^a A_a} g_0$, where $g_0$ is a constant matrix.

    From $g_{, \xi^a } = ( g_{, \xi^a } )^T$, we get the intertwining relations $A_a g = g A_a ^T$.
    Then, $A_a e ^{\xi^a A_a} g_0 = e ^{\xi^a A_a} g_0 A_a ^T$, so that $A_a g_0 = g_0 A_a ^T$ for all $a \in \{ 1, \ldots, r \}$.
\end{proof}

Once we know the matrix $g ( \xi^a )$, the next step is to find a set $\{ \xi^a ( z, \bar{z} ) \}$ of solutions to the generalized Laplace equation \eqref{gen Laplace eq}.
Suppose that we know such a set.
We then compute the exponential of the linear combination $\xi^a ( z, \bar{z} ) A_a$ to build a solution $g ( z, \bar{z} )$ to the chiral equation \eqref{chiral eq g}, which is given by $g ( z, \bar{z} ) = e^{ \xi^a ( z, \bar{z} ) A_a } g_0$.

\section{Kaluza-Klein metrics}
\label{section: Kaluza-Klein metrics}

In this section, we will show how to determine a metric tensor given a solution of the chiral equation.

In order to obtain a metric tensor written in the form \eqref{d-dim Kaluza Klein metric} from a given solution $g$ of Eq. \eqref{chiral eq g}, we partition $g$ as
\begin{equation}
    g
    = -\left[\begin{array}{cc}
        P & Q \\
        Q^T & H \\ 
    \end{array}\right] \text{,}
\end{equation}
respectively, where $P \in \mathbf{Sym}_2$, $Q \in \mathbf{M}_{2, n - 2}$ and $H \in \mathbf{Sym}_{n - 2}$.
If $H$ is a non-singular matrix, we use Aitken block diagonalization formula for $g$ \cite{Horn2005} to obtain
\begin{equation}
\label{decomposition g}
    g = - \left[\begin{array}{cc}
        I_2 & Q H^{-1} \\
        & I_{n - 2} \\ 
    \end{array}\right] \left[\begin{array}{cc}
        g/H & \\
        & H \\ 
    \end{array}\right] \left[\begin{array}{cc}
        I_2 & Q H^{-1} \\
        & I_{n - 2} \\ 
    \end{array}\right]^T ,
\end{equation}
where $g/H = P - Q H^{-1} Q^T \in \mathbf{Sym}_2$ is the Schur complement of $H$ in $g$ \cite{Horn2005}.
Therefore, an exact solution to EFE is given by the metric tensor
\begin{equation}
\label{sol metric tensor}
    \hat g = f ( d\rho^2 + d\zeta^2 ) + g_{\mu \nu } d x^\mu d x^\nu + h_{i j} ( dx^i + A_\mu ^i dx^\mu ) ( dx^j + A_\nu ^j dx^\nu ) \text{,}
\end{equation}
whose components are $g_{\mu \nu } = \rho^\frac{2}{n} g/H$, $h_{i j} = \rho^\frac{2}{n} H$ and $A_\mu^i =  H^{-1} Q^T$ for all $\mu, \nu \in \{ 3, 4 \}$ and $i, j \in \{ 5, \ldots, n + 2 \}$.
Recall that $f$ is obtained by solving Eq. \eqref{SL invariant field eq f} for the set $\{ \xi^a ( z, \bar{z} ) \}$.

From Schur determinant formula for $g$ \cite{Horn2005}, we have $\det g/H \det H = -1$, so $g/H$ is a non-singular matrix.
This means that the inverse of $\hat g$ can be expressed in terms of the inverses of $f$, $g/H$ and $H$.
To compute $g^{-1}$, we use the Banachiewicz inversion formula for $g$ \cite{Horn2005}, obtaining
\begin{equation}
    g^{-1}
    = -\left[\begin{array}{cc}
        ( P / H )^{-1} & -( P / H )^{-1} Q H^{-1} \\
        -( ( P / H )^{-1} Q H^{-1} )^T & H^{-1} \\
    \end{array}\right].
\end{equation}
Thus, $g^{\rho \rho} = g^{\zeta \zeta} = f^{-1}$, $g^{\mu \nu} = \rho^{ -\frac{2}{n} } ( P / H )^{-1}$, $g^{\mu j} = - \rho^{ -\frac{2}{n} }  ( P / H )^{-1} Q H^{-1}$ and $g^{i j} = \rho^{ -\frac{2}{n} } ( H^{-1} - H^{-1} Q^T ( P / H )^{-1} Q H^{-1} )$ for all $\mu, \nu \in \{ 3, 4 \}$ and $i, j \in \{ 5, \ldots, n + 2 \}$.

\section{The centralizer of a set of matrices}  %
\label{section: centralizer}

In this section, we will introduce the set of all matrices that commute with a given matrix or a set of matrices.
For more information on pairwise commuting matrices, see \cite{horn_johnson_1985, Gantmacher59, DOLINAR20132904, FERRER20133945}.

\begin{definition}
    For any non-empty set $\mathfrak{A} \subset \mathbf{M}_{n}$, define its \textbf{centralizer} as the set
    \begin{equation}
        \mathcal{C} ( \mathfrak{A} ) = \left\{
            B \in \mathbf{M}_{n} :
            A B = B A
            \quad
            \forall A \in \mathfrak{A}
        \right\}   \  .
    \end{equation}
\end{definition}
When $\mathfrak{A} = \{ A \} $ is a singleton set, we write $\mathcal{C} ( A )$ instead of $\mathcal{C} ( \{ A \} )$.
For a finite set $\{ A_1, \ldots, A_r \}$, we write $\mathcal{C} ( A_1, \ldots, A_r )$ or $\mathcal{C} \{ A_a \}$.

\begin{theorem}
    For any non-empty $\mathfrak{A} \subset \mathbf{M}_{n}$, $\mathcal{C} ( \mathfrak{A} )$ is a subspace of $\mathbf{M}_{n}$.
\end{theorem}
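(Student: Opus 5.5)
The plan is to verify the subspace axioms directly from the definition of $\mathcal{C}(\mathfrak{A})$. The only tools needed are the bilinearity of matrix multiplication and the fact that $\mathbf{M}_n$ is a real vector space. Nothing from Lemma \ref{lemma: solution diff eq g} or the earlier sections is required.

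First, I will check that $\mathcal{C}(\mathfrak{A})$ is non-empty by showing that it contains the zero matrix. For every $A \in \mathfrak{A}$ we have $A\,0_n = 0_n = 0_n A$, so $0_n \in \mathcal{C}(\mathfrak{A})$. (It also contains $I_n$, but that is not needed here.)

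Next, I will show closure under linear combinations. Take $B_1, B_2 \in \mathcal{C}(\mathfrak{A})$ and $\lambda, \mu \in \mathbb{R}$, and fix an arbitrary $A \in \mathfrak{A}$. By distributivity and the compatibility of scalar multiplication with the matrix product,
\begin{equation*}
A(\lambda B_1 + \mu B_2) = \lambda A B_1 + \mu A B_2 = \lambda B_1 A + \mu B_2 A = (\lambda B_1 + \mu B_2) A .
\end{equation*}
Since $A$ was arbitrary, $\lambda B_1 + \mu B_2 \in \mathcal{C}(\mathfrak{A})$.

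Together these two steps show that $\mathcal{C}(\mathfrak{A})$ is a subspace of $\mathbf{M}_n$. A cleaner alternative, which I could mention as a remark, is structural: each map $\phi_A(B) = AB - BA$ is linear, so $\mathcal{C}(\mathfrak{A}) = \bigcap_{A\in\mathfrak{A}} \ker \phi_A$, and an intersection of subspaces is a subspace. This version also makes it clear that the hypothesis $\mathfrak{A} \neq \emptyset$ is only a convention: for the empty set the centralizer would simply be all of $\mathbf{M}_n$.

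The argument has no real obstacle. The only point that needs care is that the quantifier "for all $A \in \mathfrak{A}$" is handled uniformly, so that a single fixed $A$ works throughout each computation.
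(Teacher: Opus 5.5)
Your proof is correct. The paper states this theorem without proof. Its proof of the parallel result, that $\mathcal{I}(\mathfrak{A})$ is a subspace, is the same direct check that $\alpha X + \beta Y$ stays in the set by bilinearity of the matrix product, so your argument matches the paper's approach. Your explicit check that $0_n \in \mathcal{C}(\mathfrak{A})$ is a small addition the paper's analogous proof leaves out.
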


\begin{theorem}
\label{theo: centralizer union sets}
    For any non-empty subsets $\mathfrak{A}, \mathfrak{B} $ of $ \mathbf{M}_{n}$, 
    \begin{equation}
        \mathcal{C} ( \mathfrak{A} \cup \mathfrak{B} ) = \mathcal{C} ( \mathfrak{A} ) \cap \mathcal{C} ( \mathfrak{B} )
    \text{.}
    \end{equation}
\end{theorem}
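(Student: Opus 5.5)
The plan is to prove the set equality by double inclusion, working directly from the definition of the centralizer. Since both sides are defined by a universally quantified commutation condition, I expect the argument to reduce to splitting or merging that quantifier.

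For the inclusion $\mathcal{C} ( \mathfrak{A} \cup \mathfrak{B} ) \subseteq \mathcal{C} ( \mathfrak{A} ) \cap \mathcal{C} ( \mathfrak{B} )$, take $B \in \mathcal{C} ( \mathfrak{A} \cup \mathfrak{B} )$. Then $A B = B A$ for every $A \in \mathfrak{A} \cup \mathfrak{B}$. Since $\mathfrak{A} \subseteq \mathfrak{A} \cup \mathfrak{B}$, $B$ commutes with every element of $\mathfrak{A}$, so $B \in \mathcal{C} ( \mathfrak{A} )$. The same argument with $\mathfrak{B} \subseteq \mathfrak{A} \cup \mathfrak{B}$ gives $B \in \mathcal{C} ( \mathfrak{B} )$. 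This step uses only the general fact that $\mathcal{C}$ is inclusion-reversing: $\mathfrak{S} \subseteq \mathfrak{T}$ implies $\mathcal{C} ( \mathfrak{T} ) \subseteq \mathcal{C} ( \mathfrak{S} )$, which is immediate from the definition.

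For the reverse inclusion, take $B \in \mathcal{C} ( \mathfrak{A} ) \cap \mathcal{C} ( \mathfrak{B} )$ and an arbitrary $A \in \mathfrak{A} \cup \mathfrak{B}$. Either $A \in \mathfrak{A}$, and then $AB = BA$ because $B \in \mathcal{C} ( \mathfrak{A} )$; or $A \in \mathfrak{B}$, and then $AB = BA$ because $B \in \mathcal{C} ( \mathfrak{B} )$. Hence $B$ commutes with every element of the union, so $B \in \mathcal{C} ( \mathfrak{A} \cup \mathfrak{B} )$.

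I expect no genuine obstacle. The only point to state explicitly is that the union is non-empty, which holds because $\mathfrak{A}$ and $\mathfrak{B}$ are, so the centralizer of the union is defined. It is also worth noting the consequence used later in the paper: by induction on $r$, $\mathcal{C} ( A_1, \ldots, A_r ) = \bigcap_{a} \mathcal{C} ( A_a )$. By the preceding theorem, this is an intersection of subspaces and hence itself a subspace.
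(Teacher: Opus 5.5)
Your proof is correct. The paper states this theorem without proof, but your double-inclusion argument is exactly how the paper proves the parallel statement $\mathcal{I}(\mathfrak{A}\cup\mathfrak{B}) = \mathcal{I}(\mathfrak{A})\cap\mathcal{I}(\mathfrak{B})$ in Theorem~\ref{theorem: subspace ia union}: inclusion reversal (here Lemma~\ref{lemma: subset centralizer}) gives one inclusion, and a case split on membership in the union gives the other.
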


\begin{lemma}
\label{lemma: subset centralizer}
   For any non-empty subsets $\mathfrak{A}, \mathfrak{B} $ of $ \mathbf{M}_{n}$, $\mathfrak{A} \subset \mathfrak{B}$ implies $\mathcal{C} ( \mathfrak{B} ) \subset \mathcal{C} ( \mathfrak{A} )$.
\end{lemma}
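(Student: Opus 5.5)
The plan is to argue directly from the definition of the centralizer, with Theorem \ref{theo: centralizer union sets} as an alternative route. Fix non-empty subsets $\mathfrak{A} \subset \mathfrak{B}$ of $\mathbf{M}_n$ and take an arbitrary $C \in \mathcal{C} ( \mathfrak{B} )$. By definition, $C$ commutes with every element of $\mathfrak{B}$. So it suffices to check that $C$ commutes with every element of $\mathfrak{A}$.

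Let $A \in \mathfrak{A}$. Because $\mathfrak{A} \subset \mathfrak{B}$, we also have $A \in \mathfrak{B}$. Hence $A C = C A$, and so $C \in \mathcal{C} ( \mathfrak{A} )$. Since $C$ was arbitrary, this gives the inclusion $\mathcal{C} ( \mathfrak{B} ) \subset \mathcal{C} ( \mathfrak{A} )$.

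The same result can be obtained from Theorem \ref{theo: centralizer union sets}. Since $\mathfrak{A} \subset \mathfrak{B}$, we have $\mathfrak{B} = \mathfrak{A} \cup \mathfrak{B}$. The theorem then gives $\mathcal{C} ( \mathfrak{B} ) = \mathcal{C} ( \mathfrak{A} ) \cap \mathcal{C} ( \mathfrak{B} )$, and this intersection is contained in $\mathcal{C} ( \mathfrak{A} )$.

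Neither route has a real obstacle. The only point needing care is that both sets are non-empty, so both centralizers are defined as in the definition. This is part of the hypothesis.
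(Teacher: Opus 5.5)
Your direct argument is correct and is essentially the paper's approach. The paper states this lemma without proof, but its proof of the parallel Lemma~\ref{lemma: subset ia} for $\mathcal{I}$ is exactly this element-chasing argument from the definition. Your alternative route through Theorem~\ref{theo: centralizer union sets} works only if that identity is proved directly from the definition. The paper's pattern in Theorem~\ref{theorem: subspace ia union} derives the union identity \emph{from} the subset lemma, so using it here would otherwise be circular.
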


\begin{lemma}
\label{lemma: centralizer of pair commuting matrices set}
    If $\mathfrak{A} \subset \mathbf{M}_{n}$ is a set of pairwise commuting matrices then $\mathfrak{A} \subset \mathcal{C} ( \mathfrak{A} )$.
\end{lemma}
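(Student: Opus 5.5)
The plan is to unwind the definition of the centralizer directly. No earlier result is needed beyond the definition of $\mathcal{C}(\mathfrak{A})$. We must show that every element of $\mathfrak{A}$ lies in $\mathcal{C}(\mathfrak{A})$.

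Fix an arbitrary $B \in \mathfrak{A}$. By the definition of $\mathcal{C}(\mathfrak{A})$, membership of $B$ requires $AB = BA$ for every $A \in \mathfrak{A}$. So let $A \in \mathfrak{A}$ be arbitrary and split into two cases.

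\begin{itemize}
\item If $A \neq B$, the hypothesis that $\mathfrak{A}$ consists of pairwise commuting matrices gives $AB = BA$ directly.
\item If $A = B$, then $AB = A^2 = BA$ holds trivially, since every matrix commutes with itself.
\end{itemize}

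In either case $AB = BA$. Since $A$ was arbitrary, $B \in \mathcal{C}(\mathfrak{A})$, and since $B$ was arbitrary, $\mathfrak{A} \subset \mathcal{C}(\mathfrak{A})$.

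There is no real obstacle here. The only point requiring care is the diagonal case $A = B$, because ``pairwise commuting'' is sometimes read as referring only to distinct pairs. That case is settled by the triviality $B^2 = B^2$.
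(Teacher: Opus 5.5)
Your proof is correct and follows the paper's argument: fix an element of $\mathfrak{A}$, note that it commutes with every element of $\mathfrak{A}$, and conclude from the definition that it lies in $\mathcal{C}(\mathfrak{A})$. Your separate treatment of the case $A = B$ is a bit of extra care that the paper leaves implicit.
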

\begin{proof}
    If $X \in \mathfrak{A}$, then $X A = A X$ for all $A \in \mathfrak{A}$, so that $X \in \mathcal{C} ( \mathfrak{A} )$.
    Hence, $\mathfrak{A} \subset \mathcal{C} ( \mathfrak{A} )$.
\end{proof}

In \cite{10.1063/1.529991}, the authors worked with $\mathfrak{sl} ( 3, \mathbb{R} )$ and determined $g$ considering a flat subspace spanned by two matrices $\sigma_1$ and $\sigma_3$ which commute with each other.
To find the matrices $\sigma_1$ and $\sigma_3$, they assumed that $\sigma_1$ is a representative of some similarity equivalence class of $\mathfrak{sl} ( 3, \mathbb{R} )$.
Then, they solved the linear matrix equation $[ \sigma_1, \sigma_3 ] = 0$ to find $\sigma_3$.

In order to determine a pair $A_1$ and $A_2$ of commuting matrices, we suppose that $A_1$ is representative of some equivalence class of $\mathfrak{sl} ( n, \mathbb{R} )$.
Since $A_2$ commutes with $A_1$, $A_2$ belongs to the centralizer of $A_1$.

In a previous work \cite{Sarmiento-Alvarado2023}, we classified the equivalence classes of $\mathfrak{sl} ( n, \mathbb{R} )$ into five types according to the type of eigenvalues.
This means that there are five types of matrices $A_1$.
In the following, we present the five types of equivalence classes of $\mathfrak{sl} ( n, \mathbb{R} )$ in their real Jordan form.

\begin{definition}
\label{def: jordan block 1st lambda}
    A real Jordan block of the first kind $ J _m (\lambda) \in \mathbf{M}_{m} $ with a real eigenvalue $\lambda$ is defined as a matrix of the form
    \begin{equation}
    \label{jordan block 1st real}
        J _m (\lambda) = \left[\begin{array}{cccc}
            \lambda & 1 & \cdots & 0 \\
            & \lambda & \cdots & 0 \\
            && \ddots & \vdots \\
            &&& \lambda \\
        \end{array}\right].
    \end{equation}
\end{definition}

\begin{definition}
\label{def: jordan block 2do lambda}
    Let $ m _1, \ldots, m _p, m$ be positive integers such that $ m = m _1 + \ldots + m _p $.
    A real Jordan block of the second kind $ J _{ m _1, \ldots, m _p } ( \lambda ) \in \mathbf{M}_{m} $ with real eigenvalue $\lambda$ is defined as a block matrix of the form
    \begin{equation}
    \label{jordan block 2nd real}
        J _{ m _1, \ldots, m _p } ( \lambda ) = \operatorname{diag} [
            J _{m _1} ( \lambda ),
            \ldots,
            J _{m _p} ( \lambda )
        ],
    \end{equation}
    where $J _{m _i} ( \lambda )$ are as in Definition \ref{def: jordan block 1st lambda} for all $i \in \{ 1, \ldots, p \}$.
\end{definition}

\begin{definition}
\label{def: jordan block 1st Lambda}
    A real Jordan block of the first kind $ J ^n ( \Lambda ) \in \mathbf{M}_{2 n} $ with a pair of complex conjugate eigenvalues
    \begin{equation}
        \Lambda = \left[\begin{array}{cc}
        \alpha  &   -\beta  \\
        \beta   &   \alpha
    \end{array}\right] \in \mathbf{M}_{2}
    \text{, with } 0 < \beta
    \text{,}
    \end{equation}
    is given as a block matrix of the form
    \begin{equation}
    \label{jordan block 1st complex conjugates}
        J ^n ( \Lambda ) = \left[\begin{array}{cccc}
            \Lambda & I_2 & \cdots & 0_2 \\
            & \Lambda & \cdots & 0_2 \\
            && \ddots &   \vdots  \\
            &&& \Lambda \\
        \end{array}\right].
    \end{equation}
\end{definition}

\begin{definition}
\label{def: jordan block 2do Lambda}
    Let $ n _1, \ldots, n _q , n$ be positive integers such that $ n = n _1 + \ldots + n _q $.
    A real Jordan block of the second kind $ J ^{ n _1, \ldots, n _q } ( \Lambda ) \in \mathbf{M}_{2 n} $ with a pair of complex conjugate eigenvalues
    \begin{equation}
        \Lambda = \left[\begin{array}{cc}
        \alpha  &   -\beta  \\
        \beta   &   \alpha
    \end{array}\right] \in \mathbf{M}_{2}
    \text{, with } 0 < \beta,
    \end{equation}
    is defined as a block diagonal matrix
    \begin{equation}
    \label{jordan block 2nd complex conjugates}
        J ^{ n _1, \ldots, n _q } ( \Lambda ) =
        \operatorname{diag} (
            J ^{n _1} ( \Lambda ),
            \ldots,
            J ^{n _q} ( \Lambda )        
        ),
    \end{equation}
    where $J ^{n _i} ( \Lambda ) $ are given as in Definition \ref{def: jordan block 1st Lambda} for all $i \in \{ 1, \ldots, q \}$.
\end{definition}

\begin{definition}
\label{def: jordan matrix}
    Let $i \in \{ 1, \ldots, p \}$, $k \in \{ 1, \ldots, q \}$, and 
    $m, m ^i, m ^i _1, \ldots, m ^i _{r _i}, n, n ^k, n ^k _1, \ldots, n ^k _{s _k}$ be positive integers such that $ m ^i = m ^i _1 + \ldots + m ^i _{r _i} $, $ n ^k = n ^k _1 + \ldots + n ^k _{s _k} $, $ m = m ^1 + \ldots + m ^p $ and $ n = n ^1 + \ldots + n ^q $.
    A Jordan matrix $ J \in \mathbf{M}_{m + 2 n} $ with distinct real eigenvalues $\lambda_i$ and different complex conjugate eigenvalues
    \begin{equation}
        \Lambda_k = \left[\begin{array}{cc}
        \alpha_k  &   -\beta_k  \\
        \beta_k   &   \alpha_k
    \end{array}\right] \in \mathbf{M}_{2}
    \text{, with } 0 < \beta_k
    \text{,}
    \end{equation}
    is defined as a block diagonal matrix of the form
    \begin{equation}
        J = \operatorname{diag} [
            J _{m ^1 _1, \ldots, m ^1 _{r _1}} ( \lambda _1 ),
            \ldots,
            J _{m ^p _1, \ldots, m ^p _{r _p}} ( \lambda _p ),
            J ^{n ^1 _1, \ldots, n ^1 _{s _1}} ( \Lambda _1 ),
            \ldots,
            J ^{n ^q _1, \ldots, n ^q _{s _q}} ( \Lambda _q )        
        ],
    \end{equation}
    where $ J _{m ^i _1, \ldots, m ^i _{r _i}} ( \lambda _i ) $ and $ J ^{n ^k _1, \ldots, n ^k _{s _k}} ( \Lambda _k ) $ are from Definitions \ref{def: jordan block 2do lambda} and \ref{def: jordan block 2do Lambda}, respectively.
\end{definition}

In order to determine the matrix $A_2$, we compute the centralizers of the five types of equivalence classes of $\mathfrak{sl} ( n, \mathbb{R} )$.
Some results on the centralizers of Jordan matrices were obtained from \cite{FERRER20133945}.
The authors of \cite{FERRER20133945} use the transpose of our Jordan block as their Jordan block.

\begin{theorem}
\label{theorem: centralizer jordan block 1st lambda}
    If $J _m ( \lambda )$ is the Jordan matrix defined in Definition \ref{def: jordan block 1st lambda}, then its centralizer, denoted by $\mathcal{C} ( J _m ( \lambda ) )$, is given as
    \begin{equation}
    \mathcal{C} ( J _m ( \lambda ) )
    = \left\{ A\in \mathbf{M}_{m} :\  A = \left[\begin{array}{cccc}
    
        x _1    &   x _2    &   \cdots  &   x _m    \\
        0       &   x _1    &   \cdots  &   x _{m-1}    \\
        \vdots  &   \vdots  &   \ddots  &   \vdots  \\
        0       &   0       &   \cdots  &   x _1
    \end{array}\right]
    \right\}.
\end{equation}
\end{theorem}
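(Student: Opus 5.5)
Write $J_m(\lambda) = \lambda I_m + N$, where $N$ is the nilpotent matrix with ones on the first superdiagonal and zeros elsewhere. Since $\lambda I_m$ commutes with every matrix, $A$ commutes with $J_m(\lambda)$ if and only if $AN = NA$. So $\mathcal{C}(J_m(\lambda)) = \mathcal{C}(N)$, and the problem reduces to computing the centralizer of the single nilpotent Jordan block $N$, independently of $\lambda$.

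Next I compute both products entrywise. Write $A = [a_{ij}]$ and use the conventions $a_{0,j} = 0$ and $a_{i,m+1} = 0$.

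1. $(NA)_{ij} = a_{i+1,j}$ for $i < m$, and the last row of $NA$ is zero.
2. $(AN)_{ij} = a_{i,j-1}$ for $j > 1$, and the first column of $AN$ is zero.

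Equating these gives three sets of conditions:

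1. $a_{i+1,j} = a_{i,j-1}$ for $1 \le i \le m-1$ and $2 \le j \le m$. This says $A$ is constant along each diagonal, i.e. $A$ is Toeplitz.
2. From the first column, $a_{i+1,1} = 0$ for $i < m$, so every entry of the first column below the diagonal vanishes.
3. From the last row, $a_{m,j-1} = 0$ for $j > 1$, so every entry of the last row left of the diagonal vanishes.

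Combining these: the Toeplitz condition propagates the zeros $a_{i+1,1} = 0$ along each subdiagonal, so every entry strictly below the main diagonal is zero. On and above the diagonal, the Toeplitz condition forces $a_{i,j} = a_{1,j-i+1}$. Setting $x_k = a_{1,k}$ yields exactly the upper-triangular Toeplitz form displayed in the statement.

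For the reverse inclusion, any matrix of the displayed form equals $\sum_{k=1}^m x_k N^{k-1}$, a polynomial in $N$. It therefore commutes with $N$, and hence with $J_m(\lambda)$. This gives equality of the two sets.

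As an alternative route to the first inclusion, one can argue via cyclicity. The vector $e_m$ is cyclic for $N$, because $N^{k} e_m = e_{m-k}$. If $A$ commutes with $N$, write $A e_m = \sum_k x_k e_{m-k+1} = p(N) e_m$ for the polynomial $p(t) = \sum_k x_k t^{k-1}$. Then $A N^j e_m = N^j A e_m = p(N) N^j e_m$ for all $j$, so $A = p(N)$ on a basis, and hence everywhere.

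The only delicate point is the index bookkeeping at the boundary: the first column and last row. That is exactly where the subdiagonal entries are forced to vanish, and the cyclic-vector argument sidesteps it entirely. Note also that the cited reference \cite{FERRER20133945} uses the transposed convention, so its lower-triangular result transposes to the upper-triangular form stated here.
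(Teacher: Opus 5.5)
Your proof is correct and complete. The paper gives no argument of its own for this statement. It takes the result from \cite{FERRER20133945}, noting only that that reference uses the transpose of $J_m(\lambda)$ as its Jordan block, so the result there must be transposed. You make the same remark at the end, and it is valid, since $AJ = JA$ if and only if $J^TA^T = A^TJ^T$.

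Your entrywise computation therefore replaces a citation with a self-contained argument. After reducing to $N = J_m(\lambda) - \lambda I_m$:
\begin{itemize}
\item the interior equations give the Toeplitz condition;
\item the first-column equations $a_{i+1,1} = 0$ force every subdiagonal to vanish;
\item writing the displayed matrix as $\sum_k x_k N^{k-1}$ gives the reverse inclusion.
\end{itemize}
The only cost is the index bookkeeping.

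Your cyclic-vector variant is closer to how the paper argues in the analogous complex case, Theorem~\ref{theorem: centralizer jordan block 1st Lambda}. There $X$ is split into block columns satisfying $J^n(\Lambda)X_k = X_k\Lambda + X_{k-1}$, and the system is solved backwards from the last block column via Lemma~\ref{lemma: J X = X Lambda + Y}. For a real eigenvalue this recursion is just $X_{k-1} = NX_k$, so $X = [\,N^{m-1}X_m, \ldots, NX_m, X_m\,] = p(N)$, with $p$ read off from $Xe_m = X_m$.

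That version gives slightly more. It shows directly that $\mathcal{C}(J_m(\lambda)) = \mathbb{R}[J_m(\lambda)]$. This is exactly the non-derogatory fact the paper later takes from an outside source in Section~\ref{section: commutative algebras} to identify $\mathfrak{J}_m$ as a commutative algebra of dimension $m$.
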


\begin{theorem}
\label{theorem: centralizer jordan block 2do lambda}
    Let $J _{ m _1, \ldots, m _p } ( \lambda )$ be the matrix from Definition \ref{def: jordan block 2do lambda}.
    Every matrix $X \in \mathcal{C} ( J _{ m _1, \ldots, m _p } ( \lambda ) )$ is of the form
\begin{equation}
    X = \left[\begin{array}{ccc}
        X _{1 1}    &   \cdots  &   X _{1 p}    \\
        \vdots      &   \ddots  &   \vdots  \\
        X _{p 1}    &   \cdots  &   X _{p p}
    \end{array}\right],
\end{equation}
where
\begin{itemize}
    \item $ X _{i j} \in \mathcal{C} ( J _{m _i} ( \lambda ) ) $ if $ m _i = m _j $,
    \item $ X _{i j} = \left[\begin{array}{cc}
            0 & Y _{ij}
        \end{array}\right] $ with $ Y _{i j} \in \mathcal{C} ( J _{m _i} ( \lambda ) ) $ if $ m _i < m _j $,
    \item $ X _{i j} = \left[\begin{array}{c}
            Y _{i j} \\ 0
        \end{array}\right] $ with $ Y _{i j} \in \mathcal{C} ( J _{m _j} ( \lambda ) ) $ if $ m _i > m _j $,
\end{itemize}
for all $i, j \in \{ 1, \ldots, p \}$.
\end{theorem}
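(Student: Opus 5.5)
The plan is to reduce everything to the nilpotent parts. Write $J = J_{m_1,\ldots,m_p}(\lambda) = \lambda I_m + N$, where $N = \operatorname{diag}[N_{m_1},\ldots,N_{m_p}]$ and $N_{m_i} = J_{m_i}(0)$. Since $\lambda I_m$ commutes with everything, $\mathcal{C}(J) = \mathcal{C}(N)$. Partition $X$ into blocks $X_{ij} \in \mathbf{M}_{m_i, m_j}$ compatible with the block-diagonal structure of $N$. Then $XN = NX$ holds if and only if
\begin{equation*}
N_{m_i} X_{ij} = X_{ij} N_{m_j} \quad \text{for all } i,j \in \{1,\ldots,p\},
\end{equation*}
because the $(i,j)$ block of $NX$ is $N_{m_i}X_{ij}$ and that of $XN$ is $X_{ij}N_{m_j}$. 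So the theorem reduces to solving the rectangular Sylvester-type equation $N_a Y = Y N_b$ for $Y \in \mathbf{M}_{a,b}$, one pair $(a,b) = (m_i,m_j)$ at a time.

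To solve it, I will write $Y = [y_{kl}]$ and use the fact that $N_a$ shifts rows up while right multiplication by $N_b$ shifts columns right. Concretely, $(N_a Y)_{kl} = y_{k+1,l}$, with $y_{a+1,l}=0$, and $(Y N_b)_{kl} = y_{k,l-1}$, with $y_{k,0}=0$. The equation therefore says
\begin{equation*}
y_{k+1,l} = y_{k,l-1} \quad \text{for all } k,l, \qquad \text{with the boundary conventions } y_{a+1,l} = 0 = y_{k,0}.
\end{equation*}

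From this recursion I will draw three conclusions.
\begin{itemize}
\item Entries are constant along each diagonal $l-k = \text{const}$, so $Y$ is Toeplitz.
\item Taking $k=a$ shows that every diagonal reaching the bottom row outside the last column is zero, since it ties back to $y_{a+1,\cdot}=0$.
\item Taking $l=1$ shows that every diagonal starting in the first column below the top row is zero.
\end{itemize}
Combining these, the surviving diagonals are exactly those that touch both the top row and the last column. This forces $Y$ to be upper-triangular Toeplitz, anchored at the top-right corner. Now split into the three cases. If $a=b$, this is $\mathcal{C}(J_a(\lambda))$, as in Theorem \ref{theorem: centralizer jordan block 1st lambda}. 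If $a<b$, the free region is the rightmost $a\times a$ square, which is an element of $\mathcal{C}(J_a(\lambda))$, with zeros to its left, giving $[\,0\ \ Y_{ij}\,]$. If $a>b$, the free region is the top $b\times b$ square with zeros below, giving the stacked form $\left[\begin{smallmatrix} Y_{ij}\\ 0\end{smallmatrix}\right]$. Conversely, each such form satisfies the recursion, so these are exactly the solutions.

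The main obstacle is the index bookkeeping in the rectangular cases: verifying that the zero pattern lands in the correct corner given the paper's upper-triangular convention for $J_m(\lambda)$. This is the opposite of the convention in \cite{FERRER20133945}, so one cannot simply quote that reference; a transpose is needed. I will handle this either by the direct diagonal-tracking argument above, or by taking transposes of the cited result and using $N^T = PNP$, where $P$ is the reversal permutation. I will also check the converse direction explicitly on the boundary diagonals.
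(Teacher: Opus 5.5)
Your proof is correct and essentially matches the paper's approach. The paper gives no proof of this real-eigenvalue theorem and defers to \cite{FERRER20133945}, but its proof of the complex analogue, Theorem~\ref{theorem: centralizer jordan block 2do Lambda}, has the same structure as yours: partition $X$ conformally, reduce to $J_{m_i} X_{ij} = X_{ij} J_{m_j}$, and solve the rectangular equation column by column as in Lemmas~\ref{lemma: Lambda centralizer Jm < Jn} and~\ref{lemma: Lambda centralizer Jm > Jn}. That column recursion is your entrywise relation $y_{k+1,l}=y_{k,l-1}$, and your diagonal bookkeeping (survivors are exactly the diagonals with $l-k \ge \max(0, m_j-m_i)$) places the free block correctly, giving $[\,0\ \ Y_{ij}\,]$ for $m_i<m_j$ and the stacked form for $m_i>m_j$.
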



\begin{lemma}
\label{lemma: centralizer jordan block 1st Lambda}
    Let
    \begin{equation}
        \Lambda = \left[\begin{array}{cc}
        \alpha  &   -\beta  \\
        \beta   &   \alpha
    \end{array}\right] \in \mathbf{M}_{2}.
    \end{equation}
    Then,
    \begin{equation}
    \mathcal{C} ( \Lambda ) = \left\{  A\in \mathbf{M}_{2} :
        A= \left[\begin{array}{cc}
            x   &   -y  \\
            y   &   x
        \end{array}\right] 
    \right\}.
    \end{equation}
\end{lemma}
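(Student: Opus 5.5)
We determine $\mathcal{C}(\Lambda)$ by writing out the commutation condition for an arbitrary $2\times 2$ matrix and solving the resulting linear system.

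First we remove the scalar part. Write $\Lambda = \alpha I_2 + \beta K$, where
\begin{equation*}
K = \left[\begin{array}{cc} 0 & -1 \\ 1 & 0 \end{array}\right].
\end{equation*}
Since $\alpha I_2$ commutes with every matrix, $A\Lambda = \Lambda A$ is equivalent to $\beta(AK - KA) = 0$. We assume $\beta \neq 0$, as in the definitions of the complex Jordan blocks, where $0<\beta$. (If $\beta = 0$, then $\mathcal{C}(\Lambda)=\mathbf{M}_2$ and the stated equality fails, so this assumption is necessary.) Under it, $\mathcal{C}(\Lambda) = \mathcal{C}(K)$.

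Next we compute $\mathcal{C}(K)$ directly. Let
\begin{equation*}
A = \left[\begin{array}{cc} a & b \\ c & d \end{array}\right].
\end{equation*}
Then
\begin{equation*}
AK = \left[\begin{array}{cc} b & -a \\ d & -c \end{array}\right],
\qquad
KA = \left[\begin{array}{cc} -c & -d \\ a & b \end{array}\right].
\end{equation*}
Equating entries gives $b=-c$ and $a=d$; the other two entries repeat these conditions. Setting $x=a$ and $y=c$, we get $A = \left[\begin{smallmatrix} x & -y \\ y & x\end{smallmatrix}\right]$.

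For the reverse inclusion, every matrix of this form equals $xI_2 + yK$, which is a polynomial in $K$ and therefore commutes with $K$, hence with $\Lambda$. This proves equality of the two sets.

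The only point needing care is the assumption $\beta\neq 0$. The computation itself is routine.
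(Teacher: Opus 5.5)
Your proof is correct and follows essentially the same route as the paper, which writes out $\Lambda \mathfrak{X} = \mathfrak{X}\Lambda$ entrywise for a general $2\times 2$ matrix and reads off $t=x$, $z=-y$; subtracting the scalar part $\alpha I_2$ first only simplifies that computation cosmetically. You also point out that $\beta\neq 0$ is needed (the paper leaves this implicit through the standing convention $0<\beta$) and you check the reverse inclusion; both details are correct and are omitted from the paper's one-line proof.
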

\begin{proof}
    Let $ \mathfrak{X} = \left[\begin{array}{cc} x & z \\ y & t \end{array}\right] \in \mathbf{M}_{2} $.
    The intertwining relation $ \Lambda \mathfrak{X} = \mathfrak{X} \Lambda $ implies $ t = x $ and $ z = - y $.
\end{proof}

\begin{lemma}
\label{lemma: linear eq centralizer Lambda}
    Let $\mathcal{C} ( \Lambda )$ be the centralizer of
    \begin{equation}
        \Lambda = \left[\begin{array}{cc}
        \alpha  &   -\beta  \\
        \beta   &   \alpha
    \end{array}\right] \in \mathbf{M}_2.
    \end{equation}
    If $ X \in \mathbf{M}_{2} $ and $ Y \in \mathcal{C} ( \Lambda )$, then
    \begin{equation}
        \Lambda X = X \Lambda + Y \iff X \in \mathcal{C} ( \Lambda ), Y = 0
    \text{.}
    \end{equation}
\end{lemma}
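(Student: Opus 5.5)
We prove the two implications separately. The reverse direction is immediate: if $X \in \mathcal{C} ( \Lambda )$ and $Y = 0$, then $\Lambda X = X \Lambda = X \Lambda + Y$. All the work is in the forward direction. There we assume $\Lambda X = X \Lambda + Y$ with $Y \in \mathcal{C} ( \Lambda )$ and must show that $Y = 0$; once we have that, $X \in \mathcal{C} ( \Lambda )$ follows at once.

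For the forward direction the plan is a trace argument. By Lemma \ref{lemma: centralizer jordan block 1st Lambda}, write
\begin{equation*}
    Y = \left[\begin{array}{cc} u & -v \\ v & u \end{array}\right]
    \text{,}
\end{equation*}
so that $Y = u I_2 + v E$, where $E = \left[\begin{array}{cc} 0 & -1 \\ 1 & 0 \end{array}\right]$ and $\Lambda = \alpha I_2 + \beta E$. The relation becomes $Y = [ \Lambda, X ] = \beta [ E, X ]$, since $I_2$ commutes with everything. Two facts now determine $u$ and $v$:
\begin{itemize}
    \item Taking traces gives $\operatorname{tr} Y = 2u = \beta \operatorname{tr} [ E, X ] = 0$, so $u = 0$.
    \item Multiplying by $E$ and taking traces gives $\operatorname{tr} ( E Y ) = \beta \operatorname{tr} ( E E X - E X E ) = 0$, using cyclicity of the trace. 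On the other hand, $E^2 = -I_2$ gives $\operatorname{tr} ( E Y ) = \operatorname{tr} ( u E - v I_2 ) = -2v$. Hence $v = 0$.
\end{itemize}
This is just the general fact that a commutator $[ \Lambda, X ]$ is trace-orthogonal to $\mathcal{C} ( \Lambda )$. With $Y = 0$ we get $\Lambda X = X \Lambda$, so $X \in \mathcal{C} ( \Lambda )$.

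As a check, one can instead write $X = \left[\begin{array}{cc} x & z \\ y & t \end{array}\right]$ and compute directly:
\begin{equation*}
    \Lambda X - X \Lambda = \beta \left[\begin{array}{cc} -y - z & x - t \\ x - t & y + z \end{array}\right]
    \text{.}
\end{equation*}
This matrix is symmetric and traceless. An element of $\mathcal{C} ( \Lambda )$ has equal diagonal entries and antisymmetric off-diagonal entries, so it can have that form only if it is zero.

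The only delicate point is the hypothesis on $\beta$. The argument never divides by $\beta$, so $Y = 0$ holds for every real $\beta$. The conclusion $X \in \mathcal{C} ( \Lambda )$ also holds as stated, because once $Y = 0$ we have $\Lambda X = X \Lambda$ directly. The standing assumption $\beta > 0$ from the Jordan block definitions is only needed when one wants to describe $\mathcal{C} ( \Lambda )$ explicitly, as in Lemma \ref{lemma: centralizer jordan block 1st Lambda}. With $\beta = 0$, $\Lambda$ is a scalar matrix, its centralizer is all of $\mathbf{M}_2$, and the lemma becomes trivial.
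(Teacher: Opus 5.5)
Your proof is correct, but your main argument takes a different route from the paper's, while your ``check'' is exactly the paper's proof. The paper writes $X$ and $Y$ in coordinates, with $Y$ of the form given by Lemma~\ref{lemma: centralizer jordan block 1st Lambda}. It then reads off from the entries of $\Lambda X = X\Lambda + Y$ that $Y=0$, together with $t=x$ and $z=-y$.

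Your main argument is coordinate-free. You use that $Y=[\Lambda,X]$ is trace-orthogonal to every element of $\mathcal{C}(\Lambda)$. Pairing with $I_2$ and $E$ then kills both coordinates of $Y$, because the trace form is nondegenerate on $\mathcal{C}(\Lambda)=\operatorname{span}\{I_2,E\}$. After that, $X\in\mathcal{C}(\Lambda)$ follows from $Y=0$ without recomputing the centralizer.

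Each approach has its merit. The entrywise computation is shorter and gives the shape of $X$ in the same step, but it does not show why the lemma holds. Your version isolates the mechanism: nondegeneracy of the trace form on the centralizer. That is precisely what fails for a genuine Jordan block such as $J_2(\lambda)$, where $[J_2(\lambda),X]$ can be a nonzero element of $\mathcal{C}(J_2(\lambda))$ (take $X=\operatorname{diag}[0,1]$).

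One wording point. Your main argument does use $\beta\neq 0$, through the parametrization $Y=uI_2+vE$ taken from Lemma~\ref{lemma: centralizer jordan block 1st Lambda}, so ``the argument never divides by $\beta$'' overstates things slightly. The case $\beta=0$ is covered by your separate remark that $[\Lambda,X]=0$ then forces $Y=0$ directly, and that remark is correct. The paper's extraction of $t=x$, $z=-y$ relies on $\beta\neq0$ in the same unstated way.
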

\begin{proof}
    Let $ X = \left[\begin{array}{cc} x & z \\ y & t \end{array}\right] \in \mathbf{M}_{2} $.
    Each $ Y \in \mathcal{C} (  \Lambda ) $ has the form $ Y = \left[\begin{array}{cc} \gamma & - \delta \\ \delta & \gamma \end{array}\right] \in \mathbf{M}_{2} $.
    The relation $ \Lambda X = X \Lambda + Y $ implies $ \gamma = \delta = 0 $, $ t = x $ and $ z = -y $.
    Now, if $ X \in \mathcal{C} (  \Lambda ) $ then $ \Lambda X = X \Lambda $, hence $ Y = 0 $.
\end{proof}

\begin{lemma}
\label{lemma: J X = X Lambda + Y}
    Let $ J ^n ( \Lambda ) $ be a Jordan block of the first kind due to Definition \ref{def: jordan block 1st Lambda}, and
    \begin{equation}
        X ^T = \left[\begin{array}{ccc} X_1 ^T & \cdots & X_n ^T \end{array}\right], Y ^T = \left[\begin{array}{ccc} Y_1 ^T &\cdots & Y_n ^T \end{array}\right] \in \mathbf{M}_{2, 2 n}
    \end{equation}
    where $X_i, Y_i \in \mathbf{M}_{2}$ for all $i \in \{ 1, \ldots, n \}$.
    If $ J ^n ( \Lambda ) X = X \Lambda + Y $, then $X_i \in \mathcal{C} ( \Lambda )$ for all $i \in \{ 1, \ldots, n \}$ and
    \begin{equation}
        Y ^T = \left[\begin{array}{cccc} X_2 ^T & \ldots & X_n ^T & 0_2 \end{array}\right].   
    \end{equation}
\end{lemma}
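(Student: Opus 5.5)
The plan is to write the matrix equation $J^n(\Lambda) X = X\Lambda + Y$ block-row by block-row and then run a backward induction on the block index, applying Lemma \ref{lemma: linear eq centralizer Lambda} at each step.

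\textbf{Hypothesis needed.} The statement only makes sense if, as in Lemma \ref{lemma: linear eq centralizer Lambda}, each block $Y_i$ lies in $\mathcal{C}(\Lambda)$. Without this, any $X$ together with $Y := J^n(\Lambda)X - X\Lambda$ satisfies the equation, so no conclusion about $X$ is possible. I will therefore make the hypothesis $Y_i \in \mathcal{C}(\Lambda)$ for all $i$ explicit. This is the only delicate point; once it is in place the argument is mechanical.

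\textbf{Block equations.} By Definition \ref{def: jordan block 1st Lambda}, the $i$-th block row of $J^n(\Lambda)X$ is $\Lambda X_i + X_{i+1}$ for $i<n$, and $\Lambda X_n$ for $i=n$. Comparing with the $i$-th block of $X\Lambda + Y$ gives the system
\begin{equation*}
\Lambda X_i + X_{i+1} = X_i\Lambda + Y_i \quad (1\le i<n), \qquad \Lambda X_n = X_n \Lambda + Y_n .
\end{equation*}

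\textbf{Base case $i=n$.} The last equation is exactly the situation of Lemma \ref{lemma: linear eq centralizer Lambda} with $Y_n\in\mathcal{C}(\Lambda)$. Hence $X_n\in\mathcal{C}(\Lambda)$ and $Y_n=0_2$.

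\textbf{Inductive step.} Suppose $X_{i+1}\in\mathcal{C}(\Lambda)$ is already known. Rewrite the $i$-th equation as
\begin{equation*}
\Lambda X_i = X_i\Lambda + (Y_i - X_{i+1}).
\end{equation*}
Since $\mathcal{C}(\Lambda)$ is a subspace (the first theorem of Section \ref{section: centralizer}), we have $Y_i - X_{i+1}\in\mathcal{C}(\Lambda)$. Lemma \ref{lemma: linear eq centralizer Lambda} then gives $X_i\in\mathcal{C}(\Lambda)$ and $Y_i = X_{i+1}$.

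\textbf{Conclusion.} Descending from $i=n-1$ to $i=1$ yields $X_i\in\mathcal{C}(\Lambda)$ for all $i$, together with $Y_i = X_{i+1}$ for $i<n$ and $Y_n = 0_2$. This is precisely
\begin{equation*}
Y^T = \left[\begin{array}{cccc} X_2^T & \cdots & X_n^T & 0_2\end{array}\right].
\end{equation*}
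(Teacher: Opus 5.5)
Your proof is correct and is essentially the paper's argument: split $J^n(\Lambda)X = X\Lambda + Y$ into block rows and apply Lemma~\ref{lemma: linear eq centralizer Lambda} from the last row upward. You are right that the hypothesis $Y_i \in \mathcal{C}(\Lambda)$ must be added; the paper uses it only tacitly through Lemma~\ref{lemma: linear eq centralizer Lambda}. Your backward induction also makes explicit a point the paper glosses over: $Y_i - X_{i+1} \in \mathcal{C}(\Lambda)$ is available only after $X_{i+1} \in \mathcal{C}(\Lambda)$ has been established.
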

\begin{proof}
    The intertwining relation $ J ^n ( \Lambda ) X = X \Lambda + Y $ implies the equations
    \begin{equation}
    \begin{aligned}
        \Lambda X_n
    &   = X_n \Lambda + Y_n
    \\  \Lambda X_{n-1} + X_n
    &   = X_{n-1} \Lambda + Y_{n-1}
    \\& \vdots
    \\  \Lambda X_1 + X_2
    &   = X_1 \Lambda + Y_1
    \end{aligned}
    \end{equation}
    Applying Lemma \ref{lemma: linear eq centralizer Lambda}, we get $ X_i \in \mathcal{C} ( \Lambda ) $ for all $i \in \{ 1, \ldots, n \}$ and $ X_2 = Y_1, \ldots, X_n = Y_{n-1}, Y_n = 0 $.
\end{proof}

\begin{theorem}
\label{theorem: centralizer jordan block 1st Lambda}
    The centralizer of $J ^n ( \Lambda ) $, defined in Definition \ref{def: jordan block 1st Lambda}, is
    \begin{equation}
    \mathcal{C} ( J ^n ( \Lambda ) )
    = \left\{\left[\begin{array}{cccc}
        X _1    &   X _2    &   \cdots  &   X _n    \\
        &   X _1    &   \cdots  &   X _{n-1}    \\
        &&   \ddots  &   \vdots  \\
        &&&   X _1
    \end{array}\right] \in \mathbf{M}_{2 n}
    :\   X_1, \ldots, X_n \in \mathcal{C} ( \Lambda )
    \right\}
\end{equation}
\end{theorem}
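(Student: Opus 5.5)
The plan is to partition an arbitrary $X \in \mathbf{M}_{2n}$ into $2 \times 2$ blocks $X = [X_{ij}]$ with $i, j \in \{1, \ldots, n\}$ and to write $J^n(\Lambda) = I_n \otimes \Lambda + N \otimes I_2$, where $N$ is the nilpotent shift with ones on the superdiagonal. The relation $J^n(\Lambda) X = X J^n(\Lambda)$ is then read off column of blocks by column of blocks, reducing it to repeated applications of Lemma \ref{lemma: J X = X Lambda + Y}.

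\textbf{Step 1 (first block column).} Denote by $X^{(j)} \in \mathbf{M}_{2n,2}$ the $j$-th block column of $X$. Since the first block column of $J^n(\Lambda)$ is $\Lambda$ on top and zeros below, comparing first block columns gives $J^n(\Lambda) X^{(1)} = X^{(1)} \Lambda$. Lemma \ref{lemma: J X = X Lambda + Y} with $Y = 0$ then yields $X_{i1} \in \mathcal{C}(\Lambda)$ for all $i$ and $0 = (X_{21}, \ldots, X_{n1}, 0_2)$. Hence $X_{i1} = 0$ for $i \ge 2$, and $X_{11} \in \mathcal{C}(\Lambda)$.

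\textbf{Step 2 (induction on $j$).} For $j \ge 2$ the $j$-th block column of $J^n(\Lambda)$ has $I_2$ in position $j-1$ and $\Lambda$ in position $j$. Comparing $j$-th block columns therefore gives
\begin{equation*}
J^n(\Lambda) X^{(j)} = X^{(j)} \Lambda + X^{(j-1)}.
\end{equation*}
Lemma \ref{lemma: J X = X Lambda + Y} applied with $Y = X^{(j-1)}$ gives all $X_{ij} \in \mathcal{C}(\Lambda)$ and $X_{i-1,j-1} = X_{i,j}$ for $i \ge 2$, together with $X_{n,j-1} = 0$.

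Assume inductively that $X^{(j-1)}$ has the Toeplitz upper-triangular shape, i.e.\ its top $j-1$ blocks are $X_{j-1}, \ldots, X_1$ (reading $X_{i,j-1} = X_{j-i}$) and the rest are zero. The shift relation then forces $X_{ij} = X_{i-1,j-1}$, so block $(i,j)$ equals $X_{j-i+1}$ for $i \le j$ and $0$ for $i > j$. The new top block $X_{1j}$ is unconstrained apart from lying in $\mathcal{C}(\Lambda)$, and we name it $X_j$. The step I expect to need the most care is exactly this index bookkeeping: the compatibility condition $X_{n,j-1} = 0$ must be automatically satisfied by the inductive shape, which holds since $j - 1 < n$.

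\textbf{Step 3 (converse).} For the reverse inclusion, note that any matrix of the displayed form equals $\sum_k N^{k-1} \otimes X_k$ with $X_k \in \mathcal{C}(\Lambda)$. This commutes with $I_n \otimes \Lambda$ because each $X_k$ commutes with $\Lambda$. It also commutes with $N \otimes I_2$ because powers of $N$ commute with $N$. Hence it lies in $\mathcal{C}(J^n(\Lambda))$, which proves the equality of sets.
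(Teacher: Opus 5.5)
Your argument is correct and uses the same two ingredients as the paper's proof: the block-column recurrence $J^n(\Lambda)X^{(j)} = X^{(j)}\Lambda + X^{(j-1)}$ and Lemma~\ref{lemma: J X = X Lambda + Y}. It differs in the direction of the sweep, and in your Step 3.

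The paper names the blocks of the \emph{last} block column $Y_1,\ldots,Y_n$. It first applies the lemma to $J^n(\Lambda)X^{(n)} = X^{(n)}\Lambda + X^{(n-1)}$ and then propagates backward to column $1$. You start at column $1$, where $Y=0$, and move forward.

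Your ordering is the one that makes each use of the lemma valid. The proof of Lemma~\ref{lemma: J X = X Lambda + Y} goes through Lemma~\ref{lemma: linear eq centralizer Lambda}, which needs the inhomogeneous terms to lie in $\mathcal{C}(\Lambda)$. For arbitrary $Y$ the lemma is false. Already for $n=1$, take $\Lambda = \left[\begin{smallmatrix}0&-1\\1&0\end{smallmatrix}\right]$, $X=\operatorname{diag}(1,0)$ and $Y=\Lambda X - X\Lambda = \left[\begin{smallmatrix}0&1\\1&0\end{smallmatrix}\right]$: the hypothesis holds, but $X\notin\mathcal{C}(\Lambda)$ and $Y\neq 0$. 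The paper's first application uses $Y=X^{(n-1)}$, about which nothing is known yet. In your induction, $Y$ is either $0$ or $X^{(j-1)}$, whose blocks you have already shown lie in $\mathcal{C}(\Lambda)$, so the needed hypothesis holds. You should state this explicitly rather than cite the lemma as written.

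Your Kronecker-product argument in Step 3 also gives the reverse inclusion. The paper's proof does not address that inclusion, so your write-up proves the set equality in full.
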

\begin{proof}
    Let $ i \in \{ 1, \ldots, n \} $.
    If $ X \in \mathcal{C} ( J ^n ( \Lambda ) ) $ then $ J ^n ( \Lambda ) X = X J ^n ( \Lambda ) $.
    If the matrix $X$ is partitioned as $ \left[\begin{array}{ccc} X_1 & \cdots & X_n \end{array}\right] $ with $ X_i \in \mathbf{M}_{2 n, 2} $, then
    \begin{equation}
    \begin{aligned}        
        J ^n ( \Lambda ) X_1
    &   = X_1 \Lambda
    \\  J ^n ( \Lambda ) X_2
    &   = X_2 \Lambda + X_1
    \\& \vdots
    \\  J ^n ( \Lambda ) X_n
    &   = X_n \Lambda + X_{n - 1}
    \end{aligned}
    \end{equation}
    Now, we define $X_n ^T = \left[\begin{array}{ccc} Y_1 ^T & \cdots & Y_n ^T \end{array}\right]$, where $Y_i \in \mathbf{M}_{2}$.
    Using Lemma \ref{lemma: J X = X Lambda + Y}, we find $Y_i \in \mathcal{C} (  \Lambda )$ and
    \begin{equation}
        X_{n - 1} ^T = \left[\begin{array}{cccc} Y_2 ^T & \cdots & Y_n ^T & 0_2 \end{array}\right],
        \cdots ,
        X_1^T = \left[\begin{array}{cccc} Y_n ^T & 0_2 & \cdots & 0_2 \end{array}\right]       
    \end{equation}
\end{proof}

\begin{lemma}
\label{lemma: Lambda centralizer Jm < Jn}
    Let $m$ and $n$ be two positive integers such that $m < n$.
    Let $J ^m ( \Lambda )$ and $J ^n ( \Lambda )$ be Jordan matrices defined in Definition \ref{def: jordan block 1st Lambda}.
    Let $X \in \mathbf{M}_{2m, 2n}$.
    \begin{equation}
        J ^m ( \Lambda ) X = X J ^n ( \Lambda ) \iff X = \left[\begin{array}{cc} 0 & Y \end{array}\right], Y \in \mathcal{C} ( J ^m ( \Lambda ) )
    \end{equation}
\end{lemma}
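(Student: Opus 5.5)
I will argue by partitioning $X$ into its $n$ block columns, mimicking the proof of Theorem \ref{theorem: centralizer jordan block 1st Lambda}. Write $X = \left[\begin{array}{ccc} X_1 & \cdots & X_n \end{array}\right]$ with $X_j \in \mathbf{M}_{2m, 2}$. Comparing block columns in $J ^m ( \Lambda ) X = X J ^n ( \Lambda )$ gives the chain
\begin{equation*}
    J ^m ( \Lambda ) X_1 = X_1 \Lambda, \quad
    J ^m ( \Lambda ) X_j = X_j \Lambda + X_{j-1} \quad (2 \le j \le n).
\end{equation*}
The plan is to solve this chain from the top down, using Lemma \ref{lemma: J X = X Lambda + Y} (with $m$ in place of $n$) at each step.

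First, the equation for $X_1$ is the case $Y = 0$ of Lemma \ref{lemma: J X = X Lambda + Y}. Writing $X_1^T = [Z_1^T \cdots Z_m^T]$, the lemma gives $Z_i \in \mathcal{C}(\Lambda)$ and $0 = [Z_2^T \cdots Z_m^T\ 0_2]$. Hence $X_1$ is zero except possibly in its first $2\times 2$ block. Next, the equation for $X_2$ is again of the form in Lemma \ref{lemma: J X = X Lambda + Y} with $Y = X_1$. That lemma forces $X_1$ to be the upward shift of $X_2$, and in particular forces the last block of $X_1$ to vanish.

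By induction on $j$, I will show that each $X_j$ is supported only in its first $j$ $2\times2$ blocks (counting from the top), and that these blocks lie in $\mathcal{C}(\Lambda)$. The relation $X_{j-1} = \text{shift}(X_j)$ then shows the entries propagate diagonally. Concretely, $X_n^T = [W_1^T \cdots W_m^T]$ with $W_i \in \mathcal{C}(\Lambda)$, and $X_{n-k}$ is $X_n$ shifted up by $k$ blocks. Since $X_{n-k}$ for $k \ge m$ would then be identically zero, the first $n-m$ block columns vanish.

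The main obstacle is making the consistency condition precise, namely that the shifting stays compatible with the bottom equation $X_1$. Here I will use the dimension count $m<n$. Going up from $X_n$, after $m$ shifts everything is zero. The requirement from $X_1$, that only the top block survives, is automatically satisfied because $X_1$ is the $(n-1)$-fold shift of $X_n$ and $n-1 \ge m$. In that case $X_1 = 0$, unless $n-1 = m-1$, which is excluded.

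Collecting the last $m$ block columns $X_{n-m+1}, \ldots, X_n$ then gives exactly the upper block-triangular Toeplitz form of Theorem \ref{theorem: centralizer jordan block 1st Lambda}, so $X = [0\ Y]$ with $Y \in \mathcal{C}(J^m(\Lambda))$.

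For the converse, take $X = [0\ Y]$ and compute $X J^n(\Lambda)$ directly. The last $m$ block columns of $J^n(\Lambda)$ restricted to the last $m$ block rows form $J^m(\Lambda)$, and the zero columns of $X$ kill the coupling term coming from block $n-m$. Hence $X J^n(\Lambda) = [0\ YJ^m(\Lambda)] = [0\ J^m(\Lambda)Y] = J^m(\Lambda)X$.
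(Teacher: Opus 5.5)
Your proposal is correct and follows essentially the same route as the paper: you partition $X$ into block columns, derive the chain $J^m(\Lambda)X_1 = X_1\Lambda$ and $J^m(\Lambda)X_j = X_j\Lambda + X_{j-1}$, and apply Lemma~\ref{lemma: J X = X Lambda + Y} repeatedly to obtain the shift structure that kills the first $n-m$ block columns; the converse then follows by partitioning $J^n(\Lambda)$ so that the zero columns of $X$ remove the coupling block. Your explicit induction order is what actually justifies each application of that lemma, whose proof tacitly requires the blocks of $Y$ to lie in $\mathcal{C}(\Lambda)$, a point the paper leaves implicit: you start from $X_1$ with $Y=0$ and carry forward the fact that the blocks of $X_{j-1}$ lie in $\mathcal{C}(\Lambda)$.
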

\begin{proof}
    Let $i \in \{ 1, \ldots, m \} $ and $k \in \{ 1, \ldots, n \} $.
    Any $ X \in \mathbf{M}_{2m, 2n} $ can be partitioned as $ \left[\begin{array}{ccc} X_1 & \cdots & X_n \end{array}\right] $, where $ X_k \in \mathbf{M}_{2 m, 2}$.
    Let $X_n ^T = \left[\begin{array}{ccc} Y_1^T & \cdots & Y_m^T \end{array}\right]$ with $Y_i \in \mathbf{M}_{2}$.
    If $ J ^m ( \Lambda ) X = X J ^n ( \Lambda ) $, then
    \begin{equation}
    \begin{aligned}
        J ^m ( \Lambda ) X_1
    &   = X_1 \Lambda
    \\  J ^m ( \Lambda ) X_2
    &   = X_2 \Lambda + X_1
    \\& \vdots
    \\  J ^m ( \Lambda ) X_n
    &   = X_n \Lambda + X_{n - 1}        
    \end{aligned}
    \end{equation}
    Using Lemma \ref{lemma: J X = X Lambda + Y} we find $Y_i \in \mathcal{C} ( \Lambda )$ and
    \begin{align}
        X_{n - 1} ^T = \left[\begin{array}{cccc} Y_2^T & \cdots & Y_m^T & 0_2 \end{array}\right], \cdots, X_{n - m + 1} ^T = \left[\begin{array}{cccc} Y_m^T & 0_2 & \cdots & 0_2 ]\end{array}\right]
    &&
        X_1 = \cdots = X_{n - m} = 0_{2 m, 2}
    \end{align}
    Thus, we can write $ X = \left[\begin{array}{cc} 0 & Y \end{array}\right] $ with $ Y \in \mathcal{C} ( J ^m ( \Lambda ) ) $.
    
    Now, let us partition $ J ^n ( \Lambda ) $ as 
    \begin{equation}
    \begin{aligned}
        J ^n ( \Lambda )
    &   = \left[\begin{array}{cc}
            J ^{n - m} ( \Lambda ) & E_{n - m,1}
        \\  0 & J ^m ( \Lambda )
        \end{array}\right],
    &   \text{where }
        E_{n-m,1}
    &   = \left[\begin{array}{cccc}
            0_2 &   0_2 &   \cdots  &   0_2
        \\  \vdots  &   \vdots  &   \ddots  &   \vdots
        \\  I_2 &   0_2 &   \cdots  &   0_2
        \end{array}\right].
    \end{aligned}
    \end{equation}
    If $ Y \in \mathcal{C} ( J ^m ( \Lambda ) ) $, then $J ^m ( \Lambda ) Y = Y J ^m ( \Lambda ) $, so that $ J ^m ( \Lambda ) X = X J ^n ( \Lambda ) $.
\end{proof}

\begin{lemma}
\label{lemma: Lambda centralizer Jm > Jn}
    Let $m$ and $n$ be two positive integers such that $m > n$.
    Let $J ^m ( \Lambda )$ and $J ^n ( \Lambda )$ be Jordan matrices defined in Definition \ref{def: jordan block 1st Lambda}.
    If $X \in \mathbf{M}_{2m, 2n}$ then
    \begin{equation}
        J ^m ( \Lambda ) X = X J ^n ( \Lambda ) \iff X = \left[\begin{array}{c} Y \\ 0 \end{array}\right], Y \in \mathcal{C} ( J ^n ( \Lambda ) )
    \end{equation}
\end{lemma}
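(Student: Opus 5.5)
Mirroring the proof of Lemma \ref{lemma: Lambda centralizer Jm < Jn}, I would partition $X$ by block \emph{rows} rather than block columns. The cleanest route is by transposition. Taking transposes of $J ^m ( \Lambda ) X = X J ^n ( \Lambda )$ gives $J ^n ( \Lambda )^T X^T = X^T J ^m ( \Lambda )^T$. A transposed block is not literally of the form of Definition \ref{def: jordan block 1st Lambda}, so I would conjugate by the block-reversal permutation $R_k$ (the $2k\times 2k$ matrix with $I_2$ on the block anti-diagonal) together with $D=\operatorname{diag}(1,-1)$ acting blockwise. Since $D\Lambda^T D = \Lambda$, we get $(R_k \otimes D)\, J^k(\Lambda)^T\, (R_k\otimes D) = J^k(\Lambda)$. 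Setting $\tilde X = (R_n\otimes D) X^T (R_m \otimes D) \in \mathbf{M}_{2n,2m}$, the equation becomes $J^n(\Lambda)\tilde X = \tilde X J^m(\Lambda)$ with $n<m$. Lemma \ref{lemma: Lambda centralizer Jm < Jn} then yields $\tilde X = [\,0 \ \ \tilde Y\,]$ with $\tilde Y \in \mathcal{C}(J^n(\Lambda))$.

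Undoing the transformation, the zero columns of $\tilde X$ (the first $2(m-n)$) become, after transpose and block reversal, the last $2(m-n)$ rows of $X$. The block $\tilde Y$ becomes $Y=(R_n\otimes D)\tilde Y^T(R_n\otimes D)$ sitting in the first $2n$ rows. It then remains to check that $Y\in\mathcal{C}(J^n(\Lambda))$. This follows by applying the same transpose-and-conjugate identity to $J^n(\Lambda)\tilde Y=\tilde Y J^n(\Lambda)$. Alternatively, by Theorem \ref{theorem: centralizer jordan block 1st Lambda}, $\tilde Y$ is block upper-triangular Toeplitz with blocks in $\mathcal{C}(\Lambda)$, and block reversal plus transpose preserves this shape, since $D X_i^T D = X_i$ for $X_i\in\mathcal{C}(\Lambda)$ by Lemma \ref{lemma: centralizer jordan block 1st Lambda}.

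For the converse direction, write $J^m(\Lambda)=\left[\begin{array}{cc} J^n(\Lambda) & E \\ 0 & J^{m-n}(\Lambda)\end{array}\right]$, with $E$ having $I_2$ in its bottom-left block. If $X=\left[\begin{array}{c} Y\\ 0\end{array}\right]$, then $J^m(\Lambda)X=\left[\begin{array}{c} J^n(\Lambda)Y\\ 0\end{array}\right]$, because the $E$ term multiplies the zero block. Also $XJ^n(\Lambda)=\left[\begin{array}{c} YJ^n(\Lambda)\\ 0\end{array}\right]$, so the two sides agree exactly when $Y\in\mathcal{C}(J^n(\Lambda))$.

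If the transposition bookkeeping becomes awkward, the fallback is a direct computation in the style of Lemma \ref{lemma: J X = X Lambda + Y}. Partition $X$ into $m\times n$ blocks $X_{ik}\in\mathbf{M}_2$ and read off the equations starting from the bottom-left block. The last block row gives $\Lambda X_{m1}=X_{m1}\Lambda$, and more generally $\Lambda X_{mk} + 0 = X_{mk}\Lambda + X_{m,k-1}$. Lemma \ref{lemma: linear eq centralizer Lambda} then forces $X_{m,k-1}=0$ successively, and finally the top row of that block row vanishes as well. Inducting upward on rows shows that the bottom $m-n$ block rows vanish and that the top square part satisfies the centralizer equation. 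The main obstacle is this index bookkeeping, in particular verifying the sign conjugation by $D$. Either way, the argument reduces to results already established.
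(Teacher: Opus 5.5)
Your proof is correct. It takes a different route from the paper, which states this lemma without proof and implicitly leaves it as the row-wise mirror of the direct block computation used for Lemma~\ref{lemma: Lambda centralizer Jm < Jn}.

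You instead reduce to that lemma through the anti-automorphism $X\mapsto P X^TP$, and this checks out. Take $R_k$ to be the $k\times k$ exchange matrix (not $2k\times 2k$, as you wrote) and $P_k=R_k\otimes D$. Then $P_k^2=I$, $D\Lambda^TD=\Lambda$, and $R_kN_k^TR_k=N_k$ for the nilpotent shift $N_k$, so $P_kJ^k(\Lambda)^TP_k=J^k(\Lambda)$. Since $P_m=\left[\begin{array}{cc}0&P_n\\ P_{m-n}&0\end{array}\right]$, the conclusion $\tilde X=[\,0\ \ \tilde Y\,]$ turns into $X=\left[\begin{array}{c}P_n\tilde Y^TP_n\\ 0\end{array}\right]$, exactly as you say. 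Your route avoids repeating the index bookkeeping and makes the equivalence of the two lemmas explicit. The price is that it rests entirely on Lemma~\ref{lemma: Lambda centralizer Jm < Jn}, whereas the direct computation is self-contained.

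Two small points. First, the second transpose--conjugation showing $Y\in\mathcal{C}(J^n(\Lambda))$ is superfluous. Once $X$ has the shape $\left[\begin{array}{c}Y\\0\end{array}\right]$, your converse computation already shows the equation reduces to $J^n(\Lambda)Y=YJ^n(\Lambda)$.

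Second, your fallback sketch is off in one step. The bottom block row alone gives $X_{m1}=\cdots=X_{m,n-1}=0$, but only $X_{mn}\in\mathcal{C}(\Lambda)$, not $X_{mn}=0$. The vanishing of $X_{mn}$ comes from the first block column: there $\Lambda X_{i1}-X_{i1}\Lambda=-X_{i+1,1}$ forces $X_{21}=\cdots=X_{m1}=0$, and this propagates along block diagonals by $X_{i+1,k}=X_{i,k-1}$. Your main argument does not need the fallback, so this does not affect correctness.
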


\begin{theorem}
\label{theorem: centralizer jordan block 2do Lambda}
    If $J ^{ n _1, \ldots, n _q } ( \Lambda )$ is the Jordan matrix from Definition \ref{def: jordan block 2do Lambda}, then every matrix $ X \in \mathcal{C} ( J ^{ n _1, \ldots, n _q } ( \Lambda ) ) $ is of the form
    \begin{equation}
    \label{matrix in centralizer Lambda 2do}
    X = \left[\begin{array}{ccc}
        X _{1 1}    &   \cdots  &   X _{1 q}    \\
        \vdots      &   \ddots  &   \vdots  \\
        X _{q 1}    &   \cdots  &   X _{q q}
    \end{array}\right]
    \end{equation}
where the following holds:
\begin{itemize}
    \item If $ n _i = n _j $, then $ X _{i j} \in \mathcal{C} ( J ^{n i} ( \Lambda ) ) $,
    \item If $ n _i < n _j $, then $ X _{i j} = \left[\begin{array}{cc} Y & 0 \end{array}\right] $ with $ Y \in \mathcal{C} ( J ^{n _i} ( \Lambda ) ) $,
    \item If $ n _i > n _j $, then $ X _{i j} = \left[\begin{array}{c} Y \\ 0 \end{array}\right] $ with $ Y \in \mathcal{C} ( J ^{n _j} ( \Lambda ) ) $
\end{itemize}
    for all $i , j \in \{ 1, \ldots, q \}$.
\end{theorem}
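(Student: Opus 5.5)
The plan is to reduce the commutation relation to independent intertwining relations between pairs of diagonal blocks, and then apply the single-block results already established. Partition $X \in \mathbf{M}_{2(n_1+\cdots+n_q)}$ conformally with $J^{n_1,\ldots,n_q}(\Lambda)=\operatorname{diag}(J^{n_1}(\Lambda),\ldots,J^{n_q}(\Lambda))$, so that $X_{ij}\in\mathbf{M}_{2n_i,2n_j}$. Since the Jordan matrix is block diagonal, the $(i,j)$ block of $J^{n_1,\ldots,n_q}(\Lambda)X$ is $J^{n_i}(\Lambda)X_{ij}$, and the $(i,j)$ block of $XJ^{n_1,\ldots,n_q}(\Lambda)$ is $X_{ij}J^{n_j}(\Lambda)$. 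Hence
\begin{equation*}
X\in\mathcal{C}(J^{n_1,\ldots,n_q}(\Lambda)) \iff J^{n_i}(\Lambda)X_{ij}=X_{ij}J^{n_j}(\Lambda) \text{ for all } i,j\in\{1,\ldots,q\}.
\end{equation*}
These $q^2$ equations are decoupled, so each block can be treated separately.

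There are three cases for a block $X_{ij}$.
\begin{itemize}
\item If $n_i=n_j$, the equation says exactly that $X_{ij}\in\mathcal{C}(J^{n_i}(\Lambda))$, whose form is given by Theorem \ref{theorem: centralizer jordan block 1st Lambda}.
\item If $n_i<n_j$, apply Lemma \ref{lemma: Lambda centralizer Jm < Jn} with $m=n_i$ and $n=n_j$. It yields $X_{ij}=\left[\begin{array}{cc}0&Y\end{array}\right]$ with $Y\in\mathcal{C}(J^{n_i}(\Lambda))$. The zero columns sit on the left, in agreement with the real-eigenvalue analogue, Theorem \ref{theorem: centralizer jordan block 2do lambda}, so the bracket in the statement should be read in this orientation.
\item If $n_i>n_j$, apply Lemma \ref{lemma: Lambda centralizer Jm > Jn}. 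It gives $X_{ij}=\left[\begin{array}{c}Y\\0\end{array}\right]$ with $Y\in\mathcal{C}(J^{n_j}(\Lambda))$.
\end{itemize}
Conversely, every block matrix of the stated shape satisfies each blockwise equation by the ``if'' directions of these lemmas, so $X$ commutes with the Jordan matrix.

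The main obstacle is that Lemma \ref{lemma: Lambda centralizer Jm > Jn} is stated without proof, so it has to be supplied. I would mirror the proof of Lemma \ref{lemma: Lambda centralizer Jm < Jn}, partitioning $X$ by block rows $X_1,\ldots,X_m\in\mathbf{M}_{2,2n}$ instead of block columns. The equation $J^m(\Lambda)X=XJ^n(\Lambda)$ then reads
\begin{equation*}
\Lambda X_m=X_mJ^n(\Lambda), \qquad \Lambda X_k+X_{k+1}=X_kJ^n(\Lambda) \text{ for } k<m.
\end{equation*}

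To handle this system I would use a transposed version of Lemma \ref{lemma: J X = X Lambda + Y}: a relation $\Lambda X = XJ^n(\Lambda)+Y$ forces each $2\times 2$ piece of $X$ to lie in $\mathcal{C}(\Lambda)$ and shifts those pieces into $Y$. This version follows from Lemma \ref{lemma: linear eq centralizer Lambda} by the same bottom-up recursion, now run from the left end of the row.

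Solving recursively from $X_m$ upward shows that $X_m$ has all of its pieces zero except the first. Each step up releases one more free piece, giving an upper block-triangular Toeplitz pattern in the top $n$ block rows. The remaining $m-n$ block rows are forced to vanish because the pieces would have to shift beyond the length of the row.

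Alternatively, one could conjugate by the block reversal permutation matrix combined with transposition. This maps $J^k(\Lambda)$ to $J^k(\Lambda^T)$, and $\Lambda^T$ is similar to $\Lambda$ via $\operatorname{diag}(1,-1)$, which preserves $\mathcal{C}(\Lambda)$. That would reduce the $m>n$ case directly to Lemma \ref{lemma: Lambda centralizer Jm < Jn}. I would try this conjugation route first, because the verification is shorter.
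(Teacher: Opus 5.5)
Your argument is the paper's: partition $X$ conformally, reduce to the decoupled relations $J^{n_i}(\Lambda)X_{ij}=X_{ij}J^{n_j}(\Lambda)$, and invoke Lemmas \ref{lemma: Lambda centralizer Jm < Jn} and \ref{lemma: Lambda centralizer Jm > Jn}. You are also right that the first lemma gives $X_{ij}=\left[\begin{array}{cc}0&Y\end{array}\right]$, so the $\left[\begin{array}{cc}Y&0\end{array}\right]$ in the statement (and in the paper's proof) is a typo.

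The paper leaves Lemma \ref{lemma: Lambda centralizer Jm > Jn} unproved, and your transpose/block-reversal/$\operatorname{diag}(1,-1)$ conjugation supplies it correctly. Your recursive sketch, however, needs fixing. The relation $\Lambda X_m=X_mJ^n(\Lambda)$ leaves only the \emph{last} piece of $X_m$ nonzero, not the first, and for $m>n$ even that piece is later forced to vanish. It is cleaner to partition by block columns and recurse forward from $J^m(\Lambda)X_1=X_1\Lambda$, where no back-propagated constraints arise.
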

\begin{proof}
    Let $i , j \in \{ 1, \ldots, q \}$.
    If $X$ is any matrix in $\mathcal{C} ( J ^{ n _1, \ldots, n _q } ( \Lambda ) )$, then $X$ satisfies the equality $J ^{ n _1, \ldots, n _q } ( \Lambda ) X = X J ^{ n _1, \ldots, n _q } ( \Lambda )$.
    Now, we partition the matrix $X$ conformally to $J ^{ n _1, \ldots, n _q } ( \Lambda )$, so $X$ has the form \eqref{matrix in centralizer Lambda 2do} with $X_{i j} \in \mathbf{M}_{2 n_i, 2 n_j}$.
    The intertwining relation of $X$ implies $J ^{n _i} (\Lambda) X_{i j} = X_{i j} J^{n _j} (\Lambda)$.
    If $n_i = n_j$, then $ X _{i j} \in \mathcal{C} ( J ^{n i} ( \Lambda ) ) $.
    Using Lemmas \ref{lemma: Lambda centralizer Jm < Jn} and \ref{lemma: Lambda centralizer Jm > Jn}, we find $ X _{i j} = \left[\begin{array}{cc} Y & 0 \end{array}\right] $ with $ Y \in \mathcal{C} ( J ^{n _i} ( \Lambda ) ) $ for $n_i < n_j$ and $ X _{i j} = \left[\begin{array}{c} Y \\ 0 \end{array}\right] $ with $ Y \in \mathcal{C} ( J ^{n _j} ( \Lambda ) ) $ if $ n _i > n _j $.
\end{proof}

\begin{theorem}
    Let $J \in \mathbf{M}_{m + 2 n} $ be the Jordan matrix from Definition \ref{def: jordan matrix}.
    If $X \in \mathcal{C} ( J )$, then $X$ has the form
    \begin{equation}
    \label{centralizer jordan matrix}
        X = \operatorname{diag} [ 
            Y_1,
            \ldots,
            Y_p,
            Z_1,
            \ldots,
            Z_q
        ] \in \mathbf{M}_{m + 2 n}
    \end{equation}
    where $Y_i \in \mathcal{C} ( J _{m ^i _1, \ldots, m ^i _{r _i}} ( \lambda _i ) )$ and $Z_k \in \mathcal{C} ( J ^{n ^k _1, \ldots, n ^k _{s _k}} ( \Lambda _k ) )$ for all $i \in \{ 1, \ldots, p \}$ and $k \in \{ 1, \ldots, q \}$.
\end{theorem}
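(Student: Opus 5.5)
Partition $X$ conformally with $J$, writing $J = \operatorname{diag}[B_1, \ldots, B_{p+q}]$. Here $B_i = J _{m ^i _1, \ldots, m ^i _{r _i}} ( \lambda _i )$ for $i \le p$ and $B_{p+k} = J ^{n ^k _1, \ldots, n ^k _{s _k}} ( \Lambda _k )$, and $X = [X_{ij}]$ with blocks of matching sizes. Comparing blocks in $JX = XJ$ gives the Sylvester-type equations $B_i X_{ij} = X_{ij} B_j$ for all $i,j$. The diagonal equations $i=j$ say precisely that $X_{ii} \in \mathcal{C}(B_i)$, which yields the $Y_i$ and $Z_k$ of \eqref{centralizer jordan matrix}. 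The whole proof therefore reduces to showing that every off-diagonal block vanishes. This holds because $B_i$ and $B_j$ have no common (complex) eigenvalue when $i \neq j$: the $\lambda_i$ are distinct reals, the $\Lambda_k$ correspond to distinct conjugate pairs $\alpha_k \pm i\beta_k$ with $\beta_k > 0$, and a real eigenvalue never coincides with a nonreal one.

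The key step is the standard lemma: if $B X = X C$ and $\sigma(B) \cap \sigma(C) = \emptyset$, then $X = 0$. I would prove it by induction, giving $B^k X = X C^k$ for all $k$ and hence $p(B) X = X p(C)$ for every $p \in \mathbb{R}[x]$. Take $p$ to be the characteristic polynomial of $B$. Cayley--Hamilton gives $p(B) = 0$, so $X p(C) = 0$. The eigenvalues of $p(C)$ are $p(\mu)$ for $\mu \in \sigma(C)$. These are nonzero because $p$ vanishes only on $\sigma(B)$, so $p(C)$ is invertible and $X = 0$. Working over $\mathbb{C}$ for the spectral statement is harmless, since $p$ has real coefficients and $X$ is real.

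Alternatively, to stay within the paper's elementary block-by-block style, the individual cases could be handled directly, in the manner of Lemmas \ref{lemma: linear eq centralizer Lambda} and \ref{lemma: J X = X Lambda + Y}. For two real blocks one peels off the last column, getting $(\lambda_i - \lambda_j)x = 0$ and the like. For a real block against a $\Lambda$-block, $\lambda$ is not an eigenvalue of $\Lambda$ because $\beta>0$. However, the polynomial argument covers all three cases (real--real, real--complex, complex--complex) uniformly, so I would use it.

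The main obstacle is only to verify the spectral disjointness carefully in the complex--complex case. Distinct $\Lambda_k$ with $\beta_k > 0$ give disjoint sets $\{\alpha_k \pm i\beta_k\}$, and the positivity convention $\beta_k>0$ is exactly what rules out $\Lambda_l$ being the "conjugate" of $\Lambda_k$. With that checked, $X_{ij} = 0$ for $i \neq j$, and $X$ has the stated block diagonal form.
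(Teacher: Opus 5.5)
Your proposal is correct and follows the same route as the paper: partition $X$ conformally with $J$, use Sylvester's theorem ($BX = XC$ with $\sigma(B)\cap\sigma(C)=\emptyset$ forces $X=0$) to eliminate the off-diagonal blocks, and read off the diagonal blocks as elements of the respective centralizers. The only difference is that you prove the Sylvester lemma via Cayley--Hamilton and check the spectral disjointness explicitly, including the role of $\beta_k>0$, whereas the paper simply cites the theorem.
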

\begin{proof}
    Let $i \in \{ 1, \ldots, p \}$ and $k \in \{ 1, \ldots, q \}$.
    If $X \in \mathcal{C} ( J )$, then $X$ verifies $J X = X J$.
    Since all Jordan blocks of the second kind have different eigenvalues, Sylvester's theorem on linear matrix equation \cite{horn_johnson_1985, Gantmacher59} implies that the matrix $X$ has the form given by Eq. \eqref{centralizer jordan matrix}, with $Y_i \in \mathbf{M}_{m_i}$ and $Z_k \in \mathbf{M}_{2 n_k}$.
    Thus,
    \begin{align}
        J _{m ^i _1, \ldots, m ^i _{r _i}} ( \lambda _i ) Y_i & = Y_i J _{m ^i _1, \ldots, m ^i _{r _i}} ( \lambda _i ) \\
        J ^{n ^k _1, \ldots, n ^k _{s _k}} ( \Lambda _k ) Z_k & = Z_k J ^{n ^k _1, \ldots, n ^k _{s _k}} ( \Lambda _k )
    \end{align}
    Theorems \ref{theorem: centralizer jordan block 2do lambda} and \ref{theorem: centralizer jordan block 2do Lambda} imply $Y_i \in \mathcal{C} ( J _{m ^i _1, \ldots, m ^i _{r _i}} ( \lambda _i ) )$ and $Z_k \in \mathcal{C} ( J ^{n ^k _1, \ldots, n ^k _{s _k}} ( \Lambda _k ) )$, respectively.
\end{proof}

This method can be extended to find a set $\{ A_1, \ldots, A_r \}$ of pairwise commuting matrices.
First, choose $A_1$ as the representative of some equivalence class of $\mathfrak{sl} ( n, \mathbb{R} )$.
Then, choose $A_2$ from $\mathcal{C} ( A_1 )$.
Likewise, continue successively until choosing $A_r$ from $\mathcal{C} ( A_1, \ldots, A_{r-1} )$.

\section{Commutative algebra}   %
\label{section: commutative algebras}

The method described in Section \ref{section: centralizer} is very difficult when trying to find a set of three or more pairwise commuting matrices.
Therefore, we propose an alternative technique as follows: $A_1$ continues to be a representative of some equivalence class of $\mathfrak{sl} ( n, \mathbb{R} )$.
If $A_1 \in \mathbf{M}_{n}$ is a non-derogatory matrix, then $\mathcal{C} ( A_1 ) = \mathbb{R} [ A_1 ]$ \cite{47384092-1479-315c-8f62-1c81661484d7}.
This implies that all matrices that commute with $A_1$ are polynomials in $A_1$ over $\mathbb{R}$.
$\mathbb{R} [ A_1 ]$ is a commutative algebra of dimension $n$.
The set of matrices $\{ I_n, A_1, \ldots, A_1^{n-1} \}$ is a basis in $\mathbb{R} [ A_1 ]$ \cite{47384092-1479-315c-8f62-1c81661484d7}.

When $A_1$ is a derogatory matrix, then there exists a non-derogatory matrix $A_2$ that commutes with $A_1$ \cite{Guralnick01061992}.
By Theorem 2 in Chapter 2 of \cite{47384092-1479-315c-8f62-1c81661484d7}, the algebra generated by $A_1$ and $A_2$ over $\mathbb{R}$ has dimension not greater than $n$ and is included in a commutative algebra $\mathfrak{A}$ of dimension $n$.
Since $\{ A_1, A_2 \}$ is a subset of $\mathfrak{A}$, by Lemmas \ref{lemma: subset centralizer} and \ref{lemma: centralizer of pair commuting matrices set}, we obtain $\mathfrak{A} \subset \mathcal{C} ( A_1, A_2 )$.
This means that we can find a commutative algebra of dimension $n$ contained in the centralizer of a representative of some equivalence class of $\mathfrak{sl} ( n, \mathbb{R} )$.
The new method consists of finding it, from which the matrices $A_a$ are selected.
Note that it could happen that the centralizers of two distinct equivalence classes have the same commutative algebra.

The matrices $J_m (\lambda)$ and $J^n (\Lambda)$ due to Definitions \ref{def: jordan block 1st lambda} and \ref{def: jordan block 1st Lambda}, respectively, are non-derogatory.
Their centralizers are given as $\mathfrak{J}_m$ and $\mathfrak{J}^n$, which are commutative algebras of dimension $m$ and $2n$, respectively.

\begin{definition}
\label{def: lambda commutative algebra}
    Let $m_1, \ldots, m_p, m$ be positive integers such that $m_1 + \ldots + m_p = m$.
    Define
    \begin{equation}
        \mathfrak{J} _{m_1, \ldots, m_p} = \{ \operatorname{diag}[ J_1, \ldots, J_p ] \in \mathbf{M}_{m} : J_i \in \mathfrak{J} _{m_i} \text{ for all } i \in \{ 1, \ldots, p \} \}
    \text{.}
    \end{equation}
\end{definition}

Let $m_1, \ldots, m_p, m \in \mathbb{N}$ such that $m = m_1 + \ldots + m_p$ and $\tau_1, \ldots, \tau_p \in \mathbb{R}$ be different constants.
$A_2 = \operatorname{diag} [ J_{ m_1 } ( \tau_1 ), \ldots, J_{ m_p } ( \tau_p ) ] \in \mathbf{M}_m$ is a non-derogatory matrix that commutes with $A_1 = J_{m_1, \ldots, m_p} (\lambda) \in \mathbf{M}_m$, for some $\lambda \in \mathbb{R}$.
If $A_3 \in \mathbf{M}_m$ commutes with $A_1$ and $A_2$, then $A_3 = p ( A_2 ) = \operatorname{diag} [ p ( J_{ m_1 } ( \tau_1 ) ), \ldots, p ( J_{ m_p } ( \tau_p ) ) ]$ for some $p \in \mathbb{R} [ x ]$.
Since $p ( J_{ m_i } ( \tau_i ) )$ is a matrix in $\mathfrak{J}_{ m_i }$ for all $i \in \{ 1, \ldots, p \}$, then $A_1, A_2, A_3 \in \mathfrak{J}_{m_1, \ldots, m_p}$.
It is easy to demonstrate that $\mathfrak{J} _{m_1, \ldots, m_p}$ is a commutative algebra of dimension $m$.
By Lemma \ref{lemma: subset centralizer}, $\mathfrak{J} _{m_1, \ldots, m_p}$ is the largest commutative algebra contained in $\mathcal{C} ( J_{m_1, \ldots, m_p} (\lambda) )$.

\begin{definition}
\label{def: Lambda commutative algebra}
    Let $n_1, \ldots, n_q$ be positive integers such that $n_1 + \ldots + n_q = n$.
    Define
    \begin{equation}
        \mathfrak{J} ^{n_1, \ldots, n_q} = \{ \operatorname{diag}[ J_1, \ldots, J_q ] \in \mathbf{M}_{2 n} : J_i \in \mathfrak{J} ^{n_i} \text{ for all } i \in \{ 1, \ldots, q \} \} 
    \text{.}
    \end{equation}
\end{definition}

Using a similar procedure to find $\mathfrak{J} _{m_1, \ldots, m_p}$, we obtain the largest commutative algebra $\mathfrak{J} ^{n_1, \ldots, n_q}$ of dimension $2 ( n_1 + \ldots + n_q )$ included in $\mathcal{C} ( J^{ n_1, \ldots, n_q } (\Lambda) )$.

\begin{definition}
\label{def: jordan commutative algebra}
    Let $m_1, \ldots, m_p, m, n_1, \ldots, n_q, n$ be positive integers such that $m = m_1 + \ldots + m_p$ and $n = n_1 + \ldots + n_q$.
    Define
    \begin{equation}
        \mathfrak{J} _{m _1, \ldots, m_p} ^{n _1, \ldots, n_q} = \left\{ \operatorname{diag}[ 
            J_1,
            J_2
            ] \in \mathbf{M}_{m + 2 n} :
            J_1 \in \mathfrak{J} _{ m_1, \ldots, m_p},
            J_2 \in \mathfrak{J} ^{n_1, \ldots, n_q}
        \right\}.
    \end{equation}
\end{definition}

The centralizer of any Jordan matrix, as defined in Definition \ref{def: jordan matrix}, contains $\mathfrak{J} _{m _1, \ldots, m_p} ^{n _1, \ldots, n_q}$ as the largest commutative algebra, of dimension is $m _1 + \ldots + m_p + 2 ( n _1 + \ldots + n_q )$.

\begin{lemma}
\label{orthogonality matrices S}
    Let
    \begin{equation}
    \label{def matrices S}
        S_0 = \operatorname{diag}[ I_{n - 1}, -n + 1 ]
    \text{, }
        S_1 = \operatorname{diag}[ I_{n - 2}, -n + 2, 0 ]
    \text{, } \ldots \text{, }
        S_{n - 2} = \operatorname{diag}[ 1, -1, 0_{n - 2} ] \in \mathbf{M}_n
    \text{.}
    \end{equation}
    Then, $\operatorname{tr} S_i S_j = ( n - i - 1 ) ( n - i ) \delta_{i j} $ for all $i, j \in \{ 0, \ldots, n - 2 \}$.
\end{lemma}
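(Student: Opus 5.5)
The matrices $S_i$ are diagonal, so I will reduce the trace to a sum of products of matching diagonal entries. Write $k = n - i$. Then
\[
S_i = \operatorname{diag}[ I_{k-1}, -(k-1), 0_{n-k} ],
\]
that is, the first $k-1$ diagonal entries are $1$, entry number $k$ is $-(k-1)$, and the remaining entries vanish. This matches the listed cases: $i=0$ gives $k=n$ and $\operatorname{diag}[I_{n-1},-n+1]$, while $i=n-2$ gives $k=2$ and $\operatorname{diag}[1,-1,0_{n-2}]$. Because diagonal matrices multiply entrywise, $\operatorname{tr} S_i S_j = \sum_{l=1}^n (S_i)_{ll}(S_j)_{ll}$.

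For the diagonal case $i = j$, the sum is $(k-1)\cdot 1^2 + (k-1)^2 = (k-1)k$. Since $k - 1 = n-i-1$ and $k = n-i$, this equals $(n-i-1)(n-i)$, as claimed.

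For $i \neq j$, the statement is symmetric, so assume $i < j$. Put $k = n-i$ and $k' = n-j$, so $k' < k$. The diagonal of $S_j$ is supported on indices $1,\ldots,k'$. On this range, $S_i$ has entry $1$, because $k' \le k-1$. Hence
\[
\operatorname{tr} S_i S_j = \sum_{l=1}^{k'} (S_j)_{ll} = \operatorname{tr} S_j = (k'-1) - (k'-1) = 0 ,
\]
which gives the factor $\delta_{ij}$.

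The only step needing care is the index bookkeeping, namely checking that $k' \le k-1$ places the support of $S_j$ entirely inside the block where $S_i$ equals the identity. That holds because $k' < k$ are integers. There is no real obstacle.
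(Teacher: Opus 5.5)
Your proposal is correct and takes essentially the same route as the paper. Both arguments read off $\operatorname{tr} S_i^2 = (n-i-1)(n-i)$ from the diagonal entries, and for $i<j$ both note that $S_i$ equals $1$ on the support of $S_j$, so $S_i S_j = S_j$ and the trace vanishes because $S_j$ is traceless (your reindexing $k = n-i$ just makes the paper's block partition $S_i = \operatorname{diag}[I_{n-1-j}, 1, I_{j-i-1}, -n+1+i, 0_i]$ easier to read).
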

\begin{proof}
    Let $i, j \in \{ 0, \ldots, n - 2 \}$ be such that $i < j$.
    So, $S_i^2 = \operatorname{diag}[ I_{n - 1 - i}, ( n - 1 - i )^2, 0_i ] $ implies $\operatorname{tr} S_i^2 = ( n - i - 1 ) ( n - i )$.

    On the other hand, we partition the matrices $S_i$ and $S_j$ as $S_i = \operatorname{diag}[ I_{n - 1 - j}, 1, I_{j - i - 1}, -n + 1 + i, 0_i ] $ and $S_j = \operatorname{diag}[ I_{n - 1 - j}, -n +1 + j, 0_{j - i - 1}, 0, 0_i ]$, respectively.
    Then, $S_i S_j = \operatorname{diag}[ I_{n - 1 - j}, -n +1 + j, 0_{j - i - 1}, 0, 0_i ]$, hence $\operatorname{tr} S_i S_j = 0$.
\end{proof}

Note that $\mathfrak{J}_{ 1, \ldots, 1 }$ is the set of all diagonal matrices in $\mathbf{M}_n$.
A basis for the subspace $\mathfrak{J}_{ 1, \ldots, 1 } \cap \mathfrak{sl} ( n, \mathbb{R} )$ is the set $\{ S_0, \ldots, S_{n - 2} \}$ of pairwise commuting matrices.
By Lemma \ref{orthogonality matrices S}, the set $\{ S_0, \ldots, S_i \}$ is linearly independent for some $i \in \{ 1, \ldots, n - 2 \}$.
We will use this set to build exact solutions in Section \ref{section: example}.

\section{Matrix $g_0$}  
\label{section: subspace ia}

In this section, we will determine the matrix $g_0$.

\begin{definition}
    For any non-empty set $ \mathfrak{A} \subset \mathbf{M}_n$, define
    \begin{equation}
        \mathcal{I} ( \mathfrak{A} ) = \left\{
            X \in \mathbf{Sym}_n :
            A X = X A ^T
            \text{ for all } A \in \mathfrak{A} 
        \right\} \text{.}
    \end{equation}
\end{definition}
\noindent For a singleton set $\{ A \}$, we write $\mathcal{I} ( A )$.
For a finite set $\{ A_1, \ldots, A_r \}$, we write $\mathcal{I} ( A_1, \ldots, A_r )$ or $\mathcal{I} \{ A_a \}$.

\begin{theorem}
    If $\mathfrak A$ is a non-empty subset of $\mathbf{M}_n$, then $\mathcal{I} ( \mathfrak A )$ is a subspace of $\mathbf{M}_n$.
\end{theorem}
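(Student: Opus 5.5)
The plan is the standard subspace test: show that $\mathcal{I}(\mathfrak A)$ contains the zero matrix and is closed under linear combinations. This follows the pattern of the earlier theorem stating that $\mathcal{C}(\mathfrak A)$ is a subspace of $\mathbf{M}_n$. The only difference is that membership now carries two conditions, symmetry and the intertwining relation, and both must be checked.

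First, the zero matrix lies in $\mathcal{I}(\mathfrak A)$. We have $0_n^T = 0_n$, so $0_n \in \mathbf{Sym}_n$. Also $A 0_n = 0_n = 0_n A^T$ for every $A \in \mathfrak A$. In particular $\mathcal{I}(\mathfrak A)$ is non-empty.

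Next, take $X, Y \in \mathcal{I}(\mathfrak A)$ and $\alpha, \beta \in \mathbb{R}$, and set $Z = \alpha X + \beta Y$. Symmetry is preserved because transposition is linear:
\[
Z^T = \alpha X^T + \beta Y^T = \alpha X + \beta Y = Z.
\]
The intertwining relation is preserved because matrix multiplication is bilinear. For each $A \in \mathfrak A$,
\[
A Z = \alpha A X + \beta A Y = \alpha X A^T + \beta Y A^T = Z A^T.
\]
Hence $Z \in \mathcal{I}(\mathfrak A)$, so $\mathcal{I}(\mathfrak A)$ is a subspace of $\mathbf{M}_n$, in fact contained in the subspace $\mathbf{Sym}_n$.

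An equivalent, more structural route is to write
\[
\mathcal{I}(\mathfrak A) = \mathbf{Sym}_n \cap \bigcap_{A \in \mathfrak A} \ker L_A, \qquad L_A(X) = AX - XA^T.
\]
Each $L_A$ is a linear map on $\mathbf{M}_n$, so each kernel is a subspace. $\mathbf{Sym}_n$ is a subspace, and any intersection of subspaces is a subspace. This version extends to infinite $\mathfrak A$ with no extra work, just as Theorem \ref{theo: centralizer union sets} treats $\mathcal{C}$ as an intersection.

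There is no real obstacle. The only points needing care are to check symmetry as well as the intertwining relation, and to note that the argument never uses finiteness of $\mathfrak A$, so it applies to arbitrary non-empty subsets as the statement requires.
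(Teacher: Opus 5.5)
Your proof is correct and follows the paper's argument. You show that $\alpha X + \beta Y$ satisfies both the intertwining relation and symmetry; beyond that, you explicitly check that $0_n \in \mathcal{I}(\mathfrak A)$, which the paper leaves implicit, and your kernel-intersection reformulation is an equally valid restatement.
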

\begin{proof}
    Let $X$ and $Y$ be any two matrices in $\mathcal{I} ( \mathfrak A )$ and let $\alpha, \beta \in \mathbb{R}$.
    Then, these matrices verify $A X = X A^T$ and $A Y = Y A^T$, so that $A ( \alpha X + \beta Y ) = ( \alpha X + \beta Y ) A^T$ for all $A \in \mathfrak A$.
    Since $X$ and $Y$ are symmetric matrices, $\alpha X + \beta Y$ is also a symmetric matrix.
    Therefore, $\alpha X + \beta Y \in \mathcal{I} ( \mathfrak A )$.
\end{proof}

\begin{lemma}
\label{lemma: subset ia}
    Let $\mathfrak A, \mathfrak B$ be two non-empty subsets of $\mathbf{M}_{n}$.
    If $\mathfrak{A} \subset \mathfrak{B}$, then $\mathcal{I} ( \mathfrak{B} ) \subset \mathcal{I} ( \mathfrak{A} )$.
\end{lemma}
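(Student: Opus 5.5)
The argument is a direct unwinding of the definition of $\mathcal{I}$, in the same way Lemma \ref{lemma: subset centralizer} follows from the definition of the centralizer. No earlier structural results are needed.

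Take an arbitrary $X \in \mathcal{I} ( \mathfrak{B} )$; we must show $X \in \mathcal{I} ( \mathfrak{A} )$. By definition, $X \in \mathbf{Sym}_n$, and $B X = X B^T$ holds for every $B \in \mathfrak{B}$.

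Now let $A \in \mathfrak{A}$ be arbitrary. Since $\mathfrak{A} \subset \mathfrak{B}$, the matrix $A$ also lies in $\mathfrak{B}$, and so $A X = X A^T$. Because $A$ was arbitrary, this intertwining relation holds for all $A \in \mathfrak{A}$. Together with $X \in \mathbf{Sym}_n$, this gives $X \in \mathcal{I} ( \mathfrak{A} )$. Since $X$ was arbitrary, $\mathcal{I} ( \mathfrak{B} ) \subset \mathcal{I} ( \mathfrak{A} )$.

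There is no real obstacle here. The only point to check is that the symmetry condition does not depend on the indexing set, so it carries over unchanged from $\mathcal{I} ( \mathfrak{B} )$ to $\mathcal{I} ( \mathfrak{A} )$. The non-emptiness hypotheses only ensure that both sets are defined as in the definition.
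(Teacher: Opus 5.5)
Your proof is correct and is essentially the paper's argument: take $X \in \mathcal{I}(\mathfrak{B})$, restrict the relations $BX = XB^T$ to those $B$ lying in $\mathfrak{A} \subset \mathfrak{B}$, and conclude $X \in \mathcal{I}(\mathfrak{A})$. You also state explicitly that the condition $X \in \mathbf{Sym}_n$ carries over unchanged, which the paper leaves implicit.
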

\begin{proof}
    If $X \in \mathcal{I} ( \mathfrak{B} )$, then $B X = X B^T$ for all $B \in \mathfrak{B}$.
    Since $\mathfrak{A} \subset \mathfrak{B}$, $X$ satisfies $A X = X A^T$ for all $A \in \mathfrak{A}$, so that $X \in \mathcal{I} ( \mathfrak{A} )$.
    Therefore, $\mathcal{I} ( \mathfrak{B} ) \subset \mathcal{I} ( \mathfrak{A} )$.
\end{proof}

\begin{theorem}
\label{theorem: subspace ia union}
    If $\mathfrak A, \mathfrak B$ are non-empty subsets of $\mathbf{M}_{n}$ then
    \begin{equation}
        \mathcal{I} ( \mathfrak{A} \cup \mathfrak{B} ) = \mathcal{I} ( \mathfrak{A} ) \cap \mathcal{I} ( \mathfrak{B} )
    \text{.}
    \end{equation}
\end{theorem}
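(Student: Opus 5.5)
The plan is a double inclusion argument, modeled on the proof of Lemma \ref{lemma: subset ia}. It uses only the definition of $\mathcal{I}$, so no results from the centralizer sections are needed.

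\emph{Inclusion $\subset$.} Since $\mathfrak{A} \subset \mathfrak{A} \cup \mathfrak{B}$ and $\mathfrak{B} \subset \mathfrak{A} \cup \mathfrak{B}$, Lemma \ref{lemma: subset ia} gives $\mathcal{I} ( \mathfrak{A} \cup \mathfrak{B} ) \subset \mathcal{I} ( \mathfrak{A} )$ and $\mathcal{I} ( \mathfrak{A} \cup \mathfrak{B} ) \subset \mathcal{I} ( \mathfrak{B} )$. Hence $\mathcal{I} ( \mathfrak{A} \cup \mathfrak{B} )$ lies in the intersection $\mathcal{I} ( \mathfrak{A} ) \cap \mathcal{I} ( \mathfrak{B} )$.

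\emph{Inclusion $\supset$.} Take $X \in \mathcal{I} ( \mathfrak{A} ) \cap \mathcal{I} ( \mathfrak{B} )$. Then $X \in \mathbf{Sym}_n$, $A X = X A^T$ for every $A \in \mathfrak{A}$, and $B X = X B^T$ for every $B \in \mathfrak{B}$. Any $C \in \mathfrak{A} \cup \mathfrak{B}$ belongs to $\mathfrak{A}$ or to $\mathfrak{B}$, so in either case $C X = X C^T$. Since $X$ is symmetric, this gives $X \in \mathcal{I} ( \mathfrak{A} \cup \mathfrak{B} )$.

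The only point requiring any care is that symmetry of $X$ is part of the definition of $\mathcal{I}$. Symmetry has to be carried along in both directions, and it is, since it is common to both sides. No real obstacle is expected: the statement is the direct analogue of Theorem \ref{theo: centralizer union sets} for centralizers, and the proof is pure set logic.
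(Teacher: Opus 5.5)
Your proof is correct and follows the paper's argument essentially verbatim: the inclusion $\mathcal{I}(\mathfrak{A}\cup\mathfrak{B}) \subset \mathcal{I}(\mathfrak{A})\cap\mathcal{I}(\mathfrak{B})$ comes from the monotonicity lemma, and the reverse inclusion from checking $CX = XC^T$ for each $C$ in the union. Your explicit note that the symmetry of $X$ carries over is a detail the paper leaves implicit.
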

\begin{proof}
    Since $\mathfrak A$ and $\mathfrak B$ are included in $\mathfrak{A} \cup \mathfrak{B}$, Lemma \ref{lemma: subset ia} implies that $\mathcal{I} ( \mathfrak{A} \cup \mathfrak{B} )$ is contained in $\mathcal{I} ( \mathfrak{A} )$ and $\mathcal{I} ( \mathfrak{B} )$.
    Thus, $\mathcal{I} ( \mathfrak{A} \cup \mathfrak{B} ) \subset \mathcal{I} ( \mathfrak{A} ) \cap \mathcal{I} ( \mathfrak{B} )$.
    Now, if $X \in \mathcal{I} ( \mathfrak{A} ) \cap \mathcal{I} ( \mathfrak{B} )$, then $X \in \mathcal{I} ( \mathfrak{A} )$ and $X \in \mathcal{I} ( \mathfrak{B} )$, so that $A X = X A^T$ for all $A \in \mathfrak{A}$ and $B X = X B^T$ for all $B \in \mathfrak{B}$.
    The intertwining relations can be written as $C X = X C^T$ for all $C \in \mathfrak{A} \cup \mathfrak{B}$, then $X \in \mathcal{I} ( \mathfrak{A} \cup \mathfrak{B} )$. 
    Consequently, $\mathcal{I} ( \mathfrak{A} ) \cap \mathcal{I} ( \mathfrak{B} ) \subset \mathcal{I} ( \mathfrak{A} \cup \mathfrak{B} )$.
\end{proof}

\begin{lemma}
\label{lemma: ia non-derogatory matrix}
    Each matrix $A$ in $\mathbf{M}_{n}$ satisfies $\mathcal{I} ( A ) = \mathcal{I} ( \mathbb{R} [A] )$.
\end{lemma}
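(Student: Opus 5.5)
We prove the two inclusions separately. Since $A = A^1 \in \mathbb{R}[A]$, we have $\{ A \} \subset \mathbb{R}[A]$, and Lemma \ref{lemma: subset ia} immediately gives $\mathcal{I} ( \mathbb{R}[A] ) \subset \mathcal{I} ( A )$. The content of the statement is therefore the reverse inclusion: every symmetric $X$ with $A X = X A^T$ also intertwines every polynomial in $A$.

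For the reverse inclusion, fix $X \in \mathcal{I} ( A )$. First we show by induction on $k \geq 0$ that
\begin{equation*}
A^k X = X ( A^T )^k = X ( A^k )^T .
\end{equation*}
The case $k = 0$ is trivial, since $I_n X = X I_n$. For the inductive step, assume $A^k X = X ( A^T )^k$. Then
\begin{equation*}
A^{k+1} X = A ( A^k X ) = A X ( A^T )^k = X A^T ( A^T )^k = X ( A^T )^{k+1} ,
\end{equation*}
using the hypothesis $A X = X A^T$ in the third step. Finally, $( A^T )^{k+1} = ( A^{k+1} )^T$.

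Next, let $p ( x ) = \sum_k c_k x^k \in \mathbb{R}[x]$ be arbitrary. Summing the identities above with coefficients $c_k$ gives
\begin{equation*}
p ( A ) X = \sum_k c_k A^k X = \sum_k c_k X ( A^k )^T = X \Bigl( \sum_k c_k A^k \Bigr)^T = X p ( A )^T ,
\end{equation*}
because transposition is linear. Since $X$ is symmetric by assumption, $X \in \mathcal{I} ( p ( A ) )$ for every $p$. Hence $X \in \mathcal{I} ( \mathbb{R}[A] )$, and so $\mathcal{I} ( A ) \subset \mathcal{I} ( \mathbb{R}[A] )$.

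There is no serious obstacle. The only point needing care is that the constant term of $p$ is $c_0 I_n$, which is handled by the $k = 0$ case. Note also that the lemma does not require $A$ to be non-derogatory, even though the lemma's label suggests it will be applied in that situation.
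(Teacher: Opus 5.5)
Your proof is correct and follows the paper's argument: the inclusion $\mathcal{I}(\mathbb{R}[A]) \subset \mathcal{I}(A)$ comes from Lemma \ref{lemma: subset ia}, and the reverse inclusion uses the fact that $AX = XA^T$ implies $p(A)X = X\,p(A)^T$ for every polynomial $p$. You spell out the induction on powers of $A$ and the linearity step, which the paper simply asserts.
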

\begin{proof}
    If $X \in \mathcal{I} ( A )$, then $A X = X A^T$, so that $p( A ) X = X ( p ( A ) )^T$ for any polynomial $p$ in $A$ over $\mathbb{R}$.
    Therefore, $X \in \mathcal{I} ( \mathbb{R} [A] )$, which implies $\mathcal{I} ( A ) \subset \mathcal{I} ( \mathbb{R} [A] )$.
    On the other hand, since $A\in \mathbb{R} [A]$, Lemma \ref{lemma: subset ia} provides $\mathcal{I} ( \mathbb{R} [A] ) \subset \mathcal{I} ( A )$.
\end{proof}

According to Lemma \ref{lemma: solution diff eq g}, $g_0$ satisfies the intertwining relations $A_a g_0 = g_0 A_a^T$ for all $a$.
Then, $g_0$ belongs to the sets $\mathcal{I} ( A_a )$, so that $g_0$ is a matrix in $\mathcal{I} \{ A_a \}$ as a consequence of Theorem \ref{theorem: subspace ia union}.

When $A_1$ is a non-derogatory matrix, the commutative algebra is $\mathfrak{A} = \mathbb{R}[A_1]$.
By Lemma \ref{lemma: ia non-derogatory matrix}, we obtain $\mathcal{I} (\mathfrak{A} ) = \mathcal{I} ( A_1 )$.
Hence, $\mathcal{I} ( \mathfrak{J} _m ) = \mathcal{I} ( J_m(\lambda) )$ and $\mathcal{I} ( \mathfrak{J} ^n ) = \mathcal{I} ( J ^n(\Lambda) )$ for the eigenvalues $\lambda$ and $\Lambda$.
In \cite{Sarmiento-Alvarado2023}, the authors compute the sets $\mathcal{I} ( A )$ for the five types of equivalence classes of $\mathfrak{sl} ( n, \mathbb{R} )$.


\begin{theorem}
\label{theo: ia jordan block 1st lambda}
    Let $J _m ( \lambda )$ be a Jordan block due to Definition \ref{def: jordan block 1st lambda}. Then
    \begin{equation}
        \mathcal{I} ( J _m ( \lambda ) ) =\left\{ A\in \mathbf{M}_{m} : A= \left[\begin{array}{cccc}
                x_1 & \cdots & x_{m-1} & x_m \\
                x_2 & \cdots & x_m    &\\
                \vdots & \iddots &&\\
                x _m &&&
            \end{array}\right] 
        \right\}
    \end{equation}
\end{theorem}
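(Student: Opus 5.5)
We must show that $\mathcal{I}(J_m(\lambda))$ consists exactly of the upper-left triangular Hankel matrices, i.e. $X$ with $X_{ij} = x_{i+j-1}$ when $i+j \le m+1$ and $X_{ij} = 0$ when $i+j > m+1$. The plan is to peel off the scalar part and reduce the intertwining relation to entrywise recurrences.

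Write $J_m(\lambda) = \lambda I_m + N$, where $N$ is the nilpotent matrix with ones on the superdiagonal. Since $\lambda X = X \lambda$, the relation $J_m(\lambda) X = X J_m(\lambda)^T$ is equivalent to $N X = X N^T$. So $\lambda$ plays no role, and it suffices to describe the symmetric solutions of $N X = X N^T$.

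Next, compute both sides entrywise, with the convention $X_{ij} = 0$ whenever an index equals $m+1$:
\[
(NX)_{ij} = X_{i+1,j}, \qquad (XN^T)_{ij} = X_{i,j+1}.
\]
The relation therefore becomes $X_{i+1,j} = X_{i,j+1}$ for all $i,j$, so $X$ is constant along each anti-diagonal $i+j = \text{const}$ (a Hankel matrix). The boundary cases force the lower part to vanish:
\begin{itemize}
\item taking $i = m$ gives $X_{m,j+1} = 0$ for all $j$, so the last row vanishes except $X_{m1}$;
\item taking $j = m$ gives $X_{i+1,m} = 0$, so the last column vanishes except $X_{1m}$.
\end{itemize}
Propagating these zeros along anti-diagonals shows that every anti-diagonal with $i+j > m+1$ is zero. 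Setting $x_k := X_{1k}$ then yields exactly the displayed form.

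Symmetry needs no separate work, since any Hankel matrix is automatically symmetric, which is consistent with the requirement $X \in \mathbf{Sym}_m$.

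For the converse, take a matrix of the displayed form. It satisfies $X_{i+1,j} = X_{i,j+1}$, including at the boundary, because entries past the main anti-diagonal are zero. Hence $NX = XN^T$, so $J_m(\lambda) X = X J_m(\lambda)^T$, and the matrix lies in $\mathcal{I}(J_m(\lambda))$.

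The only delicate step is the boundary bookkeeping: showing that the zero last row and column force all anti-diagonals below the main one to vanish. This is handled by descending induction on $i+j$ from $2m$ down to $m+2$. Alternatively, one can note that $X = Y E$, where $E$ is the exchange (anti-identity) matrix, and $E N^T E = N$. The relation then becomes $NY = YN$, so by Theorem \ref{theorem: centralizer jordan block 1st lambda} the matrix $Y$ is upper triangular Toeplitz, and multiplying by $E$ gives the stated form directly.
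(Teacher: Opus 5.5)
Your proof is correct and complete. The paper gives no argument of its own for this statement. It is quoted, with the other $\mathcal{I}$-theorems, from \cite{Sarmiento-Alvarado2023}, so there is nothing in the text to match against.

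Your entrywise route drops $\lambda$ and reads $NX = XN^{T}$ as $X_{i+1,j} = X_{i,j+1}$ with zero boundary values. This is the scalar analogue of how the paper solves intertwining relations elsewhere, column by column starting from the last one (e.g.\ in the proof of Theorem~\ref{theorem: centralizer jordan block 1st Lambda}). The boundary bookkeeping needs no induction. Each anti-diagonal $i+j=s$ with $s\ge m+2$ meets the last row at column $s-m\ge 2$, where you showed the entry vanishes, and the Hankel condition then kills the whole anti-diagonal.

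Your exchange-matrix alternative is the more structural route and worth preferring. Since $E\,J_m(\lambda)^{T}E = J_m(\lambda)$ and $E^2=I_m$, it gives $\mathcal{I}(J_m(\lambda)) = \mathcal{C}(J_m(\lambda))\,E$. Every element of the right-hand side is automatically symmetric, so the statement becomes a corollary of Theorem~\ref{theorem: centralizer jordan block 1st lambda}.

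The same conjugation explains why the paper's $\mathcal{I}$-theorems mirror its centralizer theorems:
\begin{itemize}
\item With $\operatorname{diag}(E_{m_1},\ldots,E_{m_p})$, a block $[\,0\;\;Z\,]$ becomes $[\,ZE_{m_i}\;\;0\,]$. This is exactly the passage from Theorem~\ref{theorem: centralizer jordan block 2do lambda} to Theorem~\ref{theo: ia jordan block 2nd lambda}. There, symmetry ($X_{ji}=X_{ij}^{T}$) is a genuine extra condition.
\item With the twisted block exchange $E_n\otimes\operatorname{diag}(1,-1)$, which conjugates $J^n(\Lambda)^{T}$ to $J^n(\Lambda)$, one obtains Theorem~\ref{theo: ia jordan block 1st Lambda} from Theorem~\ref{theorem: centralizer jordan block 1st Lambda}. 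The $\mathcal{C}(\Lambda)$ blocks become exactly the $\mathcal{I}(\Lambda)$ blocks.
\end{itemize}
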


\begin{theorem}
\label{theo: ia jordan block 2nd lambda}
    Let $J _{ m_1, \ldots, m_p } ( \lambda )$ be a Jordan block due to Definition \ref{def: jordan block 2do lambda}.
    Every matrix $X \in \mathcal{I}( J _{ m_1, \ldots, m_p } ( \lambda ) )$ is a block matrix of the form
    \begin{equation}
        X = \left[\begin{array}{ccc}
            X _{1 1}    &   \cdots  &   X _{1 p}    \\
            \vdots      &   \ddots  &   \vdots  \\
            X _{p 1}    &   \cdots  &   X _{p p}
        \end{array}\right]
    \end{equation}
    where for each $i, j \in \{ 1, \ldots, p \}$, $ X _{i j} \in \mathbf{M}_{m _i, m _j} $ satisfies $ X ^T _{i j}  = X _{j i}  $ and is of the following form:
    \begin{enumerate}
        \item   If $m _i = m _j$, then $X _{ij} \in \mathcal{I}( J _{m _i} (\lambda) )$.
        \item   If $m _i < m _j$, then $ X _{ij} = \left[\begin{array}{cc} Y _{ij} & 0 \end{array}\right] $ with $Y _{ij} \in \mathcal{I}( J _{m _i} (\lambda) )$.
        \item   If $m _i > m _j$, then $ X _{ij} = \left[\begin{array}{c} Y _{ij} \\ 0 \end{array}\right] $ with $Y _{ij} \in \mathcal{I}( J _{m _j} (\lambda) )$.
    \end{enumerate}
\end{theorem}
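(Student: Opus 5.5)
We prove the statement by reducing it to the block relations $J_{m_i}(\lambda)X_{ij}=X_{ij}J_{m_j}(\lambda)^T$ and solving these with the same lemmas used for the centralizer. The idea is to turn the transpose-intertwining problem into an ordinary centralizer problem using the reversal matrix. Let $R_m\in\mathbf{M}_m$ be the exchange matrix, with ones on the antidiagonal. Then $R_m^2=I_m$ and $R_m J_m(\lambda)^T R_m = J_m(\lambda)$. A matrix $X_{ij}\in\mathbf{M}_{m_i,m_j}$ satisfies $J_{m_i}(\lambda)X_{ij}=X_{ij}J_{m_j}(\lambda)^T$ if and only if $W_{ij}=X_{ij}R_{m_j}$ satisfies $J_{m_i}(\lambda)W_{ij}=W_{ij}J_{m_j}(\lambda)$.

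First, partition $X\in\mathcal{I}(J_{m_1,\ldots,m_p}(\lambda))$ conformally with the block structure of $J_{m_1,\ldots,m_p}(\lambda)$ from Definition \ref{def: jordan block 2do lambda}. Since $J_{m_1,\ldots,m_p}(\lambda)$ is block diagonal, $J_{m_1,\ldots,m_p}(\lambda)^T=\operatorname{diag}[J_{m_1}(\lambda)^T,\ldots,J_{m_p}(\lambda)^T]$. The single relation $JX=XJ^T$ therefore splits blockwise into the relations $J_{m_i}(\lambda)X_{ij}=X_{ij}J_{m_j}(\lambda)^T$ for all $i,j$. The symmetry $X=X^T$ gives exactly $X_{ij}^T=X_{ji}$, so this condition is recorded separately.

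Next, solve each block relation via $W_{ij}$. The $\lambda$-analogues of Lemmas \ref{lemma: Lambda centralizer Jm < Jn} and \ref{lemma: Lambda centralizer Jm > Jn} are the ingredients behind Theorem \ref{theorem: centralizer jordan block 2do lambda}. They are proved the same way, using the scalar version of Lemma \ref{lemma: J X = X Lambda + Y} (column recursion $J_{m_i}(\lambda)w_k=\lambda w_k+w_{k-1}$). They give three cases.
\begin{itemize}
\item If $m_i=m_j$, then $W_{ij}\in\mathcal{C}(J_{m_i}(\lambda))$.
\item If $m_i<m_j$, then $W_{ij}=[\,0\ \ C\,]$ with $C\in\mathcal{C}(J_{m_i}(\lambda))$.
\item If $m_i>m_j$, then $W_{ij}=\left[\begin{array}{c}C\\0\end{array}\right]$ with $C\in\mathcal{C}(J_{m_j}(\lambda))$.
\end{itemize}
Then transform back with $X_{ij}=W_{ij}R_{m_j}$.
\begin{itemize}
\item If $m_i=m_j$: an upper triangular Toeplitz matrix times $R$ is a Hankel matrix, zero below the antidiagonal. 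This is exactly the form described in Theorem \ref{theo: ia jordan block 1st lambda}, so $X_{ij}\in\mathcal{I}(J_{m_i}(\lambda))$. One can also apply Theorem \ref{theo: ia jordan block 1st lambda} directly here.
\item If $m_i<m_j$: write $R_{m_j}=\left[\begin{array}{cc}0&R_{m_i}\\ R_{m_j-m_i}&0\end{array}\right]$. Then $[\,0\ \ C\,]R_{m_j}=[\,CR_{m_i}\ \ 0\,]$, which has the required form $[\,Y_{ij}\ \ 0\,]$ with $Y_{ij}=CR_{m_i}\in\mathcal{I}(J_{m_i}(\lambda))$.
\item If $m_i>m_j$: the result is immediate, since $\left[\begin{array}{c}C\\0\end{array}\right]R_{m_j}=\left[\begin{array}{c}CR_{m_j}\\0\end{array}\right]$. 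Alternatively, it follows from the previous case by transposing and using $X_{ij}^T=X_{ji}$.
\end{itemize}

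Finally, note that both directions hold, because each step is an equivalence. This gives the stated description of $\mathcal{I}(J_{m_1,\ldots,m_p}(\lambda))$ together with the compatibility condition $X_{ij}^T=X_{ji}$.

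I expect the main obstacle to be the bookkeeping in the case $m_i<m_j$. One must check that multiplying on the right by $R_{m_j}$ moves the zero columns from the left to the right, and that $CR_{m_i}$ is indeed of the Hankel form in Theorem \ref{theo: ia jordan block 1st lambda}. If that verification becomes messy, I would instead solve $J_{m_i}(\lambda)X=XJ_{m_j}(\lambda)^T$ directly entrywise. The nilpotent part $N$ gives the equation $N X = X N^T$, which forces $x_{k+1,l}=x_{k,l+1}$, so the entries are constant along antidiagonals. Boundary rows and columns then force the entries beyond the $m_i$-th antidiagonal to vanish.
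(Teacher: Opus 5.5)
Your proof is correct. Note that the paper does not prove this theorem itself: it imports the description of $\mathcal{I}(A)$ from \cite{Sarmiento-Alvarado2023}. The natural comparison is therefore with the template the paper uses for the analogous centralizer result, Theorem \ref{theorem: centralizer jordan block 2do Lambda}. There one partitions conformally, reduces to the rectangular relations between diagonal blocks, and solves those by the column recursion of Lemmas \ref{lemma: J X = X Lambda + Y}--\ref{lemma: Lambda centralizer Jm > Jn}.

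You share the first step. Where you differ is that you do not redo a recursion for the transposed relation $J_{m_i}(\lambda)X_{ij}=X_{ij}J_{m_j}(\lambda)^T$. Instead, the exchange matrix turns it into $J_{m_i}(\lambda)W_{ij}=W_{ij}J_{m_j}(\lambda)$, which Theorem \ref{theorem: centralizer jordan block 2do lambda} already solves: any such $W$ is the off-diagonal block of an element of $\mathcal{C}(\operatorname{diag}[J_{m_i}(\lambda),J_{m_j}(\lambda)])$. This gives you two things:
\begin{itemize}
\item The Hankel shape of Theorem \ref{theo: ia jordan block 1st lambda} comes for free, as ``upper triangular Toeplitz times $R$''.
\item It makes transparent that $\mathcal{I}(J_{m_1,\ldots,m_p}(\lambda))=\mathcal{C}(J_{m_1,\ldots,m_p}(\lambda))\operatorname{diag}[R_{m_1},\ldots,R_{m_p}]\cap\mathbf{Sym}_m$. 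Symmetry only adds the coupling $X_{ij}^T=X_{ji}$.
\end{itemize}
Your fallback, the direct entrywise recursion, is more elementary but duplicates work already done for $\mathcal{C}$.

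Two small fixes:
\begin{itemize}
\item In the case $m_i<m_j$, the partition of $R_{m_j}$ that is conformal with $[\,0\ \ C\,]$ (column blocks of sizes $m_j-m_i$, $m_i$) is $R_{m_j}=\left[\begin{array}{cc}0&R_{m_j-m_i}\\ R_{m_i}&0\end{array}\right]$. The partition you wrote is a valid description of $R_{m_j}$, but its row blocks have sizes $m_i$, $m_j-m_i$, so it does not conform with $[\,0\ \ C\,]$ unless $m_j=2m_i$. With the conformal partition the product is indeed $[\,CR_{m_i}\ \ 0\,]$, as you claim.
\item The remark that one can ``apply Theorem \ref{theo: ia jordan block 1st lambda} directly'' to an off-diagonal block with $m_i=m_j$, $i\neq j$, is not literally justified. $\mathcal{I}(\cdot)$ is defined inside $\mathbf{Sym}$, and $X_{ij}$ is not known to be symmetric a priori. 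Your $W$-route proves that it is (it is Hankel), so keep that argument rather than the shortcut.
\end{itemize}
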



\begin{theorem}
\label{theo: ia jordan block 1st Lambda}
    Let $J ^n ( \Lambda )$ be a Jordan block due to Definition \ref{def: jordan block 1st Lambda}. Then
    \begin{equation}
        \mathcal{I} ( J ^n ( \Lambda ) ) = \left\{
            \left[\begin{array}{cccc}
                X_1 & \cdots & X_{m-1} & X_m \\
                X_2 & \cdots & X_m    &\\
                \vdots & \iddots &&\\
                X_m &&&
            \end{array}\right] \in \mathbf{M}_{2 n}
            :\  X _1, \ldots, X _n \in \mathcal{I} ( \Lambda )
        \right\},
    \end{equation}
    where
    \begin{equation}
        \mathcal{I} ( \Lambda ) = \left[\begin{array}{cc}
            a & b  \\
            b & -a 
        \end{array}\right] \in \mathbf{M}_2.
    \end{equation}
\end{theorem}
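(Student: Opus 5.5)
The plan is to mimic the proof of Theorem \ref{theorem: centralizer jordan block 1st Lambda}, replacing the equation $J^n(\Lambda)X = XJ^n(\Lambda)$ by the intertwining relation $J^n(\Lambda)X = X J^n(\Lambda)^T$ together with $X^T=X$.

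The first step is the $2\times2$ base case. Take $\mathfrak{X}=\left[\begin{array}{cc} x & z\\ y & t\end{array}\right]$ and impose $\Lambda\mathfrak{X}=\mathfrak{X}\Lambda^T$ together with symmetry $z=y$. A direct computation shows that the $\alpha$-terms cancel. The $\beta$-terms then give $t=-x$ (using $\beta>0$), so $\mathcal{I}(\Lambda)$ consists of the matrices $\left[\begin{array}{cc} a & b\\ b& -a\end{array}\right]$. By the same computation, we prove an analogue of Lemma \ref{lemma: linear eq centralizer Lambda}: if $X\in\mathbf{M}_2$ and $Y$ lies in the appropriate two-dimensional space, then $\Lambda X = X\Lambda^T + Y$ forces $Y=0$ and $\Lambda X=X\Lambda^T$. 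We do not assume symmetry of $X$ here. In that case, solving $\Lambda X - X\Lambda^T=0$ for a general $X$ gives $X=\left[\begin{array}{cc} a & b\\ b& -a\end{array}\right]$ automatically. We check this by noting that $\Lambda X\Lambda^{-T}$ acts as a rotation conjugation, whose fixed space is two-dimensional.

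Next, write $J^n(\Lambda)^T$ as block lower bidiagonal, with $\Lambda^T$ on the diagonal and $I_2$ on the subdiagonal. Partition $X$ into $2\times 2$ blocks $X_{ij}$. The block $(i,j)$ of the relation reads
\[
\Lambda X_{ij} + X_{i+1,j} = X_{ij}\Lambda^T + X_{i,j+1},
\]
with the convention that blocks carrying an index $n+1$ vanish. Equivalently, $L(X_{ij}) = X_{i,j+1}-X_{i+1,j}$, where $L(W)=\Lambda W - W\Lambda^T$. The operator $L$ is semisimple over $\mathbb{C}$, since it is conjugate to a sum of commuting diagonalizable operators, and $\ker L=\mathcal{I}(\Lambda)$. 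Because of this semisimplicity, $\operatorname{im}L\cap\ker L=0$, and this is exactly the content of the lemma from the first step.

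We then induct on $i+j$ downward, starting from the bottom-right corner. For $i=j=n$ the equation gives $L(X_{nn})=0$. In general, suppose all blocks with larger index sum lie in $\ker L$ and satisfy the Hankel relation. Then the right-hand side $X_{i,j+1}-X_{i+1,j}$ lies in $\ker L$. So $L(X_{ij})\in\operatorname{im}L\cap\ker L=0$, which gives $X_{ij}\in\mathcal{I}(\Lambda)$ and $X_{i,j+1}=X_{i+1,j}$.

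It follows that every block lies in $\mathcal{I}(\Lambda)$ and depends only on $i+j$. It also follows that $X_{ij}=0$ whenever $i+j>n+1$. To see this, look at the last row: $X_{n,j+1}=X_{n+1,j}=0$ for all $j$, so $X_{n,j}=0$ for $j\ge2$, and propagating along anti-diagonals gives the vanishing below the main anti-diagonal. Relabeling the anti-diagonals gives the stated upper-left Hankel form, with blocks $X_1,\dots,X_n\in\mathcal{I}(\Lambda)$. Symmetry of $X$ is automatic, since each block is symmetric and the Hankel pattern is symmetric.

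For the converse inclusion, we verify directly that any matrix of that form satisfies the intertwining relation. This follows by reading the block equations backward: each equation reduces to $L(X_k)=0$ together with an identity $X_{k+1}-X_{k+1}=0$.

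The main obstacle is the base lemma $\operatorname{im}L\cap\ker L=0$, that is, the analogue of Lemma \ref{lemma: linear eq centralizer Lambda} for $\Lambda^T$. I would handle it by explicit computation, as the paper does for Lemma \ref{lemma: linear eq centralizer Lambda}. Alternatively, one can observe that $L$ is skew-adjoint with respect to the Frobenius inner product, because $\Lambda-\alpha I$ is skew. Hence $\operatorname{im}L=(\ker L)^\perp$.
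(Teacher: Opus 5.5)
Your argument is correct. The paper gives no proof of this theorem; it is quoted from \cite{Sarmiento-Alvarado2023}. The only internal point of comparison is therefore the proof of Theorem \ref{theorem: centralizer jordan block 1st Lambda}, whose template you adapt.

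That proof partitions $X$ into block columns in $\mathbf{M}_{2n,2}$ and applies Lemma \ref{lemma: J X = X Lambda + Y} starting from the last column and moving left. You instead work with the individual $2\times 2$ blocks and induct downward on anti-diagonals. Everything then reduces to the single fact $\operatorname{im}L\cap\ker L=0$ for $L(W)=\Lambda W-W\Lambda^T$.

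This organization buys several things:
\begin{itemize}
\item Writing $\Lambda=\alpha I_2+\beta K$ with $K^T=-K$ gives $L(W)=\beta(KW+WK)$. Skew-adjointness, and with it the key lemma, is then immediate.
\item Explicitly, $L$ maps $\mathbf{M}_2$ onto $\mathcal{C}(\Lambda)$. This is the exact mirror of Lemma \ref{lemma: linear eq centralizer Lambda}, where $W\mapsto\Lambda W-W\Lambda$ maps onto $\mathcal{I}(\Lambda)$.
\item Since $\ker L=\mathcal{I}(\Lambda)$ already consists of symmetric matrices, every solution of $J^n(\Lambda)X=XJ^n(\Lambda)^T$ is automatically symmetric. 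So the restriction to $\mathbf{Sym}_{2n}$ plays no role in the forward inclusion.
\item You also check the converse inclusion, which the paper's analogous centralizer proof leaves implicit.
\end{itemize}

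One small imprecision: $X\mapsto\Lambda X\Lambda^{-T}$ is $X\mapsto RXR$ for a rotation $R$. It becomes a conjugation only after substituting $X=Y\operatorname{diag}(1,-1)$. Your explicit computation of $\ker L$ already settles the point, and that computation needs only $\beta\neq 0$.
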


\begin{theorem}
\label{theo: ia jordan block 2nd Lambda}
    Let $J ^{ n_1, \ldots, n_q } ( \Lambda )$ be a Jordan block due to Definition \ref{def: jordan block 2do Lambda}.
    Then every matrix $X \in \mathcal{I}( J ^{ n _1, \ldots, n _q } ( \Lambda ) )$ is a block matrix of the form
    \begin{equation}
        X = \left[ \begin{array}{ccc}
            X _{1 1}    &   \cdots  &   X _{1 q}    \\
            \vdots      &   \ddots  &   \ddots  \\
            X _{q 1} &   \cdots  &   X _{q q}
        \end{array}\right]
    \end{equation}
    where for each $i, j \in\{ 1, \ldots, q \} $, $X _{i j} \in \mathbf{M}_{2n_i, 2n_j}$ and $X _{j i} = X ^T _{i j} $ which are of the following form:
    \begin{enumerate}
        \item   If $n_i = n_j$, then $X_{ij} \in \mathcal{I}( J ^{n _i} ( \Lambda ) )$.
        \item   If $n_i < n_j$, then $X_{ij} = \left[ \begin{array}{cc} Y_{ij} & 0 \end{array}\right] $ with $Y_{ij} \in\mathcal{I}(J ^{n_i} (\Lambda ) )$.
        \item   If $n_i > n_j$, then $X_{ij} = \left[ \begin{array}{c} Y_{ij} \\ 0 \end{array}\right] $ with $Y_{ij} \in\mathcal{I}(J ^{n_j} (\Lambda ) )$.
    \end{enumerate}
\end{theorem}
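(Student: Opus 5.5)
The plan mirrors the proof of Theorem \ref{theorem: centralizer jordan block 2do Lambda}, with the commuting relation replaced by the intertwining relation and symmetry of $X$ imposed at the end. Write $J = J ^{ n _1, \ldots, n _q } ( \Lambda )$ and take $X \in \mathcal{I} ( J )$. Partition $X$ conformally with $J$, so $X_{ij} \in \mathbf{M}_{2n_i, 2n_j}$. Symmetry of $X$ gives $X_{ji} = X_{ij}^T$ at once. Since $J$ is block diagonal, $J^T = \operatorname{diag} [ J^{n_1}(\Lambda)^T, \ldots, J^{n_q}(\Lambda)^T ]$. Hence the relation $J X = X J^T$ splits into the $q^2$ independent block equations
\begin{equation*}
J^{n_i} ( \Lambda ) X_{ij} = X_{ij} \, J^{n_j} ( \Lambda )^T \quad \text{for all } i, j \in \{ 1, \ldots, q \}.
\end{equation*}

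Next, reduce each block equation to a centralizer-type equation. Let $R_k$ be the $k \times k$ block reversal (exchange) matrix with $I_2$-blocks on the anti-diagonal, and let $K = \operatorname{diag}[1,-1]$. The working claim is that $P_k := R_k \operatorname{diag}[K, \ldots, K]$ satisfies $J^k(\Lambda)^T = P_k^{-1} J^k(\Lambda) P_k$. To verify it, note first that $K \Lambda K = \Lambda^T$, which handles the diagonal blocks. Then conjugation by $R_k$ reverses the block order, sending the superdiagonal $I_2$'s to the subdiagonal, and conjugation by $\operatorname{diag}[K,\ldots,K]$ leaves those $I_2$ blocks unchanged.

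Set $Z_{ij} = X_{ij} P_{n_j}$. The block equation then becomes $J^{n_i}(\Lambda) Z_{ij} = Z_{ij} J^{n_j}(\Lambda)$, so we can apply the results already proved:
\begin{itemize}
\item If $n_i = n_j$, Theorem \ref{theorem: centralizer jordan block 1st Lambda} applies.
\item If $n_i < n_j$, Lemma \ref{lemma: Lambda centralizer Jm < Jn} gives $Z_{ij} = [\,0 \;\; Y\,]$ with $Y \in \mathcal{C}(J^{n_i}(\Lambda))$.
\item If $n_i > n_j$, Lemma \ref{lemma: Lambda centralizer Jm > Jn} gives $Z_{ij} = \left[\begin{smallmatrix} Y \\ 0\end{smallmatrix}\right]$ with $Y \in \mathcal{C}(J^{n_j}(\Lambda))$.
\end{itemize}
Multiplying back by $P_{n_j}^{-1}$ reverses the block columns and multiplies them by $K$ on the right. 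This turns a block upper-triangular Toeplitz matrix $Y$ with blocks in $\mathcal{C}(\Lambda)$ into a block Hankel matrix that vanishes below the block anti-diagonal, with blocks in $\mathcal{C}(\Lambda)K$. Since $\left[\begin{smallmatrix} x & -y\\ y & x\end{smallmatrix}\right]K = \left[\begin{smallmatrix} x & y\\ y & -x\end{smallmatrix}\right]$, these blocks lie in $\mathcal{I}(\Lambda)$. So the $n_i = n_j$ case lands exactly in the form of Theorem \ref{theo: ia jordan block 1st Lambda}. In the $n_i < n_j$ case, the column reversal moves the zero columns from the left to the right, giving $X_{ij} = [\,Y_{ij} \;\; 0\,]$ with $Y_{ij} \in \mathcal{I}(J^{n_i}(\Lambda))$. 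For $n_i > n_j$ the zero rows stay at the bottom, giving $\left[\begin{smallmatrix} Y_{ij}\\ 0\end{smallmatrix}\right]$.

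The main obstacle is the bookkeeping in the last step. One must check that $P_k$ really conjugates $J^k(\Lambda)$ to its transpose, and that the reversal yields a block Hankel structure with the correct placement of zeros. Each $2\times 2$ block must also become symmetric in the $\mathcal{I}(\Lambda)$ sense, rather than merely lying in $\mathcal{C}(\Lambda)$. If the $P_k$ trick misbehaves, I would instead redo Lemma \ref{lemma: J X = X Lambda + Y} directly for the relation $\Lambda X = X\Lambda^T + Y$, column by column, exactly as in that lemma. One point remains about symmetry. Block symmetry $X_{ji} = X_{ij}^T$ is compatible with the derived forms, because $\mathcal{I}(\Lambda)$ consists of symmetric matrices and the transpose of a block Hankel matrix of this kind is again of the same kind. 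Hence no further constraints arise beyond $X_{ji} = X_{ij}^T$.
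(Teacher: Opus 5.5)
Your proof is correct. It takes a different route from anything in the paper, which states this theorem without proof and imports it from \cite{Sarmiento-Alvarado2023}. The closest in-paper argument is the proof of Theorem \ref{theorem: centralizer jordan block 2do Lambda}. That proof splits the relation into block equations and solves each one directly, block column by block column, through Lemmas \ref{lemma: linear eq centralizer Lambda}--\ref{lemma: Lambda centralizer Jm > Jn}. Following that template here would mean re-deriving transposed versions of those lemmas, which is your fallback.

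Your main route instead uses that $P_k=R_k\operatorname{diag}[K,\ldots,K]$ is a symmetric involution satisfying $J^k(\Lambda)P_k=P_kJ^k(\Lambda)^T$. It is an involution because $R_k$ commutes with $\operatorname{diag}[K,\ldots,K]$ and $K^2=I_2$. So $P_k$ is itself an element of $\mathcal{I}(J^k(\Lambda))$. Hence the solutions of $J^{n_i}(\Lambda)X_{ij}=X_{ij}J^{n_j}(\Lambda)^T$ are exactly the matrices $Z_{ij}P_{n_j}$ with $Z_{ij}$ solving the centralizer equation, and everything is transported from Section \ref{section: centralizer}. State the involution property explicitly, since it is what justifies taking $Z_{ij}=X_{ij}P_{n_j}$ rather than $X_{ij}P_{n_j}^{-1}$.

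Your approach gives a proof with no new elimination and explains the shapes conceptually:
\begin{itemize}
\item the $\mathcal{I}$-forms are the block-Hankel images of the block-Toeplitz centralizer forms;
\item $\mathcal{I}(\Lambda)=\mathcal{C}(\Lambda)K$;
\item the zero block migrates under column reversal from the left ($[0\;\,Y]$ in Lemma \ref{lemma: Lambda centralizer Jm < Jn}) to the right ($[Y_{ij}\;\,0]$).
\end{itemize}

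The direct method, by contrast, is independent of the centralizer results, whereas yours inherits their correctness. In particular, the case $n_i<n_j$ must rest on Lemma \ref{lemma: Lambda centralizer Jm < Jn}, as you did. Do not use the printed statement of Theorem \ref{theorem: centralizer jordan block 2do Lambda}: its case $n_i<n_j$ reads $[Y\;\,0]$, which is a misprint and would put the zeros on the wrong side after reversal.
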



\begin{theorem}
\label{theo: ia jordan matrix}
    Let $ J $ be a Jordan matrix due to Definition \ref{def: jordan matrix}.
    If $Z \in \mathbf{M}_{m+2n}$ is a matrix in $\mathcal{I} ( J )$, then $Z$ has the form $\operatorname{diag} ( X_1, \ldots, X_p, Y _1, \ldots, Y_q )$, where $X_i \in \mathcal{I} ( J _{m ^i _1, \ldots, m ^i _{r _i}}(\lambda_i ) )$ for all $i \in \{ 1, \ldots, p \}$ and $Y_k \in \mathcal{I} ( J ^{n ^j _1, \ldots, n ^k _{s _k}}(\Lambda_k ) )$ for all $k \in \{ 1, \ldots, q \}$.
\end{theorem}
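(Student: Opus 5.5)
I will follow the proof for the centralizer of a Jordan matrix, replacing the commutation relation $JX = XJ$ by the intertwining relation $JZ = ZJ^T$. Take $Z \in \mathcal{I}(J)$, so $Z$ is symmetric and $JZ = ZJ^T$. Partition $Z$ conformally with the block structure of $J$ in Definition \ref{def: jordan matrix}. Write the diagonal blocks as $B_1, \ldots, B_{p+q}$, where $B_i = J_{m^i_1,\ldots,m^i_{r_i}}(\lambda_i)$ for $i \le p$ and $B_{p+k} = J^{n^k_1,\ldots,n^k_{s_k}}(\Lambda_k)$. The intertwining relation then splits into the block equations $B_i Z_{ij} = Z_{ij} B_j^T$ for every pair $i,j$.

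The first step is to kill the off-diagonal blocks. Each such block equation is a homogeneous Sylvester equation $B_i Z_{ij} - Z_{ij} B_j^T = 0$. By Sylvester's theorem \cite{horn_johnson_1985, Gantmacher59}, it has only the trivial solution when $B_i$ and $B_j^T$ share no eigenvalue. The spectrum of $B_j^T$ equals that of $B_j$. Over $\mathbb{C}$, the spectrum of $B_i$ is $\{\lambda_i\}$ for real blocks and $\{\alpha_k \pm i\beta_k\}$ for complex blocks. These sets are pairwise disjoint by the hypotheses of Definition \ref{def: jordan matrix}: the $\lambda_i$ are distinct, the $\Lambda_k$ are different, and every $\beta_k>0$ separates complex eigenvalues from real ones. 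Hence $Z_{ij}=0$ for $i\neq j$, and $Z = \operatorname{diag}(X_1,\ldots,X_p,Y_1,\ldots,Y_q)$.

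Next, the diagonal blocks inherit the needed properties directly. Each $X_i$ is symmetric because it is a diagonal block of the symmetric matrix $Z$, and it satisfies $J_{m^i_1,\ldots,m^i_{r_i}}(\lambda_i) X_i = X_i J_{m^i_1,\ldots,m^i_{r_i}}(\lambda_i)^T$. Thus $X_i \in \mathcal{I}(J_{m^i_1,\ldots,m^i_{r_i}}(\lambda_i))$ by definition. In the same way $Y_k \in \mathcal{I}(J^{n^k_1,\ldots,n^k_{s_k}}(\Lambda_k))$. If one wants the explicit shapes, Theorems \ref{theo: ia jordan block 2nd lambda} and \ref{theo: ia jordan block 2nd Lambda} supply them, but they are not needed for the statement.

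The main point needing care is the spectral disjointness in the Sylvester step, specifically for real versus complex blocks. I would argue it over $\mathbb{C}$, or equivalently note that $\Lambda_k - \lambda I_2$ is invertible for real $\lambda$ because $\beta_k \neq 0$. One should also check that two distinct complex blocks $\Lambda_k \ne \Lambda_l$ with $\beta_k,\beta_l>0$ cannot share a conjugate eigenvalue. This holds because $\alpha_k + i\beta_k = \alpha_l - i\beta_l$ would force $\beta_k = -\beta_l$, contradicting positivity. Everything else is routine block bookkeeping.
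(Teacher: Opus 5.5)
Your proof is correct and follows the paper's approach. The paper states this theorem without proof, but it handles the centralizer of a Jordan matrix and Theorem~\ref{theorem: ia jordan algebra} in exactly your way: Sylvester's theorem on disjoint spectra kills the off-diagonal blocks, and the diagonal blocks inherit symmetry and the intertwining relation. Your explicit check that the real blocks and the distinct $\Lambda_k$ with $\beta_k>0$ have pairwise disjoint complex spectra, and that $B_j^T$ has the same spectrum as $B_j$, fills in a step the paper only asserts.
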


If $A_1$ is a derogatory matrix, then $\{ A_a \} \subset \mathfrak{A}$.
Thus, by Lemma \ref{lemma: subset ia}, $\mathcal{I} ( \mathfrak{A} ) \subset \mathcal{I} \{ A_a \}$.
This means that any matrix in $\mathcal{I} ( \mathfrak{A} )$ is also a matrix in $\mathcal{I} \{ A_a \}$.
For this reason, we will determine the subspaces $\mathcal{I} (  \mathfrak{J} _{m_1, \ldots, m_p} )$, $\mathcal{I} (  \mathfrak{J} ^{n_1, \ldots, n_q} )$, and $\mathcal{I} ( \mathfrak{J} _{m_1, \ldots, m_p} ^{n_1, \ldots, n_q} )$.

\begin{theorem}
\label{theorem: ia jordan lambda-algebra}
    Let $\mathfrak{J} _{m_1, \ldots, m_p}$ be the commutative algebra given in Definition \ref{def:  lambda commutative algebra}.
    Then,
    \begin{equation}
        \mathcal{I} ( \mathfrak{J} _{m_1, \ldots, m_p} ) = \left\{
            \operatorname{diag}[ g_1, \ldots, g_p ] \in \mathbf{M}_{m} :
            g_i \in \mathcal{I} ( \mathfrak{J} _{m_i} ) \text{ for all } i \in \{ 1, \ldots, p \}
        \right\}
    \text{.}
    \end{equation}
\end{theorem}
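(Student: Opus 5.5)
We prove both inclusions, working with elements of $\mathfrak{J}_{m_1,\ldots,m_p}$ that act nontrivially on only one block at a time.

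\emph{Inclusion $\subset$.} Fix $X \in \mathcal{I}(\mathfrak{J}_{m_1,\ldots,m_p})$ and partition it conformally as $X = [X_{ij}]$ with $X_{ij} \in \mathbf{M}_{m_i,m_j}$ and $X_{ji} = X_{ij}^T$.

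For each $k$, let $E_k = \operatorname{diag}[0,\ldots,I_{m_k},\ldots,0]$, which lies in $\mathfrak{J}_{m_1,\ldots,m_p}$ since $I_{m_k} \in \mathfrak{J}_{m_k}$. Since $E_k^T = E_k$, the relation $E_k X = X E_k$ holds. Comparing the $(i,j)$ blocks gives $\delta_{ik} X_{ij} = X_{ij}\delta_{jk}$. Taking $k = i \neq j$ yields $X_{ij} = 0$, so $X = \operatorname{diag}[X_{11},\ldots,X_{pp}]$.

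Each diagonal block $g_i = X_{ii}$ is symmetric. For any $A \in \mathfrak{J}_{m_i}$, embed it as $\operatorname{diag}[0,\ldots,A,\ldots,0] \in \mathfrak{J}_{m_1,\ldots,m_p}$. The intertwining relation then reduces to $A g_i = g_i A^T$, so $g_i \in \mathcal{I}(\mathfrak{J}_{m_i})$.

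\emph{Inclusion $\supset$.} Let $X = \operatorname{diag}[g_1,\ldots,g_p]$ with $g_i \in \mathcal{I}(\mathfrak{J}_{m_i})$. This $X$ is symmetric. Any element of $\mathfrak{J}_{m_1,\ldots,m_p}$ has the form $J = \operatorname{diag}[J_1,\ldots,J_p]$ with $J_i \in \mathfrak{J}_{m_i}$. Then
\[
JX = \operatorname{diag}[J_i g_i] = \operatorname{diag}[g_i J_i^T] = XJ^T,
\]
so $X \in \mathcal{I}(\mathfrak{J}_{m_1,\ldots,m_p})$.

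As a remark, since $J_{m_i}(\lambda)$ is non-derogatory, Lemma~\ref{lemma: ia non-derogatory matrix} gives $\mathcal{I}(\mathfrak{J}_{m_i}) = \mathcal{I}(J_{m_i}(\lambda))$, which Theorem~\ref{theo: ia jordan block 1st lambda} describes explicitly as Hankel-type matrices.

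The only step that requires any care is killing the off-diagonal blocks. The projector argument with $E_k$ handles this directly, so the blocks of equal size that are mixed in Theorem~\ref{theo: ia jordan block 2nd lambda} cause no problem here. The algebra contains the independent idempotents $E_k$, whereas the single matrix $J_{m_1,\ldots,m_p}(\lambda)$ does not.
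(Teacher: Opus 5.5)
Your proof is correct. It follows the paper's overall plan: partition $X$ conformally, show the off-diagonal blocks vanish, then read off the diagonal blocks. The key step is done differently, though.

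The paper derives $A_i g_{ij} = g_{ij} A_j^T$ for independently chosen $A_i \in \mathfrak{J}_{m_i}$ and $A_j \in \mathfrak{J}_{m_j}$. It then notes these can be taken with different real eigenvalues and invokes Sylvester's theorem on $AX = XB$ with disjoint spectra to conclude $g_{ij}=0$.

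Your projector $E_i$ is exactly the simplest admissible choice in that argument: $A_i = I_{m_i}$ and $A_j = 0$, with spectra $\{1\}$ and $\{0\}$. With it, the Sylvester step collapses to the one-line identity $X_{ij}=0$, so no external theorem is needed. It also makes explicit why the off-diagonal blocks die: the blocks of an element of $\mathfrak{J}_{m_1,\ldots,m_p}$ can be chosen independently. That is precisely what fails for $\mathbb{R}[J_{m_1,\ldots,m_p}(\lambda)]$, and it explains the surviving off-diagonal blocks in Theorem~\ref{theo: ia jordan block 2nd lambda}.

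You also prove the reverse inclusion $\supset$, which the paper does not write out. The paper only establishes the form of elements of $\mathcal{I}(\mathfrak{J}_{m_1,\ldots,m_p})$, whereas your argument gives the stated equality.

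What the paper's formulation buys is uniformity with its other block-vanishing arguments (Theorems~\ref{theo: ia jordan matrix} and \ref{theorem: ia jordan algebra}), all phrased via Sylvester's theorem. What yours buys is a self-contained computation and a complete two-sided proof.
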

\begin{proof}
    Let $i \in \{ 1, \ldots, p \}$.
    If the matrix $g$ belongs to $\mathcal{I} ( \mathfrak{J} _{m_1, \ldots, m_p} )$, it can be partitioned into matrices of $\mathfrak{J} _{m_1, \ldots, m_p}$:
    \begin{equation}
        g = \left[\begin{array}{ccc}
            g_{1 1} & \cdots & g_{1 p} \\
            \vdots & \ddots & \vdots \\
            g_{p 1} & \cdots & g_{p p} \\
        \end{array}\right]
    \text{,}
    \end{equation}
    where $g_{i j} \in \mathbf{M}_{m_i, m_j}$.
    The intertwining relation $A g = g A^T$ for all $A \in \mathfrak{J} _{m_1, \ldots, m_p}$, implies that $A_i g_{i j} = g_{i j} A_j ^T$ for all $A_i \in \mathfrak{J} _{m_i}$.
    Since the matrices $A_i$ are arbitrary, they can have different real eigenvalues. By Sylvester’s theorem on linear matrix equations \cite{horn_johnson_1985,Gantmacher59}, we get $g_{i j} = 0$ for $i \neq j$.
    Furthermore, by the symmetry of $g$, $g ^T = g$, the matrices $g_{i i}$ are symmetric, so that $g_{i i} \in \mathcal{I} ( \mathfrak{J} _{m_i} )$.
\end{proof}

\begin{theorem}
\label{theorem: ia jordan Lambda-algebra}
    If $\mathfrak{J} ^{n_1, \ldots, n_q}$ is the commutative algebra defined in Definition \ref{def: Lambda commutative algebra}, then
    \begin{equation}
        \mathcal{I} ( \mathfrak{J} ^{n_1, \ldots, n_q} ) = \left\{
            \operatorname{diag}[ g_1, \ldots, g_q ] \in \mathbf{M}_{m} :
            g_i \in \mathcal{I} ( \mathfrak{J} ^{n_i} ) \text{ for all } i \in \{ 1, \ldots, q \}
        \right\}.
    \end{equation}
\end{theorem}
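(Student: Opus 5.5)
The plan is to mirror the proof of Theorem \ref{theorem: ia jordan lambda-algebra}, replacing real eigenvalues by the $2\times 2$ blocks $\Lambda$.

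Let $g \in \mathcal{I} ( \mathfrak{J} ^{n_1, \ldots, n_q} )$ and partition it conformally with the block structure of $\mathfrak{J} ^{n_1, \ldots, n_q}$, so that $g = [ g_{i j} ]$ with $g_{i j} \in \mathbf{M}_{2 n_i, 2 n_j}$. Take $A = \operatorname{diag}[ A_1, \ldots, A_q ]$ with $A_i \in \mathfrak{J} ^{n_i}$ chosen independently. The intertwining relation $A g = g A^T$ then reads blockwise as
\[
A_i g_{i j} = g_{i j} A_j^T \quad \text{for all } A_i \in \mathfrak{J}^{n_i},\ A_j \in \mathfrak{J}^{n_j}.
\]

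For $i \neq j$ we show $g_{ij}=0$. Choose $A_i = J^{n_i}(\Lambda)$ and $A_j = J^{n_j}(\Lambda')$, where $\Lambda = \left[\begin{smallmatrix} \alpha & -\beta \\ \beta & \alpha \end{smallmatrix}\right]$ and $\Lambda'$ is built from different parameters $(\alpha', \beta')$. Both matrices lie in the respective algebras by Theorem \ref{theorem: centralizer jordan block 1st Lambda}, since $\Lambda$ and $I_2$ belong to $\mathcal{C}(\Lambda)$ and $\mathfrak{J}^{n}$ consists of block upper-triangular Toeplitz matrices with entries in $\mathcal{C}(\Lambda)$. The spectrum of $A_i$ is $\{\alpha \pm i\beta\}$. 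Transposition does not change the spectrum, so the spectrum of $A_j^T$ is $\{\alpha' \pm i\beta'\}$. Choosing, for instance, $\alpha \neq \alpha'$ makes the two spectra disjoint. Sylvester's theorem on the equation $A_i X = X A_j^T$ then forces $g_{ij}=0$.

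It remains to treat the diagonal blocks. The condition $A_i g_{ii} = g_{ii} A_i^T$ for all $A_i \in \mathfrak{J}^{n_i}$, together with the symmetry of $g_{ii}$ inherited from $g^T=g$, says exactly that $g_{ii} \in \mathcal{I}(\mathfrak{J}^{n_i})$.

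For the reverse inclusion, suppose $g = \operatorname{diag}[ g_1, \ldots, g_q ]$ with each $g_i \in \mathcal{I} ( \mathfrak{J} ^{n_i} )$. Then $g$ is symmetric, and the relation $A g = g A^T$ holds blockwise for every $A \in \mathfrak{J}^{n_1,\ldots,n_q}$, so $g$ lies in $\mathcal{I} ( \mathfrak{J} ^{n_1, \ldots, n_q} )$.

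The only delicate point is confirming that the chosen $J^{n}(\Lambda)$ actually belong to $\mathfrak{J}^{n}$ and that their spectra can be separated. Both follow directly from the earlier centralizer description, so I expect no real obstacle. As a remark, the statement's $\mathbf{M}_m$ should read $\mathbf{M}_{2n}$.
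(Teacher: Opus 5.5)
Your proof is correct and follows essentially the route the paper intends. The paper gives no separate proof of this theorem, but its proof of the real-eigenvalue analogue, Theorem~\ref{theorem: ia jordan lambda-algebra}, is exactly your argument: the blockwise relation $A_i g_{ij} = g_{ij} A_j^T$ with the $A_i$ chosen independently, Sylvester's theorem with disjoint spectra to force $g_{ij}=0$ for $i \neq j$, and symmetry of $g$ for the diagonal blocks. Your explicit check that $J^{n}(\Lambda)\in\mathfrak{J}^{n}$, your proof of the reverse inclusion, and your correction of $\mathbf{M}_m$ to $\mathbf{M}_{2n}$ are useful details that the paper leaves implicit.
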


\begin{theorem}
\label{theorem: ia jordan algebra}
    Let $\mathfrak{J} _{m_1, \ldots, m_p} ^{n_1, \ldots, n_q}$ be the commutative algebra introduced in Definition \ref{def: jordan commutative algebra}.
    Then,
    \begin{equation}
        \mathcal{I} ( \mathfrak{J} _{m_1, \ldots, m_p} ^{ n_1, \ldots, n_q } ) = \left\{
            \operatorname{diag}[ g_\lambda, g_\Lambda ] \in \mathbf{M}_{m + 2 n} :
            g_\lambda \in \mathcal{I} ( \mathfrak{J} _{ m_1, \ldots, m_p } ),
            g_\Lambda \in \mathcal{I} ( \mathfrak{J} ^{ n_1, \ldots, n_q } )
        \right\}.
    \end{equation}
\end{theorem}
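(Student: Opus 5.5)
I will follow the proof of Theorem \ref{theorem: ia jordan lambda-algebra}, adapting it to the mixed real/complex block structure. Let $g \in \mathcal{I} ( \mathfrak{J} _{m_1, \ldots, m_p} ^{n_1, \ldots, n_q} )$ and partition it conformally with the decomposition $\mathbf{M}_{m+2n} \supset \operatorname{diag}[ \mathbf{M}_m, \mathbf{M}_{2n} ]$:
\begin{equation*}
    g = \left[\begin{array}{cc} g_{\lambda} & g_{\lambda\Lambda} \\ g_{\lambda\Lambda}^T & g_{\Lambda} \end{array}\right],
    \qquad g_\lambda \in \mathbf{Sym}_m,\ g_\Lambda \in \mathbf{Sym}_{2n},\ g_{\lambda\Lambda} \in \mathbf{M}_{m, 2n}.
\end{equation*}
Every element of the algebra has the form $A = \operatorname{diag}[ J_1, J_2 ]$ with $J_1 \in \mathfrak{J} _{m_1, \ldots, m_p}$ and $J_2 \in \mathfrak{J} ^{n_1, \ldots, n_q}$ chosen independently. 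The intertwining relation $A g = g A^T$ then splits blockwise into $J_1 g_\lambda = g_\lambda J_1^T$, $J_2 g_\Lambda = g_\Lambda J_2^T$ and $J_1 g_{\lambda\Lambda} = g_{\lambda\Lambda} J_2^T$.

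The first two relations, holding for all $J_1$ and all $J_2$ respectively, say precisely that $g_\lambda \in \mathcal{I} ( \mathfrak{J} _{m_1, \ldots, m_p} )$ and $g_\Lambda \in \mathcal{I} ( \mathfrak{J} ^{n_1, \ldots, n_q} )$. These sets are already described by Theorems \ref{theorem: ia jordan lambda-algebra} and \ref{theorem: ia jordan Lambda-algebra}. Symmetry of $g_\lambda$ and $g_\Lambda$ is inherited from $g^T = g$.

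The key step is to show $g_{\lambda\Lambda} = 0$. I will exploit the freedom in choosing $J_1$ and $J_2$. Take $J_1 = 0_m$ and $J_2 = \operatorname{diag}[\Lambda', \ldots, \Lambda'] \in \mathfrak{J} ^{n_1, \ldots, n_q}$, where $\Lambda' = \left[\begin{array}{cc} 0 & -1 \\ 1 & 0 \end{array}\right]$; this $\Lambda'$ lies in $\mathcal{C}(\Lambda)$ by Lemma \ref{lemma: centralizer jordan block 1st Lambda}, so it can serve as the diagonal block entry $X_1$ of each $\mathfrak{J}^{n_i}$. The third relation becomes $g_{\lambda\Lambda} J_2^T = 0$. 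Since $J_2^T$ is invertible, this gives $g_{\lambda\Lambda} = 0$. Equivalently, I could invoke Sylvester's theorem as in the proof of Theorem \ref{theorem: ia jordan lambda-algebra}: $J_1$ has only real eigenvalues, while $J_2^T$ has only non-real ones ($\pm i$), so their spectra are disjoint.

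For the converse inclusion, I will check directly that any $\operatorname{diag}[ g_\lambda, g_\Lambda ]$ with $g_\lambda, g_\Lambda$ in the stated sets is symmetric and satisfies $A g = g A^T$ blockwise for every $A$ in the algebra. This verification is immediate.

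The only point needing care is confirming that a block with non-real spectrum, together with a zero block, genuinely lies in the algebra, since $g_{\lambda\Lambda}$ must be killed by a single admissible choice. This holds by Theorem \ref{theorem: centralizer jordan block 1st Lambda}, which identifies $\mathfrak{J}^{n_i}$ with block upper-triangular Toeplitz matrices whose entries lie in $\mathcal{C}(\Lambda)$.
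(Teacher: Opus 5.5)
Your proposal is correct and follows the paper's proof. You use the same conformal partition of $g$ and the same blockwise splitting of $Ag = gA^T$ into three relations, and you reach the same conclusion that the off-diagonal block vanishes; the paper gets this last step from Sylvester's theorem, on the grounds that one algebra has real spectra and the other non-real spectra.

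Your explicit choice $A = \operatorname{diag}[0_m, J_2]$ with $J_2$ invertible makes that step more rigorous, since not every element of $\mathfrak{J}^{n_1,\ldots,n_q}$ has non-real spectrum (e.g.\ the identity), so a specific choice is genuinely needed. You also verify the reverse inclusion, which the paper leaves implicit.
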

\begin{proof}
    If the matrix $g$ belongs to $\mathcal{I} ( \mathfrak{J} _{m_1, \ldots, m_p} ^{n_1, \ldots, n_q} )$, it can be partitioned into matrices from $\mathfrak{J} _{m_1, \ldots, m_p} ^{n_1, \ldots, n_q}$, as follows:
    \begin{equation}
        g = \left[\begin{array}{cc}
            g_{1 1} & g_{1 2} \\
            g_{1 2}^T & g_{2 2}
        \end{array}\right],
    \end{equation}
    where $g_{1 1} \in \mathbf{Sym}_{m}$, $g_{2 2} \in \mathbf{Sym}_{2 n}$ and $g_{1 2} \in \mathbf{M}_{m, 2 n}$.
    The intertwining relation $A g = g A^T$ for all $A \in \mathfrak{J} _{m_1, \ldots, m_p} ^{n_1, \ldots, n_q}$, implies that $P g_{1 1} = g_{1 1} P^T$, $Q g_{2 2} = g_{2 2} Q^T$ and $P g_{1 2} = g_{1 2} Q^T$ for all $P \in \mathfrak{J} _{ m_1, \ldots, m_p }$ and $Q \in \mathfrak{J} ^{ n_1, \ldots, n_q }$.
    Thus, $g_{1 1} \in \mathcal{I} ( \mathfrak{J} _{ m_1, \ldots, m_p } )$ and $g_{2 2} \in \mathcal{I} ( \mathfrak{J} ^{ n_1, \ldots, n_q } )$.
    Since the matrices in $\mathfrak{J} _{ m_1, \ldots, m_p }$ have real eigenvalues, and the matrices in $\mathfrak{J} ^{ n_1, \ldots, n_q }$ have complex conjugate eigenvalues, by Sylvester’s theorem on linear matrix equations \cite{horn_johnson_1985,Gantmacher59}, we obtain that $g_{1 2} = 0$.
\end{proof}

\begin{lemma}
\label{lemma: subspace ia matrices S}
    Let $S_0, \ldots, S_{n - 2} \in \mathbf{M}_n$ be the matrices defined in Eq. \eqref{def matrices S}.
    If $g$ is a matrix in $\mathcal{I} ( S_0, \ldots, S_i )$ for some $i \in \{ 1, \ldots, n - 2 \}$, then $g$ has the form
    \begin{equation}
    \label{subspace ia matrices S}
        g = \operatorname{diag}[ M, c_1, \ldots, c_{i + 1} ],
    \end{equation}
    where $M \in \mathbf{Sym}_{ n - 1 - i }$ and $c_1, \ldots, c_{i + 1} \in \mathbb{R}$. 
\end{lemma}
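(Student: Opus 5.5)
The plan is to use that every $S_k$ is diagonal, so $S_k^T = S_k$. The relation $S_k g = g S_k^T$ then becomes $S_k g = g S_k$. Entrywise, for a diagonal $D=\operatorname{diag}[d_1,\ldots,d_n]$ the condition $Dg = gD$ reads $(d_r - d_s) g_{rs} = 0$ for all $r,s$. Hence $g_{rs}=0$ whenever some $S_k$, $k\in\{0,\ldots,i\}$, has different diagonal entries in positions $r$ and $s$. By Theorem \ref{theorem: subspace ia union}, $\mathcal{I}(S_0,\ldots,S_i)=\bigcap_{k=0}^{i}\mathcal{I}(S_k)$. So we only need to determine which pairs $(r,s)$ are separated by at least one of $S_0,\ldots,S_i$.

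Next, list the diagonals. By \eqref{def matrices S}, $S_k=\operatorname{diag}[I_{n-1-k},-n+1+k,0_k]$. Its diagonal is $1$ in positions $1,\ldots,n-1-k$, equals $-(n-1-k)\neq 1$ in position $n-k$, and is $0$ in positions $n-k+1,\ldots,n$. Note $-(n-1-k)\neq 0$ because $k\le n-2$.

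We then split into two cases.
\begin{enumerate}
\item Take $r,s\le n-1-i$. For every $k\le i$ we have $n-1-k\ge n-1-i$, so both positions carry the entry $1$ in every $S_k$. These entries are therefore unconstrained, and together they give the free symmetric block $M\in\mathbf{Sym}_{n-1-i}$.
\item Take $r\neq s$ with $s\ge n-i$, and choose $k = n-s\in\{0,\ldots,i\}$. Then position $s$ of $S_k$ carries the value $-(n-1-k)$. Position $r$ carries either $1$ (if $r<s$) or $0$ (if $r>s$), and both differ from $-(n-1-k)$. So $g_{rs}=0$, and $g_{sr}=0$ as well by symmetry.
\end{enumerate}

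The remaining diagonal entries $g_{ss}$ for $s\ge n-i$ are unconstrained; call them $c_1,\ldots,c_{i+1}$. This yields exactly $g=\operatorname{diag}[M,c_1,\ldots,c_{i+1}]$. Symmetry of $g$ is part of the definition of $\mathcal{I}$, so $M$ is symmetric.

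The main obstacle is the index bookkeeping in the second case. We must confirm that every position $s\ge n-i$ is the "special" entry of exactly one $S_k$ with $k\le i$. We must also check that the special value $-(n-1-k)$ coincides with neither $1$ nor $0$; the bound $k\le n-2$ handles both. Everything else is a direct consequence of the diagonal commutation condition.
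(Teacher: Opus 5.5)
Your proof is correct, but you organize it differently from the paper.

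The paper argues by induction on $i$, starting from $i=0$, where $\mathcal{I}(S_0)$ gives $\operatorname{diag}[R,C]$. It assumes the form $\operatorname{diag}[M,c_1,\ldots,c_{i+1}]$ for $\mathcal{I}(S_0,\ldots,S_i)$. The extra relation with $S_{i+1}$ then reduces to $M$ commuting with $\operatorname{diag}[I_{n-i-2},-n+i+2]$, and Sylvester's theorem splits one more scalar off $M$.

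You instead work all at once with the entrywise criterion $(d_r-d_s)g_{rs}=0$, and for each index $s\ge n-i$ you name the separating matrix $S_{n-s}$.

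The induction keeps every step local. Only the newest matrix and the current block $M$ are involved, and the only numerical fact needed is $-n+i+2\neq 1$, because entries outside $M$ were already killed at earlier steps. Your one-shot version compares the special entry of $S_k$ with both the $1$'s and the $0$'s, hence the extra (harmless) check $-(n-1-k)\neq 0$.

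In return, your version gains three things:
\begin{itemize}
\item it dispenses with induction and with Sylvester's theorem, which for diagonal matrices is exactly your entrywise fact;
\item it makes the index bookkeeping explicit;
\item it also gives the reverse inclusion: every $\operatorname{diag}[M,c_1,\ldots,c_{i+1}]$ with $M$ symmetric lies in $\mathcal{I}(S_0,\ldots,S_i)$. The lemma does not state this, but the paper tacitly relies on it when it takes $g_0$ of this form with arbitrary blocks in its examples.
\end{itemize}
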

\begin{proof}
    If $g \in \mathcal{I} ( S_0 )$, then $S_0 g = g S_0$, which implies $g = \operatorname{diag}[ R, C ]$, where $R \in \mathbf{Sym}_{n - 1}$ and $C \in \mathbb{R}$.

    Now, let us assume that Eq. \eqref{subspace ia matrices S} is also true for $i < n - 2$.
    We prove that it is also true for $i + 1$.
    If $g \in \mathcal{I} ( S_0, \ldots, S_{i + 1} )$, then $g$ is also a matrix in $g \in \mathcal{I} ( S_0, \ldots, S_i )$, so that $g$ has the form given by Eq. \eqref{subspace ia matrices S}.
    Moreover, the matrix $g$ belongs to $\mathcal{I} ( S_{i + 1} )$, then $\operatorname{diag}[ I_{ n - i - 2 }, -n + i + 2, 0_{ i + 1} ] \operatorname{diag}[ M, c_1, \ldots, c_{i + 1} ] = \operatorname{diag}[ M, c_1, \ldots, c_{i + 1} ] \operatorname{diag}[ I_{ n - i - 2 }, -n + i + 2, 0_{ i + 1} ]$, so that $\operatorname{diag}[ I_{ n - i - 2 }, -n + i + 2 ] M = M \operatorname{diag}[ I_{ n - i - 2 }, -n + i + 2 ]$.
    By Sylvester’s theorem on linear matrix equations \cite{horn_johnson_1985,Gantmacher59}, we have $M = \operatorname{diag}[ N, c ]$, where $N \in \mathbf{Sym}_{ n - i - 2 }$ and $c \in \mathbb{R}$.
    Therefore, $g = \operatorname{diag}[ N, c, c_1, \ldots, c_{i + 1} ]$.
\end{proof}

\section{Exponential matrix}    
\label{section: matrix exponential}

Determining the exponential of a matrix is a long and complex process.
However, it is necessary to construct exact solutions to the EFE, since $g$ is given by the exponential of a linear combination of matrices $A_a$.
In this section, we will compute the exponential of polynomials of Jordan matrices.

\begin{theorem}
\label{theorem: upper triangular syst lin matrix diff eq}
    Let $\{ M_1, \ldots, M_n \}$ be a subset of pairwise commuting matrices of $\mathbf{M}_m$.
    Then,
    \begin{equation}
        \exp \left[\begin{array}{cccc}
            M_1 & M_2 & \cdots & M_n \\
            & M_1 & \cdots & M_{n - 1} \\
            && \ddots & \vdots \\
            &&& M_1 \\
        \end{array}\right]
        = \left[\begin{array}{cccc}
            E_1 & E_2 & \cdots & E_n \\
            & E_1 & \cdots & E_{n - 1} \\
            && \ddots & \vdots \\
            &&& E_1 \\
        \end{array}\right] \text{,} 
    \end{equation}
    where
    \begin{align}
        E_1
    &   = e^{M_1} \text{, }
        E_i
        = e^{M_1} \sum_{j = 1}^{i - 1} \frac{C_{i, j}}{j !} \text{,}
    \\\label{recurrence relation C}
        C_{1, 0}
    &   = I_m \text{, }
        C_{i, 0} = 0_m \text{, }
        C_{i, l + 1}
        = \sum_{j = 2}^{i - l} M_j C_{i - j + 1, l}
    \end{align}
    for all $i \in \{ 2, \ldots, n \}$ and $l \in \{0, \ldots, i - 2\}$.
\end{theorem}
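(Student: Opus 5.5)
The plan is to split the block upper triangular Toeplitz matrix, call it $T$, into a block-diagonal part and a nilpotent part that commute, and then expand the exponential of the nilpotent part as a finite sum. Let $N \in \mathbf{M}_n$ be the nilpotent shift with ones on the first superdiagonal, so $N = J_n(0)$ from Definition \ref{def: jordan block 1st lambda}. In Kronecker notation,
\begin{equation*}
T = I_n \otimes M_1 + \mathcal{N}, \qquad \mathcal{N} = \sum_{j=2}^{n} N^{j-1} \otimes M_j .
\end{equation*}
Since the $M_j$ pairwise commute, the term $I_n \otimes M_1$ commutes with every $N^{j-1} \otimes M_j$, and hence with $\mathcal{N}$. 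It follows that $e^{T} = (I_n \otimes e^{M_1})\, e^{\mathcal{N}}$. Moreover $N^n = 0$, so every product of $n$ factors of the form $N^{j-1}\otimes M_j$ with $j \geq 2$ vanishes, and therefore $\mathcal{N}^n = 0$. This gives the finite expansion $e^{\mathcal{N}} = \sum_{l=0}^{n-1} \mathcal{N}^l / l!$.

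Next I will check that the powers of $\mathcal{N}$ stay block upper triangular Toeplitz, with block entries in the commutative algebra generated by the $M_j$. Each $\mathcal{N}^l$ is a linear combination of terms $N^k \otimes P$ with $P$ a product of $M_j$'s, so this holds. Since a block Toeplitz matrix is determined by its first block row, I define $C_{i,l}$ to be the $(1,i)$ block of $\mathcal{N}^l$. Then $C_{1,0} = I_m$ and $C_{i,0} = 0_m$ for $i \geq 2$, because $\mathcal{N}^0$ is the identity.

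To get the recurrence \eqref{recurrence relation C}, I write $\mathcal{N}^{l+1} = \mathcal{N}\,\mathcal{N}^l$ and read off its first block row. The first block row of $\mathcal{N}$ has $M_j$ in block column $j$ for $j \geq 2$, and zero in column $1$. By the Toeplitz property, the $(j,i)$ block of $\mathcal{N}^l$ equals its $(1, i-j+1)$ block, which is $C_{i-j+1,l}$, and it vanishes when $j > i$. Therefore $C_{i,l+1} = \sum_{j\geq 2} M_j C_{i-j+1,l}$. The terms with $i-j+1 < l+1$ vanish, because $\mathcal{N}^l$ has zero blocks below its $l$-th block superdiagonal, so the sum truncates at $j = i-l$, exactly as stated.

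Finally, I multiply by the block-diagonal factor $I_n \otimes e^{M_1}$, which preserves the Toeplitz structure. The $(1,i)$ block of $e^T$ is then $e^{M_1}\sum_{l} C_{i,l}/l!$. For $i = 1$ only $l = 0$ contributes, giving $E_1 = e^{M_1}$. For $i \geq 2$ the $l = 0$ term is zero and $C_{i,l} = 0$ for $l \geq i$, so the sum runs over $1 \leq l \leq i-1$, which is the stated formula for $E_i$. I expect the main obstacle to be this index bookkeeping: making sure the truncation ranges in the recurrence and in the sum defining $E_i$ match the stated ones. I will handle it by first proving that $C_{i,l} = 0$ whenever $l \geq i$, by induction on $l$ from the recurrence.
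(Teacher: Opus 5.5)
Your proof is correct, but it follows a different route from the paper's. The paper works with the one-parameter family $E(\sigma)=e^{\sigma T}$. It asserts without justification that $E(\sigma)$ keeps the block upper triangular Toeplitz form, and writes the first block row of $E_{,\sigma}=TE$ as a triangular system of ODEs. It then substitutes $E_i=e^{\sigma M_1}X_i$, using that $M_1$ commutes with the $M_j$, and shows by induction that $d^iX_i/d\sigma^i=0$, so each $X_i$ is a polynomial of degree at most $i-1$ in $\sigma$. The recurrence for the coefficients $C_{i,l}$ comes from matching powers of $\sigma$, the values $C_{i,0}$ come from $E(0)=I$, and finally $\sigma=1$.

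Your splitting $T=I_n\otimes M_1+\mathcal{N}$ into commuting block-scalar and nilpotent parts replaces the differential equation with algebra, and this buys three things:
\begin{itemize}
\item The polynomial structure that the paper proves by induction comes for free from $\mathcal{N}^n=0$.
\item The Toeplitz structure that the paper takes for granted follows from $(N^k\otimes P)(N^{k'}\otimes P')=N^{k+k'}\otimes PP'$.
\item The coefficients get an explicit meaning: $C_{i,l}$ is the $(1,i)$ block of $\mathcal{N}^l$, i.e.\ the sum of $M_{j_1}\cdots M_{j_l}$ over $j_1,\ldots,j_l\geq 2$ with $\sum_k (j_k-1)=i-1$. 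The recurrence is then just a way to compute these sums.
\end{itemize}
Your $C_{i,l}$ coincide with the theorem's because the recurrence and its initial values determine them uniquely.

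The paper's formulation stays in the language of derivatives with respect to a parameter, which fits the chiral-equation setting. Your argument loses nothing by comparison, since $e^{\sigma T}$ follows from yours by replacing each $M_j$ with $\sigma M_j$. Both arguments in fact use only that $M_1$ commutes with each $M_j$, not full pairwise commutativity.
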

\begin{proof}
    Let
    \begin{equation}
        M = \left[\begin{array}{cccc}
            M_1 & M_2 & \cdots & M_n \\
            & M_1 & \cdots & M_{n - 1} \\
            && \ddots & \vdots \\
            &&& M_1 \\
        \end{array}\right] \in \mathbf{M}_{n m} \text{.}
    \end{equation}
    Let $E (\sigma ) \in \mathbf{M}_{nm}$ be a matrix function that satisfies the linear differential equation $E_{, \sigma } = M E$ with the initial condition $E ( 0 ) = I_{n m}$.
    Then, $E (\sigma )$ has the same form as $M$, that is,
    \begin{equation}
        E = \left[\begin{array}{cccc}
            E_1 & E_2 & \cdots & E_n \\
            & E_1 & \cdots & E_{n - 1} \\
            && \ddots & \vdots \\
            &&& E_1 \\
        \end{array}\right] \text{,}
    \end{equation}
    where $E_1 (\sigma), \ldots, E_n (\sigma) \in \mathbf{M}_m$.
    From the linear differential equation of $E$, we have $E_{i , \sigma} = \sum_{j = 1}^i M_j E_{i - j + 1}$ for all $i \in \{ 1, \ldots, n \}$.
    Let $X_i ( \sigma ) \in \mathbf{M}_m$ be matrix functions.
    We propose $E_i ( \sigma ) = e^{\sigma M_1} X_i ( \sigma )$ to obtain
    \begin{align}
    \label{diff eq X1}
        \frac{d X_1}{d \sigma}
    &   = 0 \text{,}
    \\\label{diff eq Xp}
        \frac{d X_i}{d \sigma}
    &   = \sum_{l = 2}^i M_l X_{i - l + 1} \text{ for } i > 1 \text{.}
    \end{align}
    Now, we will demonstrate that
    \begin{equation}
    \label{pth der Xp}
        \frac{d^i X_i}{d \sigma^i} = 0
    \end{equation}
    by mathematical induction.
    For $i = 1$, it is true by Eq. \eqref{diff eq X1}.
    Let us assume that Eq. \eqref{pth der Xp} is also true for $1 < i < n$.
    We prove that it holds for $i + 1$.
    Differentiating Eq. \eqref{diff eq Xp} $i$ times and using the hypothesis \eqref{pth der Xp}, we obtain $\frac{d^{i + 1} X_{i + 1}}{d \sigma^{i + 1}} = \sum_{l = 2}^{i + 1} M_l \frac{d^i X_{i - l + 2}}{d \sigma^i } = 0$.

    Solving Eq. \eqref{pth der Xp}, we find
    \begin{equation}
    \label{Xi functions}
        X_i (\sigma) = \sum_{j = 0}^{i - 1} C_{i, j} \frac{\sigma^j}{j!} \text{,}
    \end{equation}
    where $C_{i, j}$ are constant matrices in $\mathbf{M}_m$ for all $j \in \{ 0, \ldots, i - 1 \}$.
    To determine the constant matrices of $X_i$ for $i > 1$, we substitute Eq. \eqref{Xi functions} into Eq. \eqref{diff eq Xp}.
    So,
    \begin{equation}
        \sum_{l = 0}^{i - 2} C_{i, l + 1} \frac{\sigma^l}{l!}
        = \sum_{j = 2}^i \sum_{l = 0}^{i - j} M_j C_{i - j + 1, l} \frac{\sigma^l}{l!}
        = \sum_{l = 0}^{i - 2} \sum_{j = 2}^{i - l} M_j C_{i - j + 1, l} \frac{\sigma^l}{l!}
        \text{ for } i > 1
    \end{equation}
    From it, we get the recurrence relation $\eqref{recurrence relation C}$.
    The initial condition implies that $E_1 ( 0 ) = I_m$ and $E_i ( 0 ) = 0_m$ for $i > 1$.
    Then, $X_1 ( 0 ) = I_m$ and $X_i ( 0 ) = 0_m$, so that $C_{1,0} = I_m$ and $C_{i, 0} = 0_m$ for $i > 1$.
    Therefore, $E_1 ( \sigma ) = e^{\sigma M_1 }$ and $E_i ( \sigma ) = e^{\sigma M_1 } \sum_{j = 1}^{i - 1} C_{i, j} \frac{\sigma^j}{j!}$ for $i > 1$.
    Since  $E (\sigma ) = e^{\sigma M}$, we only need to evaluate the matrix functions $E_i$ at $\sigma = 1$ to obtain the blocks of $e^{M}$.
\end{proof}

Applying $C_{1,0} = I_m$ and $C_{i, 0} = 0_m$ to Eq. \eqref{recurrence relation C} with $l = 0$, we obtain $C_{i, 1} = M_i$ for $i > 1$.
Substituting this result into Eq. \eqref{recurrence relation C} with $l = 1$, we determine $C_{i, 2} = \sum_{j = 2}^{i - 1} M_j M_{i - j + 1}$ for $i > 2$, which in turn allows us to calculate $E_2 = e^{M_1} M_2$ and $E_3 = e^{M_1} ( M_3 + 1/2 M_2^2 )$.

\begin{lemma}
\label{lemma: g lambda-alg 1st}
    Let $\lambda_1, \ldots, \lambda_m \in \mathbb{R}$.
    Then,
    \begin{equation}
        \exp \left[\begin{array}{cccc}
            \lambda_1 & \lambda_2 & \cdots & \lambda_m \\
            & \lambda_1 & \cdots & \lambda_{m - 1} \\
            && \ddots & \vdots \\
            &&& \lambda_1 \\
        \end{array}\right]
        = \left[\begin{array}{cccc}
            E_1 & E_2 & \cdots & E_m \\
            & E_1 & \cdots & E_{m - 1} \\
            && \ddots & \vdots \\
            &&& E_1 \\
        \end{array}\right] \text{,} 
    \end{equation}
    where
    \begin{equation}        
        E_1 = e^{\lambda_1} \text{,}
        E_i = e^{\lambda_1} \sum_{j = 1}^{i - 1} \frac{C_{i, j}}{j !} \text{,}
        C_{1, 0} = 1 \text{,}
        C_{i, 0} = 0 \text{,}
        C_{i, l + 1}
        = \sum_{j = 2}^{i - l} \lambda_j C_{i - j + 1, l}
    \end{equation}
    for all $i \in \{ 2, \ldots, m \}$, $l \in \{0, \ldots, i - 2\}$.
\end{lemma}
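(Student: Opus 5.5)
This lemma is the special case $m \to 1$ of Theorem \ref{theorem: upper triangular syst lin matrix diff eq}. The plan is to apply that theorem with block size $1$ and the number of blocks equal to $m$. Concretely, we set $M_i = \lambda_i \in \mathbf{M}_1 \cong \mathbb{R}$ for $i \in \{1, \ldots, m\}$.

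The only hypothesis to check is that $\{M_1, \ldots, M_m\}$ is a set of pairwise commuting matrices. This holds trivially, since $1 \times 1$ matrices are real numbers and real multiplication is commutative.

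Under the identification $\mathbf{M}_1 = \mathbb{R}$, we have $I_1 = 1$ and $0_1 = 0$, and matrix products become ordinary products of reals. The block upper-triangular Toeplitz matrix of the theorem is then exactly the scalar upper-triangular Toeplitz matrix in the statement. The exponential $e^{M_1}$ becomes $e^{\lambda_1}$.

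Substituting into the formulas of Theorem \ref{theorem: upper triangular syst lin matrix diff eq} gives the conclusion directly:
\begin{itemize}
\item $E_1 = e^{\lambda_1}$;
\item $E_i = e^{\lambda_1}\sum_{j=1}^{i-1} C_{i,j}/j!$;
\item $C_{1,0} = 1$ and $C_{i,0} = 0$;
\item the recurrence \eqref{recurrence relation C} reads $C_{i,l+1} = \sum_{j=2}^{i-l} \lambda_j C_{i-j+1,l}$, with the same index ranges $i \in \{2,\ldots,m\}$ and $l \in \{0,\ldots,i-2\}$.
\end{itemize}

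There is no real obstacle here. The one point requiring care is that the integer called $n$ in Theorem \ref{theorem: upper triangular syst lin matrix diff eq} (the number of blocks) corresponds to $m$ here, while the theorem's block size $m$ is set to $1$. The index ranges must be transcribed accordingly.

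As a sanity check, one could instead argue directly. Write the matrix as $\lambda_1 I_m + N$, where $N = \sum_{j\ge 2}\lambda_j N_1^{\,j-1}$ and $N_1$ is the nilpotent shift. Since $\lambda_1 I_m$ commutes with $N$, we get $e^{\lambda_1}e^{N}$. The series for $e^N$ terminates because $N^m = 0$, and expanding $N^l/l!$ yields coefficients that satisfy the same recurrence for $C_{i,l}$. We do not need this route, since the theorem already covers the case.
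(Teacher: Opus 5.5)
Your proposal is correct and follows the paper's route. The paper gives no separate proof of this lemma: it presents it as the immediate specialization of Theorem~\ref{theorem: upper triangular syst lin matrix diff eq} to $1\times 1$ blocks $M_i=\lambda_i$, where commutativity is automatic, exactly as you do (with the theorem's block count $n$ renamed $m$).
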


\begin{lemma}
\label{lemma: g Lambda-alg 1st}
    Let
    \begin{equation}
        \Lambda_1 = \left[\begin{array}{rr}
             a_1 & -b_1  \\
             b_1 & a_1 \\ 
        \end{array}\right], \ldots,
        \Lambda_n = \left[\begin{array}{rr}
             a_n & -b_n  \\
             b_n & a_n \\ 
        \end{array}\right] \in \mathbf{M}_2 \text{.}
    \end{equation}
    Then,
    \begin{equation}
        \exp \left[\begin{array}{cccc}
            \Lambda_1 & \Lambda_2 & \cdots & \Lambda_n \\
            & \Lambda_1 & \cdots & \Lambda_{n-1} \\
            && \ddots & \vdots \\
            &&& \Lambda_1 \\
        \end{array}\right]
        = \left[\begin{array}{cccc}
            G_1 & G_2 & \cdots & G_n \\
            & G_1 & \cdots & G_{n-1} \\
            && \ddots & \vdots \\
            &&& G_1 \\
        \end{array}\right] ,
    \end{equation}
    where
    \begin{equation}
    \label{U and V}
    \begin{aligned}
        G_1
    &   =\left[\begin{array}{rr}
            e^{a_1} \cos b_1 & -e^{a_1} \sin b_1  \\
            e^{a_1} \sin b_1 & e^{a_1} \cos b_1 \\ 
        \end{array}\right],
    &   G_i
    &   =\left[\begin{array}{rr}
            E_i & -F_1  \\
            F_1 & E_1 \\ 
        \end{array}\right],
    \\  E_i & = \sum _{j = 1} ^{i - 1} \frac{e^{a_1}}{j !} \left( C_{i, j} \cos b_1 - D_{i, j} \sin b_1 \right),
    &   F_i & = \sum _{j = 1} ^{i - 1} \frac{e^{a_1} }{j !} \left( D_{i, j} \cos b_1 + C_{i, j} \sin b_1 \right),
    \\  C_{1,0} & = 1 ,
    &   D_{1,0} & = 0 ,
    \\  C_{i,0} & = 0 ,
    &   D_{i,0} & = 0 ,
    \\  C _{i, l + 1} & = \sum _{j = 2} ^{i - l} \left( a_j C_{i - j + 1, l} - b_j D_{i - j + 1, l} \right) ,
    &   D _{i, l + 1} & = \sum _{j = 2} ^{i - l} \left( b_j C_{i - j + 1, l} + a_{j + 1} D_{i - j + 1, l} \right)
    \end{aligned}
    \end{equation}
    for all $i \in \{ 2, \ldots, n \}$, $l \in \{ 0, \ldots, n - 2 \}$.
\end{lemma}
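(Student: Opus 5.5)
The plan is to obtain this lemma as a direct specialization of Theorem \ref{theorem: upper triangular syst lin matrix diff eq} with $m = 2$ and $M_i = \Lambda_i$. The hypothesis of that theorem requires the $\Lambda_i$ to commute pairwise. This holds because, by Lemma \ref{lemma: centralizer jordan block 1st Lambda}, every matrix of the form $\left[\begin{array}{rr} x & -y \\ y & x \end{array}\right]$ lies in $\mathcal{C}(\Lambda)$, and this set is exactly the commutative algebra $\mathbb{R}[\Lambda]$ (for $\beta > 0$). So Theorem \ref{theorem: upper triangular syst lin matrix diff eq} already gives the block upper triangular Toeplitz form of the exponential, with blocks
\begin{equation*}
G_1 = e^{\Lambda_1}, \qquad G_i = e^{\Lambda_1} \sum_{j=1}^{i-1} \frac{\mathcal{C}_{i,j}}{j!},
\end{equation*}
where the $2\times 2$ matrices $\mathcal{C}_{i,j}$ satisfy the matrix recurrence \eqref{recurrence relation C} with $M_j = \Lambda_j$. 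What remains is to write these blocks explicitly in real coordinates.

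To do this I will use the algebra isomorphism
\begin{equation*}
\phi : \left[\begin{array}{rr} x & -y \\ y & x \end{array}\right] \mapsto x + i y
\end{equation*}
from $\mathcal{C}(\Lambda)$ onto $\mathbb{C}$. By induction on $l$, using the recurrence \eqref{recurrence relation C}, every $\mathcal{C}_{i,l}$ is a sum of products of the $\Lambda_j$ and the identity. Hence each $\mathcal{C}_{i,l}$ lies in $\mathcal{C}(\Lambda)$, and I can write
\begin{equation*}
\mathcal{C}_{i,l} = \left[\begin{array}{rr} C_{i,l} & -D_{i,l} \\ D_{i,l} & C_{i,l} \end{array}\right].
\end{equation*}
Applying $\phi$ turns the matrix recurrence into a complex one:
\begin{equation*}
C_{i,l+1} + i D_{i,l+1} = \sum_{j=2}^{i-l} (a_j + i b_j)\,(C_{i-j+1,l} + i D_{i-j+1,l}).
\end{equation*}
Separating real and imaginary parts gives the stated recurrences for $C$ and $D$, and the initial values $C_{1,0} = 1$, $D_{1,0} = 0$, $C_{i,0} = D_{i,0} = 0$ come from $\mathcal{C}_{1,0} = I_2$ and $\mathcal{C}_{i,0} = 0_2$.

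Next, $\phi(e^{\Lambda_1}) = e^{a_1 + i b_1} = e^{a_1}(\cos b_1 + i \sin b_1)$. This gives the stated form of $G_1$, either because $\phi$ preserves power series or by the elementary computation $e^{\Lambda_1} = e^{a_1 I_2} e^{b_1 K}$ with $K$ the rotation generator. Multiplying $e^{a_1}(\cos b_1 + i\sin b_1)$ by $(C_{i,j} + i D_{i,j})/j!$ and summing over $j$ yields
\begin{equation*}
\phi(G_i) = E_i + i F_i,
\end{equation*}
with $E_i$ and $F_i$ as in \eqref{U and V}. Pulling back through $\phi$ then gives $G_i = \left[\begin{array}{rr} E_i & -F_i \\ F_i & E_i \end{array}\right]$.

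The only real obstacle is bookkeeping. I will check index ranges, and I expect the computation to show that the intended entries are $E_i, F_i$ in $G_i$ and the coefficient $a_j$ (not $a_{j+1}$) in the $D$-recurrence. The complex-multiplication derivation produces exactly these, so I will state the result in that corrected form.
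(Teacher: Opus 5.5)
Your proposal is correct and follows the paper's route. The paper gives no separate proof; it presents the lemma as the specialization $m=2$, $M_j=\Lambda_j$ of Theorem~\ref{theorem: upper triangular syst lin matrix diff eq}, and your identification of $\left[\begin{array}{rr} x & -y \\ y & x \end{array}\right]$ with $x+iy$ is simply a clean way to do the resulting $2\times 2$ bookkeeping. Your corrections are also right: the complex recurrence forces $G_i$ to have entries $E_i$ and $\pm F_i$ (not $E_1$ and $\pm F_1$), and the $D$-recurrence to carry $a_j D_{i-j+1,l}$ rather than $a_{j+1} D_{i-j+1,l}$.
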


\begin{lemma}
\label{lemma: exp matrices S}
    Let $S_0, \dots, S_{n - 2}$ be the matrices defined in Eq. \eqref{def matrices S} and let $\lambda_0, \ldots, \lambda_{n - 2} \in \mathbb{R}$.
    Then,
    \begin{equation}
    \label{exp matrices S}
        e^{ \lambda_0 S_0 + \ldots + \lambda_i S_i } = \operatorname{diag} [ e^{ \lambda_0 + \ldots + \lambda_i } I_{ n - 1 - i }, e^{ \lambda_0 + \ldots + \lambda_{ i - 1} - (n - 1 - i) \lambda_i }, \ldots, e^{ -(n - 1) \lambda_0 } ]
    \end{equation}
    for some $i \in \{ 1, \ldots, n - 2 \}$.
\end{lemma}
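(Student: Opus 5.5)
Since all $S_j$ are diagonal, the linear combination $\lambda_0 S_0 + \ldots + \lambda_i S_i$ is a diagonal matrix, and the exponential of a diagonal matrix is the diagonal matrix of the exponentials of its entries. Equivalently, the $S_j$ pairwise commute, so $e^{\lambda_0 S_0 + \ldots + \lambda_i S_i} = e^{\lambda_0 S_0} \cdots e^{\lambda_i S_i}$. The whole proof therefore reduces to computing the diagonal entries of $D = \lambda_0 S_0 + \ldots + \lambda_i S_i$.

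I will read off the $k$-th diagonal entry of each $S_j$ from Eq. \eqref{def matrices S}. Since $S_j = \operatorname{diag}[ I_{n-1-j}, -n+1+j, 0_j ]$, its $k$-th diagonal entry is $1$ for $k \le n-1-j$, it is $-(n-1-j)$ for $k = n-j$, and it is $0$ for $k > n-j$. I then split the positions $k \in \{1, \ldots, n\}$ into two groups. For $k \le n-1-i$ we have $k \le n-1-j$ for every $j \le i$, so the entry of $D$ is $\lambda_0 + \ldots + \lambda_i$; this produces the block $e^{\lambda_0 + \ldots + \lambda_i} I_{n-1-i}$. For $k = n-j_0$ with $j_0 \in \{0, \ldots, i\}$, the matrices $S_j$ with $j < j_0$ contribute $\lambda_j$ (because $k = n - j_0 \le n-1-j$). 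The matrix $S_{j_0}$ contributes $-(n-1-j_0)\lambda_{j_0}$. The matrices with $j > j_0$ contribute $0$ (because $k > n-j$). The entry is therefore $\lambda_0 + \ldots + \lambda_{j_0-1} - (n-1-j_0)\lambda_{j_0}$.

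Listing these entries for $j_0 = i, i-1, \ldots, 0$ as $k$ increases gives exactly the remaining entries of Eq. \eqref{exp matrices S}. They run from $e^{\lambda_0 + \ldots + \lambda_{i-1} - (n-1-i)\lambda_i}$ at $k = n-i$ down to $e^{-(n-1)\lambda_0}$ at $k = n$. Exponentiating entrywise then finishes the proof.

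The only delicate point is the index bookkeeping: the boundary position $k = n-j$ must be assigned correctly, and the $n-1-i$ leading entries must be confirmed to be the only ones shared by all the $S_j$. I will verify this by checking the case $i = 1$ against the explicit matrices $S_0$ and $S_1$ before stating the general pattern.
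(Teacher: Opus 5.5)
Your proof is correct, but it is organized differently from the paper's.

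The paper argues by induction on $i$. It starts from $e^{\lambda_0 S_0} = \operatorname{diag}[e^{\lambda_0} I_{n-1}, e^{-(n-1)\lambda_0}]$. Using that the $S_j$ commute, it then writes $e^{\lambda_0 S_0 + \cdots + \lambda_{i+1} S_{i+1}} = e^{\lambda_0 S_0 + \cdots + \lambda_i S_i}\, e^{\lambda_{i+1} S_{i+1}}$, with $e^{\lambda_{i+1} S_{i+1}} = \operatorname{diag}[e^{\lambda_{i+1}} I_{n-2-i}, e^{-(n-2-i)\lambda_{i+1}}, I_{i+1}]$. Each step therefore only splits the leading block $e^{\lambda_0+\cdots+\lambda_i} I_{n-1-i}$ into $e^{\lambda_0+\cdots+\lambda_{i+1}} I_{n-2-i}$ plus one new entry, and leaves the trailing entries untouched.

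You instead compute the diagonal of $\lambda_0 S_0 + \cdots + \lambda_i S_i$ position by position and exponentiate once. This needs neither the induction nor the product rule for exponentials of commuting matrices. It also yields an explicit formula for every entry, including those hidden behind the $\ldots$ in the statement: position $n-j$ carries $e^{\lambda_0+\cdots+\lambda_{j-1}-(n-1-j)\lambda_j}$ for $j = 0, \ldots, i$. The paper's induction preserves this pattern but never writes it out.

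Your bookkeeping is right: the $k$-th entry of $S_j$ is $1$ for $k \le n-1-j$, $-(n-1-j)$ at $k = n-j$, and $0$ beyond. So the planned check of $i=1$ is only a sanity check, not a needed step. Your argument also works uniformly for every $i \in \{0, \ldots, n-2\}$, which is what the statement's ``for some $i$'' is meant to say.
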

\begin{proof}
    Let $E_0 = e^{ \lambda_0 S_0}$ and $E_i = e^{ \lambda_0 S_0 + \ldots + \lambda_i S_i }$ for all $i \in \{ 1, \ldots, n - 2 \}$.
    We will demonstrate Eq. \eqref{exp matrices S} by mathematical induction.
    The exponential of $S_0$ is $e^{ \lambda_0 S_0 } = \operatorname{diag} [ e^{ \lambda_0 } I_{ n - 1 }, e^{ - (n - 1) \lambda_0 } ]$.
    Assume that Eq. \eqref{exp matrices S} holds for some value $i < n - 2$, we will prove that it holds for $i + 1$.
    We partition $e^{ \lambda_1 S_1 + \ldots + \lambda_i S_i }$ as
    \begin{equation}
        E_i = \operatorname{diag} [ e^{ \lambda_0 + \ldots + \lambda_i } I_{ n - 2 - i }, e^{ \lambda_0 + \ldots + \lambda_i }, e^{ \lambda_0 + \ldots + \lambda_{ i - 1} - (n - 1 - i) \lambda_i }, \ldots, e^{ -(n - 1) \lambda_0 } ]
    \text{.}
    \end{equation}
    Then,
    \begin{equation}        
    \begin{aligned}
        E_{ i + 1}
    &   = E_i e^{ \lambda_{ i + 1 } S_{ i + 1 } }
        = E_i \operatorname{diag} [ e^{ \lambda_{ i + 1 } } I_{ n - 2 - i }, e^{ - (n - 2 - i) \lambda_{ i + 1 } }, I_{ i + 1 } ]
    \\& = \operatorname{diag} [ e^{ \lambda_0 + \ldots + \lambda_{ i + 1} } I_{ n - 2 - i }, e^{ \lambda_0 S_0 + \ldots + \lambda_i - (n - 2 - i) \lambda_{ i + 1} }, e^{ \lambda_0 + \ldots + \lambda_{ i - 1} - (n - 1 - i) \lambda_i }, \ldots, e^{ -(n - 1) \lambda_0 }  ]
    \text{.}
    \end{aligned}
    \end{equation}
\end{proof}

\section{Examples}  %
\label{section: example}

In this section, we will utilize the mathematical tools developed to build exact solutions to the EFE.

In order to build an exact solution to the EFE using our method, we need a set $\{ A_a \}$ of pairwise commuting matrices, a constant matrix $g_0$ that belongs to $\mathcal{I} \{ A_a \}$, a set $\{ \xi^a \}$ of solutions to the generalized Laplace equation \eqref{gen Laplace eq} and a function $f$, which is a solution to Eq. \eqref{SL invariant field eq f}.
To determine the set $\{ A_a \}$, we introduce two methods: the first is ideal for two matrices, and the second consists of finding a commutative algebra contained in an equivalence class of $\mathfrak{sl} ( n, \mathbb{R} )$.
Using the method introduced by \cite{Sarmiento-Alvarado2023}, we compute the equivalence classes of $\mathfrak{sl} ( 3, \mathbb{R} )$ and $\mathfrak{sl} ( 4, \mathbb{R} )$ and report them in Tables \ref{table: eq classes sl3} and \ref{table: eq classes sl4}, respectively.
The matrices $A_1$, $A_2$ and $g_0$ are shown in second, third, and fourth columns of both tables.
The commutative algebras $\mathfrak{A}$ contained in $\mathfrak{sl} ( 3, \mathbb{R} )$ and $\mathfrak{sl} ( 4, \mathbb{R} )$ are shown in Tables \ref{table: comm alg sl3} and \ref{table: comm alg sl4}, respectively.
Specifically, there are four such algebras in $\mathfrak{sl} ( 3, \mathbb{R} )$ and nine in $\mathfrak{sl} ( 4, \mathbb{R} )$.
The second column of these tables shows the exponentials of the representatives of these commutative algebras.

In what follows, we will determine the parameters $\xi^a$ and the function $f$.
To solve the generalized Laplace equation, we write it in terms of the Boyer-Lindquist coordinates:
\begin{equation}
\label{def Boyer-Lindquist coordinates}
    \rho = \sqrt{ r^2 - 2 m r + \sigma^2 } \sin \theta
    \text{ and }
    \zeta = ( r - m ) \cos \theta
\text{,}
\end{equation}
where $m$ and $\sigma$ are constant parameters, $0 < r$ and $0 < \theta < \pi$.
Then, it takes the form 
\begin{equation}
\label{Laplace eq Boyer-Lindquist coordinates}
    ( ( r^2 - 2 m r + \sigma^2 ) \xi^a_{, r} ) _{, r} + \frac{1}{\sin \theta} ( \xi^a_{, \theta} \sin \theta ) _{, \theta} = 0
\text{.}
\end{equation}
If $\xi^a ( r, \theta ) = X^a (r)+ Y^a ( \theta )$, then the solutions of Eq. \eqref{Laplace eq Boyer-Lindquist coordinates} are
\begin{align}
\label{par sol m2 geq}
    \xi^a ( r, \theta )
&   = \mathscr{C}^a \ln \rho 
    + \frac{\mathscr{D}^a}{2 \sqrt{ m^2 - \sigma^2 }} \ln \frac{r_-}{r_+}
    + \mathscr{E}^a \ln \tan \frac{\theta}{2}
    + \xi^a_0
&   \text{for } m^2 > \sigma^2,
\\\label{par sol m2 leq}
    \xi^a ( r, \theta )
&   = \mathscr{C}^a \ln \rho 
    + \frac{\mathscr{D}^a}{\sqrt{ \sigma^2 - m^2 }} \Psi (r)
    + \mathscr{E}^a \ln \tan \frac{\theta}{2}
    + \xi^a_0
&   \text{for } m^2 < \sigma^2,
\\\label{par sol m2 eq}
    \xi^a ( r, \theta )
&   = \mathscr{C}^a \ln \rho 
    - \frac{\mathscr{D}^a}{r - m}
    + \mathscr{E}^a \ln \tan \frac{\theta}{2}
    + \xi^a_0
&   \text{for } m^2 = \sigma^2,
\end{align}
where $\mathscr{C}^a, \mathscr{D}^a, \mathscr{E}^a, \xi^a_0$ are real constants, $\Psi (r) = \tan^{-1} \frac{r - m}{\sqrt{\sigma^2 - m^2}}$ and $r_\pm = r - m \pm \sqrt{m^2 - \sigma^2}$.
Once we know $\xi^a ( r, \theta )$, we determine $f$ by solving Eqs. \eqref{SL invariant field eq f}, which, in Boyer-Lindquist coordinates, change to
\begin{equation}\label{diff eq f Boyer-Lindquist coordinates}
\begin{aligned}
    \left( \ln f \rho^{1-1/n} \right)_{, r} & = \frac{\operatorname{tr} A_a A_b}{4} \frac{
        (r^2 -2 m r + \sigma^2) (r - m) \sin ^2 \theta
    }{
        ( r - m )^2 + ( \sigma^2 - m^2 ) \cos^2 \theta
    } \left[
         \xi^a_{, r} \xi^b_{, r} 
    + \frac{
            2 \cot \theta \xi^a_{, r} \xi^b_{, \theta}
        }{
            r - m
        }
    \right.
\\& \left.
        - \frac{
            \xi^a_{, \theta} \xi^b _{, \theta}
        }{
            r^2 -2 m r + \sigma^2
        }
    \right]
\text{,}
\\  \left( \ln f \rho^{1-1/n} \right)_{, \theta}
&   = - \frac{\operatorname{tr} A_a A_b}{4} \frac{
        (r^2 -2 m r + \sigma^2)^2 \sin \theta \cos \theta
    }{
        ( r - m )^2 + ( \sigma^2 - m^2 ) \cos^2 \theta
    } \left[
        \xi^a_{, r} \xi^b_{, r}
    - \frac{
            2 (r - m) \tan \theta \xi^a_{, r} \xi^b_{, \theta}
        }{
            r^2 -2 m r + \sigma^2
        }
    \right.
\\& \left.
    - \frac{
            \xi^a_{, \theta} \xi^b_{, \theta}
        }{
            r^2 -2 m r + \sigma^2
        }
    \right]
\text{.}
\end{aligned}
\end{equation}
Therefore,
\begin{align}
    f  ( r, \theta )
&   = \mathscr{F} \rho^\mathfrak{a}
    \tan^\mathfrak{c} \frac{\theta}{2}
    \frac{\Delta^{\mathfrak{d} - \mathfrak{f}} \sin^{ 2\mathfrak{f} } \theta }{ ( r^2 - 2 m r + \sigma^2 )^\mathfrak{d} }
    \left( \frac{r_-}{r_+} \right)^\mathfrak{b}
    \left( \frac{R_-}{R_+} \right)^\mathfrak{e}
&   \text{for } m^2 > \sigma^2 ,
\\  f  ( r, \theta )
&   = \mathscr{F} \rho^\mathfrak{a}
    \tan^\mathfrak{c} \frac{\theta}{2}
    \frac{\Delta^{\mathfrak{d} - \mathfrak{f}} \sin^{ 2\mathfrak{f} } \theta }{ ( r^2 - 2 m r + \sigma^2 )^\mathfrak{d} }
    e^{ \mathfrak{b} \Psi + \mathfrak{e} \Theta }
&   \text{for } m^2 < \sigma^2 ,
\\  f  ( r, \theta )
&   = \mathscr{F} \rho^\mathfrak{a}
    \tan^\mathfrak{c} \frac{\theta}{2}
    \frac{\sin^\mathfrak{f} \theta}{( r - m )^\mathfrak{f}}
    \exp \left(
        \frac{ \mathfrak{b} + \mathfrak{e} \cos \theta }{r - m}
        + \frac{\mathfrak{d} \sin^2 \theta}{( r - m )^2}
    \right)
&   \text{for } m^2 = \sigma^2 ,
\end{align}
where $\Theta ( r, \theta ) = \tan^{-1} \frac{r - m}{\sqrt{\sigma^2 - m^2} \cos \theta}$,
$R_\pm = r - m \pm \sqrt{m^2 - \sigma^2} \cos \theta$, $\Delta = ( r - m )^2 + ( \sigma^2 - m^2 ) \cos^2 \theta$,
$\mathscr{F}$ is a positive constant,
$\mathfrak{a} = \frac{ \mathscr{C}^a \mathscr{C}^b }{4} \operatorname{tr} A_a A_b - 1 + \frac{1}{n}$,
$\mathfrak{c} = \frac{ \mathscr{C}^a \mathscr{E}^b }{2} \operatorname{tr} A_a A_b$,
$\mathfrak{f} = \frac{ \mathscr{E}^a \mathscr{E}^b }{8}\operatorname{tr} A_a A_b$,
\begin{align*}
    \mathfrak{b}
&   = \frac{\mathscr{C}^a \mathscr{D}^b}{4 \sqrt{ m^2 - \sigma^2 }} \operatorname{tr} A_a A_b,
&   \mathfrak{d}
&   = -\frac{\mathscr{D}^a \mathscr{D}^b}{8 ( m^2 - \sigma^2 )} \operatorname{tr} A_a A_b,
&   \mathfrak{e}
&   = \frac{\mathscr{D}^a \mathscr{E}^b}{4 \sqrt{ m^2 - \sigma^2 }} \operatorname{tr} A_a A_b
&   \text{if } m^2 > \sigma^2,
\\  \mathfrak{b}
&   = \frac{\mathscr{C}^a \mathscr{D}^b}{2 \sqrt{ \sigma^2 - m^2 }} \operatorname{tr} A_a A_b,
&   \mathfrak{d}
&   = \frac{\mathscr{D}^a \mathscr{D}^b}{8 ( \sigma^2 - m^2 )} \operatorname{tr} A_a A_b,
&   \mathfrak{e}
&   = \frac{\mathscr{D}^a \mathscr{E}^b}{2 \sqrt{ \sigma^2 - m^2 }} \operatorname{tr} A_a A_b
&   \text{if } m^2 < \sigma^2,
\\  \mathfrak{b}
&   = -\frac{\mathscr{C}^a \mathscr{D}^b}{2} \operatorname{tr} A_a A_b,
&   \mathfrak{d}
&   = -\frac{\mathscr{D}^a \mathscr{D}^b}{8} \operatorname{tr} A_a A_b,
&   \mathfrak{e}
&   = -\frac{\mathscr{D}^a \mathscr{E}^b}{2} \operatorname{tr} A_a A_b
&   \text{if } m^2 = \sigma^2.
\end{align*}

\begin{example}
\label{example: m2 geq}
Let $A_1 = \operatorname{diag} [ I_{n - 2}, 1, 1 - n ]$ and $A_2 = \operatorname{diag} [ I_{n - 2}, 2 - n, 0 ]$ be a pair of commuting matrices of $\mathfrak{sl} ( n, \mathbb{R} )$ with $n > 2$.
\end{example}
\noindent
We use Lemma \ref{lemma: exp matrices S} to determine $e^{\xi^a A_a} = \operatorname{diag} [ e^{\xi^1 + \xi^2} I_{n - 2}, e^{\xi^1 - (n - 2) \xi^2}, e^{-(n - 1) \xi^1} ]$.
Since $g_0 \in \mathcal{I} ( A_1, A_2 )$, by Lemma \ref{lemma: subspace ia matrices S}, the matrix constant $g_0$ has the form $g_0 = -\operatorname{diag} [ C, C_\phi, C_5 ]$, where $C_\phi, C_5 \in \mathbb{R}$ and $C \in \mathbf{Sym}_{n-2}$.
To obtain a solution to the chiral equation, we choose the solution \eqref{par sol m2 geq} for $\xi^a$, setting $\mathscr{D}^1 = 2 \sqrt{ m^2 - \sigma^2 } \frac{ p_3 }{ n - 1 }$, $\mathscr{D}^2 = 2 \sqrt{ m^2 - \sigma^2 } \frac{ p - q }{ n - 1 }$, $\mathscr{C}^1 = \frac{2}{n ( n - 1 )}$, $\mathscr{C}^2 = -\frac{2}{n - 1}$, $\mathscr{E}^1 = \mathscr{E}^2 = \xi^1_0 = \xi^2_0 = 0$, where $p_3 = (n - 2) p_1 + p_2$.
Then, we make the transformations $A_a \to P A_a P^{-1}$ and $g_0 \to P g_0 P^T$, where
\begin{equation}
\label{permutation matrix}
    P = \left[\begin{array}{rrr}
        & 1 & \\
        I_{n - 2} && \\
        && 1
    \end{array}\right] .
\end{equation}
Hence,
\begin{equation}
    g ( r, \theta ) = -\rho^{-\frac{2}{n} } \operatorname{diag} \left[
        C_\phi \rho^2 \left( \frac{r_-}{r_+} \right)^{p_2},
        \left( \frac{r_-}{r_+} \right)^{p_1} C,
        C_5 \left( \frac{r_+}{r_-} \right)^{p_3} 
    \right].
\end{equation}

In order to find exact solutions to the EFE, we set $\sigma = 0$ and $m > 0$.
If $n > 3$, we partition $C$ as
\begin{equation}
    C = \left[\begin{array}{ll}
        L_1 & L_2 \\
        L_2^T & L_3 \\ 
    \end{array}\right] ,
\end{equation}
where $L_1 \in \mathbb{R}$, $L_2 \in \mathbf{M}_{1, n - 3}$ and $L_3 \in \mathbf{Sym}_{n - 3}$.
Decomposing $g ( r, \theta )$ as in Eq. \eqref{decomposition g}, we obtain 
\begin{align}
\label{sol matrices S n3}
    \hat g_5
&   = g_{t t} dt^2 
    + g_{r r} dr^2
    + g_{\theta \theta} d\theta^2
    + g_{\phi \phi} d\phi^2
    + h_{5 5} ( dx^5 )^2
&   \text{ for } n = 3,
\\  \hat g
&   = \hat g_5
    + h_{i j} ( \omega^i + A_t^i dt ) ( \omega^j + A_t^j dt )
&   \text{ for } n > 3,
\end{align}
where
\begin{equation}
\begin{aligned}    
    g_{t t}
&   = -\left( 1 - \frac{2 m}{r} \right)^{p_1},
&   g_{r r}
&   = \left( 1 + \frac{m^2 \sin^2 \theta}{r ( r - 2 m )} \right)^l
    \left( 1 - \frac{2 m}{r} \right)^{p_2} ,
\\  g_{\phi \phi}
&   = \left( 1 - \frac{2 m}{r} \right)^{ p_2 + 1 }
    r^2 \sin^2 \theta ,
&   g_{\theta \theta}
&   = \left( 1 + \frac{m^2 \sin^2 \theta}{r ( r - 2 m )} \right)^l
    \left( 1 - \frac{2 m}{r} \right)^{ p_2 + 1 }
    r^2
\\  h_{5 5}
&   = \left(
        1
        - \frac{ 2 m }{ r }
    \right)^{-p_3},
&   h_{i j}
&   = \left( 1 - \frac{2 m}{r} \right)^{p_1} L_3,
\end{aligned}
\end{equation}
$A_t^i = L_3^{-1} L_2^T$ for all $i, j \in \{ 6, \ldots, n + 2 \}$ and $l = 1 - \frac{n^2 - 3 n + 2}{2} p_1^2 - ( n - 2 ) p_1 p_2 - p_2^2$.
The constant matrix $L_3$ also belongs to $SL ( n - 3, \mathbb{R} )$.
Since $L_2$ does not have restrictions, $A_t^i$ are arbitrary constants.

If $p = 1$ and $q = -1$, then $l = -\frac{( n - 2 )( n - 3 )}{2}$.
In particular, for $n = 3$, we find $l = 0$, and the metric tensor reduces to
\begin{equation}
    \hat g
    = -\left( 1 - \frac{2 m}{r} \right) dt^2 
    + \frac{ dr^2 }{ 1 - \frac{2 m}{r} }
    + r^2 ( d\theta^2 + \sin^2 \theta d\phi^2 )
    + ( dx^5 )^2.
\end{equation}
For $r >> m$, the components of the metric tensor have the following asymptotic expansions:
\begin{equation}
\begin{aligned}
    g_{t t}
&   = -1
    + 2 p_1 \frac{m}{r}
    - g_{t t}^{(2)} \frac{m^2}{r^2}
    + \ldots,
&   g_{r r}
&   = 1
    - 2 p_2 \frac{m}{r}
    + g_{r r}^{(2)} \frac{m^2}{r^2}
    + \ldots,
\\  g_{\theta \theta}
&   = \left(
        1
        - g_{\theta \theta}^{(1)} \frac{m}{r}
        + g_{\theta \theta}^{(2)} \frac{m^2}{r^2}
        + \ldots
    \right) r^2,
&   g_{\phi \phi}
&   = \left(
        1
        - g_{\theta \theta}^{(1)} \frac{m}{r}
        + p_2 g_{\theta \theta}^{(2)} \frac{m^2}{r^2}
        + \ldots
    \right) r^2 \sin^2 \theta,
\\  h_{5 5}
&   = 1
    + 2 l \frac{m}{r}
   + 2 l ( l + 1 ) \frac{m^2}{r^2}
    + \ldots,
&   h_{i j}
&   = \left(
        p_1
        - p_1 \frac{m}{r}
        + \frac{p_1 ( p_1 - 2 )}{2} \frac{m^2}{r^2}
        + \ldots
    \right) C,
\end{aligned}
\end{equation}
where
$g_{t t}^{(2)} = 2 p_1 ( p_1 - 1 )$,
$g_{r r}^{(2)} = 2 p_2 ( p_2 - 1 ) + l \sin^2 \theta$,
$g_{\theta \theta}^{(1)} = 2 ( p_2 + 1 )$ and
$g_{\theta \theta}^{(2)} = 2 p_2 ( p_2 + 1 ) + l \sin^2 \theta$.

\begin{example}
\label{example: m2 leq}
Let
\begin{equation}
    \left\{
        A_1 = \left[\begin{array}{rrr}
            I_{n - 2} && \\
            & 1 & \\
            && 1 - n \\ 
        \end{array}\right] ,
        A_2 = \left[\begin{array}{rrr}
            I_{n - 2} && \\
            & 2 - n & \\
            && 0
        \end{array}\right] ,
        A_3 = \left[\begin{array}{rrr}
            & -1 & \\
            1 && \\
            && 0_{n - 2}
        \end{array}\right]
    \right\}
\end{equation}
be a subset of pairwise commuting matrices of $\mathfrak{sl} ( n, \mathbb{R} )$ with $n > 3$.
\end{example}
\noindent By Lemmas \ref{lemma: g Lambda-alg 1st} and \ref{lemma: exp matrices S}, the exponential of $\xi^a A_a$ is
\begin{equation}
    e^{ \xi^a A_a } = \left[
        e^{ \xi^1 + \xi^2 }
        \left[\begin{array}{rrrr}
            \cos \xi^3 & -\sin \xi^3 \\
            \sin \xi^3 & \cos \xi^3 \\
        \end{array}\right],
        e^{ \xi^1 + \xi^2 } I_{n - 4},
        e^{ \xi^1 - ( n - 2 ) \xi^2 },
        e^{ -( n - 1 ) \xi^1 }
    \right].
\end{equation}
If $g_0$ is a constant matrix in $\mathcal{I} ( A_1, A_2, A_3 )$, by Theorem \ref{theo: ia jordan block 1st Lambda} and Lemma \ref{lemma: subspace ia matrices S}, $g_0$ has the form
\begin{equation}
    g_0 = - \operatorname{diag} \left[
        \left[\begin{array}{rr}
            C_t & D_t \\
            D_t & -C_t \\
        \end{array}\right],
        C,
        C_\phi,
        C_6
    \right],
\end{equation}
where $C \in \mathbf{Sym}_{n - 4}$ and $C_t, D_t, C_\phi, C_6 \in \mathbb{R}$.
To find a solution to the chiral equation \eqref{chiral eq g}, we make the transformations $A_a \to P A_a P^{-1}$ and $g_0 \to P g_0 P^T$, with $P$ given by Eq. \eqref{permutation matrix}, and choose the solution \eqref{par sol m2 leq} for $\xi^a$, fixing
$m = 0$,
$\mathscr{C}^1 = \frac{2}{n (n - 1)}$,
$\mathscr{C}^2 = -\frac{2}{n - 1}$,
$\mathscr{D}^1 = \sqrt{ \sigma^2 - m^2 } \frac{q_3}{n - 1}$,
$\mathscr{D}^2 = \sqrt{ \sigma^2 - m^2 } \frac{q_1 - q_2}{n - 1}$,
$\mathscr{D}^3 = \sqrt{ \sigma^2 - m^2 } p$,
$\xi^1_0 = -\frac{\pi}{2} \frac{q_3}{n - 1}$,
$\xi^2_0 = -\frac{\pi}{2} \frac{q_1 - q_2}{n - 1}$,
$\mathscr{C}^3 = \mathscr{E}^1 = \mathscr{E}^2 = \mathscr{E}^3 = \xi^3_0 = 0$,
where $q_3 = ( n - 2 ) q_1 + q_2$.
Hence,
\begin{equation}
\label{expA complex eigenvalues}
    g ( r, \theta ) = -\rho^{ -\frac{2}{n} } \left[
        C_\phi \rho^2 \frac{ e^{ q_2 \Psi } }{ e^{ q_2 \frac{\pi}{2} } },
        \frac{ e^{ q_1 \Psi } }{ e^{ q_1 \frac{\pi}{2} } }
        \left[\begin{array}{rrrr}
            U (r) & V (r) \\
            V (r) & -U (r) \\
        \end{array}\right],
        \frac{ e^{ q_1 \Psi } }{ e^{ q_1 \frac{\pi}{2} } } C,
        C_6 \frac{ e^{ q_3 \frac{\pi}{2} } }{ e^{ q_3 \Psi } }
    \right],
\end{equation}
where $U (r) = C_t \cos p \Psi - D_t \sin p \Psi$ and $V (r) = D_t \cos p \Psi + C_t \sin p \Psi$.

The mathematical expression of the trigonometric functions $U (r)$ and $V (r)$ depends on the type of number that $p$ is.
When $p$ is a natural number, $\cos p \Psi$ and $\sin p \Psi$ are related to Chebyshev polynomials of the first and second kind, $T_p$ and $U_p$, respectively, as follows: $\cos p \Psi = T_p ( \cos \Psi )$ and $\sin p \Psi = \sin \Psi U_p ( \cos \Psi )$.
It is convenient to define the functions
\begin{align}
    \mathscr{U}_p ( x )
&   = \mu T_p \left( \frac{1}{\sqrt{ x^2 + 1 }} \right)
    - \frac{\nu x}{\sqrt{ x^2 + 1 }} U_{p - 1} \left( \frac{1}{\sqrt{ x^2 + 1 }} \right),
\\  \mathscr{V}_p ( x )
&   = \nu T_p \left( \frac{1}{\sqrt{ x^2 + 1 }} \right)
    + \frac{\mu x}{\sqrt{ x^2 + 1 }} U_{p - 1} \left( \frac{1}{\sqrt{ x^2 + 1 }} \right)
\end{align}
for all $x \in \mathbb{R}$, where $\mu$ and $\nu$ are constant parameters.
Using the recurrence relations of the Chebyshev polynomials, we find
\begin{equation}
    \mathscr{U}_p ( x )
    = \frac{2}{\sqrt{ x^2 + 1 }} \mathscr{U}_{p - 1} ( x )
    - \mathscr{U}_{p - 2} ( x )
    \text{ and }
    \mathscr{V}_p ( x )
    = \frac{2}{\sqrt{ x^2 + 1 }} \mathscr{V}_{p - 1} ( x )
    - \mathscr{V}_{p - 2} ( x )
\end{equation}
for all $p > 2$.
Some properties of these functions are: $\mathscr{U}_p^2 ( x ) + \mathscr{V}_p^2 ( x ) = \mu^2 + \nu^2$, $\mathscr{U}_p ( 0 ) = \mu$ and $\mathscr{V}_p ( 0 ) = \nu$, $\frac{d \mathscr{U}_p}{d x} ( x ) = - \frac{p}{ x^2 + 1 } \mathscr{V}_p ( x )$ and $\frac{d \mathscr{V}_p}{d x} ( x ) = \frac{p}{ x^2 + 1 } \mathscr{U}_p ( x )$,
\begin{align*}
    \lim _{x \to \pm \infty} \mathscr{U}_p ( x )
&   = (-)^\frac{p}{2} \mu,
&   \lim _{x \to \pm \infty} \mathscr{V}_p ( x )
&   = (-)^\frac{p}{2} \nu
&   \text{for $p$ even,} 
\\  \lim _{x \to \pm \infty} \mathscr{U}_p ( x )
&   = \mp (-)^\frac{p - 1}{2} \nu,
&   \lim _{x \to \pm \infty} \mathscr{V}_p ( x )
&   = \pm (-)^\frac{p - 1}{2} \mu
&   \text{for $p$ odd.} 
\end{align*}
Let $k \in \{ 0, \ldots, p - 1\}$.
The roots of $\mathscr{U}_p$ and $\mathscr{V}_p$ are given by
\begin{equation}
    x_k = \tan \frac{
        \tan^{-1} \mu/\nu + k \pi
    }{p} \text{ and }
    x_k = \tan \frac{
        -\tan^{-1} \nu/\mu + k \pi
    }{p} \text{, respectively.}
\end{equation}

In order to obtain an exact solution to the EFE, we set $m = 0$ and assume that $p \in \mathbb{N}$ and $\sigma > 0$. 
Then, we decompose the matrix $g ( r, \theta )$ as in Eq. \eqref{decomposition g}.
Therefore,
\begin{align}
    \hat g_6
&   = g_{t t} dt^2
    + g_{r r} dr^2
    + g_{\theta \theta} d\theta^2
    + g_{\phi \phi} d\phi^2
    + h_{5 5} ( dx^5 + A_t^5 dt )^2
    + h_{6 6} ( dx^6 )^2
&   \text{for } n = 4,
\\  \hat g
&   = \hat g_6
    + h_{i j} dx^i dx^j
&   \text{for } n > 4,
\end{align}
where
\begin{equation}
\begin{aligned}
    g_{t t}
&   = - \frac{ \mathscr{U}_\infty }{ e^{ q_1 \frac{\pi}{2} } }
    \frac{ e^{ q_1 \Psi ( r ) } }{ \mathscr{U}_p ( r/\sigma ) },
&   g_{r r}
&= \left(
        1
        - \frac{ \sigma^2 \sin^2 \theta }{ r^2 + \sigma^2 }
    \right)^l
    \frac{ e^{ q_2 \Psi ( r ) } }{ e^{ q_2 \frac{\pi}{2} } },
\\  g_{\theta \theta}
&   = \left(
        1
        - \frac{ \sigma^2 \sin^2 \theta }{ r^2 + \sigma^2 }
    \right)^l
    \frac{ e^{ q_2 \Psi ( r ) } }{ e^{ q_2 \frac{\pi}{2} } }
    ( r^2 + \sigma^2 ),
&   g_{\phi \phi}
&   = \frac{ e^{ q_2 \Psi ( r ) } }{ e^{ q_2 \frac{\pi}{2} } }
    ( r^2 + \sigma^2 )
    \sin^2 \theta,
\\  h_{5 5}
&   = - \mathscr{U}_p ( r/\sigma )
    \frac{ e^{ q_1 \Psi ( r ) } }{ e^{ q_1 \frac{\pi}{2} } },
&   A_t^5
&   = - \frac{ \mathscr{V}_p ( r/\sigma ) }{ \mathscr{U}_p ( r/\sigma ) },
\\  h_{6 6}
&   = - \frac{1}{ \mathscr{U}_\infty } \frac{ e^{ q_3 \frac{\pi}{2} } }{ e^{ q_3 \Psi ( r ) } },
&   h_{i j}
&   = \frac{ e^{ q_1 \Psi ( r ) } }{ e^{ q_1 \frac{\pi}{2} } } C
\end{aligned}
\end{equation}
for all $i, j \in \{ 7, \ldots, n + 2 \}$, where
$\mu = C_t$,
$\nu = D_t$,
$8 l = (n - 1) (n - 2) q_1^2 + 2 (n - 2 ) q_1 q_2 + 2 q_2^2  - 2 p^2 + 8$,
$\mathscr{U}_\infty = \lim _{x \to \infty} \mathscr{U}_p ( x ) $.
The constant parameters $\mu$ and $\nu$ satisfy $\mu^2 + \nu^2 = - \mathscr{U}_\infty$.
$C$ is also a constant matrix in $SL ( n - 4, \mathbb{R} )$.

The asymptotic expansions of $\mathscr{U}_p ( x )$ and $\mathscr{V}_p ( x )$ for $x >> 1$ are
\begin{equation}
    \mathscr{U}_p ( x )
    = \mathscr{U}_\infty
    + \frac{ p \mathscr{V}_\infty }{x}
    - \frac{ p^2 \mathscr{U}_\infty }{2 x^2}
    + \ldots \text{ and }
    \mathscr{V}_p ( x )
    = \mathscr{V}_\infty
    - \frac{ p \mathscr{U}_\infty }{x}
    - \frac{ p^2 \mathscr{V}_\infty }{ 2 x^2 }
    + \ldots
\end{equation}
Using them, we find
\begin{equation}
\begin{aligned}
    g_{t t}
&   = -1
    - ( p A_\infty - q_1 ) \frac{\sigma}{r}
    - \frac{ g_{t t}^{(2)}}{2} \frac{\sigma^2}{r^2}
    + \ldots,
&   g_{r r}
&   = 1
    - q_2 \frac{\sigma}{r}
    + \frac{ g_{r r}^{(2)} }{2} \frac{\sigma^2}{r^2}
    + \ldots,
\\  g_{\theta \theta}
&   = r^2 \left(
        1
        - q_2 \frac{\sigma}{r} 
        + \frac{
            2 
            + g_{r r}^{(2)}
        }{2} \frac{\sigma^2}{r^2}
        + \ldots
    \right),
&   g_{\phi \phi}
&   = r^2 \sin^2 \theta \left(
        1
        - q_2 \frac{\sigma}{r} 
        + \frac{
            2 
            + q_2^2
        }{2} \frac{\sigma^2}{r^2}
        + \ldots
    \right),
\\  h_{5 5}
&   = -\mathscr{U}_\infty
    + h_{5 5}^{(1)} \frac{\sigma}{r}
    - h_{5 5}^{(2)} \frac{\sigma^2}{r^2}
    + \ldots,
&   A_t^5
&   = A_\infty
    + \frac{ p }{ \mathscr{U}_\infty } \frac{\sigma}{r}
    + A_t^{(2)} \frac{\sigma^2}{r^2}
    + \ldots,
\\  h_{6 6}
&   = - \frac{1}{\mathscr{U}_\infty}
    - \frac{q_3 }{\mathscr{U}_\infty} \frac{\sigma}{r}
    - \frac{q_3^2}{2 \mathscr{U}_\infty} \frac{\sigma^2}{r^2}
    + \ldots,
\end{aligned} 
\end{equation}
where
$\mathscr{V}_\infty = \lim _{x \to \infty} \mathscr{V}_p ( x ) $,
$A_\infty = - \frac{ \mathscr{V}_\infty }{ \mathscr{U}_\infty }$,
$g_{t t}^{(2)} = q_1^2 + p^2 + 2 p A_\infty ( p A_\infty - q_1 )$,
$g_{r r}^{(2)} = q_2^2 - 2 l \sin^2 \theta$,
$h_{5 5}^{(1)} = \mathscr{U}_\infty ( p A_\infty + q_1 )$,
$h_{5 5}^{(2)} = \mathscr{U}_\infty ( q_1^2 - p^2 +2 p q_1 A_\infty )$ and
$A_t^{(2)} = A_\infty ( p^2 + A_\infty^2 )$.

If $p = 2$, $\nu = 0$ and $C = I_{n - 3}$, we obtain
\begin{equation}
    \hat g
    = \frac{ r^2 + \sigma^2 }{ r^2 - \sigma^2 } dt^2
    + dr^2
    + ( r^2 + \sigma^2 ) d\Omega^2
    + \frac{ r^2 - \sigma^2 }{ r^2 + \sigma^2 } \left(
        dx^5
        + \frac{ 2 \sigma r dt }{ r^2 - \sigma^2 }
    \right)^2
    + ( dx^6 )^2
    + \ldots
    + ( dx^{n + 2} )^2.
\end{equation}
In Figure \ref{fig: gtt m2 leq}, we plot the behavior of $g_{tt} ( r )$ assuming $\nu \neq 0$.
Observe that the singular points of $g_{tt}$ are greater than 1 when $\nu$ is negative, while they are less than 1 for $\nu > 0$.
Also, $g_{tt}$ approaches more rapidly to -1 when $\nu$ is positive.

\begin{figure}[ht]
    \centering
    \begin{subfigure}{0.49\textwidth}
        \centering
        \includegraphics[width=1\textwidth]{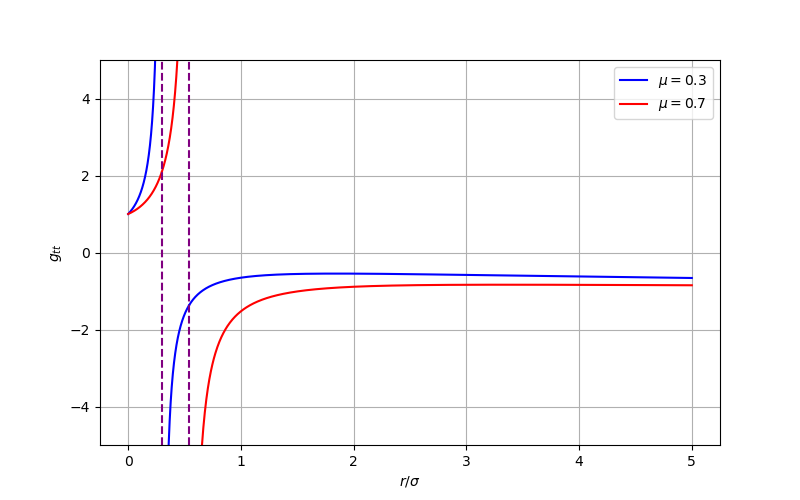}
        \caption{$\nu = 0.458258$.}
        \label{subfig: gtt m2 leq mu3 nupos}
    \end{subfigure}
    \hfill
    \begin{subfigure}{0.49\textwidth}
        \centering
        \includegraphics[width=1\textwidth]{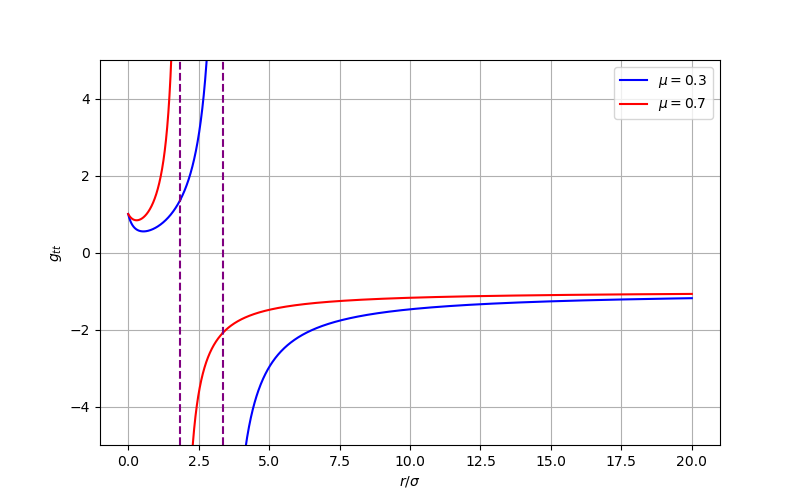}
        \caption{$\nu = -0.458258$.}
        \label{subfig: gtt m2 leq mu7 nupos}
    \end{subfigure}

    \caption{Behavior of $g_{tt} (r)$ with $p = 2$ and $q_1 = 0$.}
    \label{fig: gtt m2 leq}
\end{figure}

\begin{example}
\label{example: m2 eq}
Let
\begin{equation}
    \left\{
        A_1 = \left[\begin{array}{rrr}
            I_{n - 2} && \\
            & 1 & \\
            && 1 - n \\ 
        \end{array}\right] ,
        A_2 = \left[\begin{array}{rrr}
            I_{n - 2} && \\
            & 2 - n & \\
            && 0
        \end{array}\right] ,
        A_3 = \left[\begin{array}{rrr}
            0 & 1 & \\
            & 0 & \\
            && 0_{n - 2}
        \end{array}\right]
    \right\}
\end{equation}
be a subset of pairwise commuting matrices of $\mathfrak{sl} ( n, \mathbb{R} )$ with $n > 3$.
\end{example}
\noindent Using Lemmas \ref{lemma: g lambda-alg 1st} and \ref{lemma: exp matrices S}, we determine the exponential of the linear combination $\xi^a A_a$.
Then,
\begin{equation}
    e^{\xi^a A_a} = \operatorname{diag} \left[
        e^{\xi^1 + \xi^2}
        \left[\begin{array}{rr}
            1 & \xi^3 \\
            & 1 \\
        \end{array}\right],
        e^{\xi^1 + \xi^2} I_{n - 4},
        e^{\xi^1 - (n - 2 ) \xi^2},
        e^{- (n - 1 ) \xi^1}
    \right].
\end{equation}
By Theorem \ref{theo: ia jordan block 1st lambda}, any constant matrix $g_0$ that belongs to $\mathcal{I} ( A_1, A_2, A_3 )$ has the form
\begin{equation}
    g_0 = -\operatorname{diag} \left[
        \left[\begin{array}{rr}
            C_t & D_t \\
            D_t & \\
        \end{array}\right],
        C,
        C_\phi,
        C_6
    \right],
\end{equation}
where $C_\phi, C_t, D_t, C_6 \in \mathbb{R}$ and $C \in \mathbf{Sym}_{n - 4}$.
In this example, the solutions of Eq. \eqref{gen Laplace eq} are given by Eq. \eqref{par sol m2 eq}, with $\mathscr{C}^1 = \frac{2}{( n - 1 ) n}$, $\mathscr{C}^2 = -\frac{2}{n - 1}$, $\mathscr{D}^1 = \frac{q_3}{n - 1}$, $\mathscr{D}^2 = \frac{q_1 - q_2}{n - 1}$, $\mathscr{C}^3 = \mathscr{E}^1 = \mathscr{E}^2 = \mathscr{E}^3 = \xi^1 = \xi^2 = \xi^3 = 0$, where $q_1$ and $q_2$ are constant parameters, $q_3 = (n - 2) q_1 + q_2$.
Then, we make the transformations $A_a \to P A_a P^{-1}$ and $g_0 \to P g_0 P^T$, with
\begin{equation}
    P = \left[\begin{array}{rrrrr}
        &&& 1 & \\
        & 1 &&& \\
        1 &&&& \\
        && I_{n - 4} && \\
        &&&& 1 \\
    \end{array}\right].
\end{equation}
Thus,
\begin{equation}
    g ( r, \theta ) = -\rho^{-\frac{2}{n}} \operatorname{diag} \left[
        C_\phi \rho^2 e^\frac{-q_2}{r - m},
        e^\frac{-q_1}{r - m} \left[\begin{array}{cc}
            & D_t \\
            D_t & C_t - \frac{\mathscr{D}^3 D_t}{r - m} 
        \end{array}\right],
        e^\frac{-q_1}{r - m} C,
        C_6 e^{\frac{q_3}{r - m}}
    \right]
\end{equation}
is a solution of the chiral equation \eqref{chiral eq g}.

In order to obtain exact solutions to the EFE, we decompose $g ( r, \theta )$ as in Eq. \eqref{decomposition g} and set $m = - \frac{\mathscr{D}^3}{D_0} = 2 M$.
Therefore,
\begin{align}
\label{6 dim metric tensor m2 eq}
    \hat g_6
&   = g_{t t} dt^2
    + g_{r r} dr^2
    + g_{\theta \theta} d\theta^2
    + g_{\phi \phi} d\phi^2
    + h_{5 5} ( dx^5 + A_t^5 dt )^2
    + h_{6 6} ( dx^6 )^2
&   \text{for } n = 4,
\\\label{higher dim metric tensor m2 eq}
    \hat g
&   = \hat g_6
    + h_{i j} dx^i dx^j
&   \text{for } n > 4,
\end{align}
where
\begin{equation}    
\begin{aligned}
    g_{tt}
&   = -\left(
        1
        - \frac{2 M}{r}
    \right) e^\frac{-q_1}{r - 2 M},
&   g_{\theta\theta}
&   = \left(
        1
        - \frac{2 M}{r}
    \right)^2
    e^{
        -\frac{q_2}{r - 2 M}
        - \frac{l \sin^2 \theta}{( r - 2 M )^2}
    }
    r^2,
\\  g_{rr}
&   = e^{
        -\frac{q_2}{r - 2 M}
        - \frac{l \sin^2 \theta}{( r - 2 M )^2}
    },
&   g_{\phi\phi}
&   = \left(
        1
        - \frac{2 M}{r}
    \right)^2
    e^\frac{-q_2}{r - 2 M}
    r^2 \sin^2 \theta,
\\  h_{5 5}
&   = D_t^2 \left(
        1
        - \frac{2 M}{r}
    \right)^{-1} e^\frac{-q_1}{r - 2 M},
&   A_t^5
&   = \frac{1}{D_t} \left(
        1
        - \frac{2 M}{r}
    \right)
\\  h_{6 6}
&   = \frac{1}{D_t^2} e^\frac{q_3}{r - 2 M},
&   h_{i j}
&   = e^\frac{-q_1}{r - 2 M} C
\end{aligned}
\end{equation}
for all $i, j \in \{ 7, \ldots, n + 2 \}$, $8 l = ( n - 1 ) ( n - 2 ) q_1^2 + 2 ( n - 2 ) q_1 q_2 + 2 q_2^2$ and $C$ also belongs to $SL ( n - 4, \mathbb{R} )$.

The asymptotic behavior of the components of the metric tensors \eqref{6 dim metric tensor m2 eq} and \eqref{higher dim metric tensor m2 eq} for $r >> 2 M$ are
\begin{equation}    
\begin{aligned}
    g_{t t}
&   = -1
    + \frac{q_1 + 2 M}{r}
    - \frac{q_1^2}{2 r^2}
    + \ldots,
&   g_{r r}
&   = 1
    - \frac{q_2}{r}
    + \frac{ g_{r r}^{(1)} }{2 r^2}
    + \ldots,
\\  g_{\phi \phi}
&   = r^2 \sin^2 \theta \left(
        1
        - \frac{ q_2 + 4 M }{r}
    + \frac{ g_{\phi \phi}^{(2)} }{2 r^2}
    + \ldots
    \right),
&   g_{\theta \theta}
&   = r^2 \left(
        1
        - \frac{ q_2 + 4 M }{r}
    + \frac{ g_{\theta \theta}^{(2)} }{2 r^2}
    + \ldots
    \right),
\\  h_{5 5}
&   = D_t^2
    + \frac{D_t^2 ( 2 M - q_1 )}{r}
   + \frac{ h_{5 5}^{(2)} }{2 r^2}
    + \ldots,
&   h_{6 6}
&   = \frac{1}{ D_t^2 }
    + \frac{q_3}{D_t^2 r}
   + \frac{ h_{6 6}^{(2)} }{2 D_t^2 r^2}
    + \ldots,
\\  h_{i j}
&   = \left(
        1
        - \frac{q_1}{r}
        + \frac{ h_{i j}^{(2)} }{2 r^2}
        + \ldots
    \right) C,
\end{aligned}
\end{equation}
where $g_{r r}^{(1)} = q_2 ( q_2 - 4 M ) - 2 l \sin^2 \theta$,
$g_{\phi \phi}^{(2)} = q_2 ( q_2 + 4 M ) + 8 M^2$,
$g_{\theta \theta}^{(2)} = g_{\phi \phi}^{(2)} - 2l \sin^2\theta$,
$h_{5 5}^{(2)} = D_t^2 ( 8 M ( M - q_1 ) + q_1^2 )$,
$h_{6 6}^{(2)} = q_3 ( q_3 + 4 M )$ and
$h_{i j}^{(2)} = q_1 ( q_1 - 4 M )$.

For $q_1 = q_2 = 0$, $D_t = 1$ and $n = 4$, $\hat g_6$ reduces to
\begin{equation}
    \hat g_6
    = -\left(
        1
        - \frac{2 M}{r}
    \right) dt^2
    + dr^2
    + \left(
        1
        - \frac{2 M}{r}
    \right)^2 r^2 d\Omega^2
    + \frac{\left(
        dx^5
        + \left(
            1
            - \frac{2 M}{r}
        \right) dt
    \right)^2}{
        1
        - \frac{2 M}{r}
    }
    + (dx^6 )^2.
\end{equation}
In Figure \ref{fig: gtt_meq}, we plot $g_{tt} (r)$ for three values of the ratio $q_1/M$: 0, 0.5 and 1.5.
Observe that $g_{tt}$ tends to $0$ as $r$ approaches $2 M$ from the right for $q_1 > 0$.
$g_{tt}$ decreases slowly as $q_1/M$ increases.

\begin{figure}
    \centering
    \includegraphics[width=.5\textwidth]{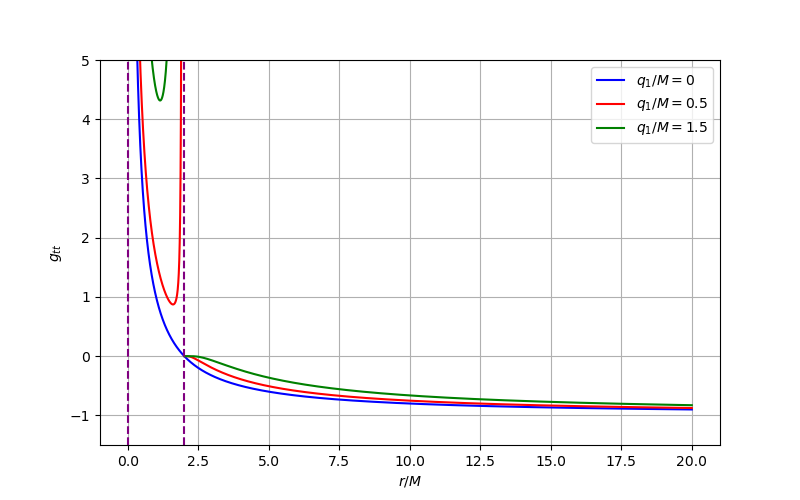}
    \caption{Evolution of $g_{tt} (r)$ considering three values of $q_1/M$: 0, 0.5 and 1.5.}
    \label{fig: gtt_meq}
\end{figure}

\section{Conclusion}    %
\label{section: conclusions}

In this work, we introduced a method for obtaining exact solutions to the higher-dimensional EFE in vacuum from a given solution to the chiral equation.
We solved the chiral equation for a symmetric matrix $g ( z, \bar{z} ) \in SL ( n, \mathbb{R} )$ assuming a flat space.
The matrix $g ( z, \bar{z} )$ is determined by a set $\{ A_a \} \subset \mathfrak{sl} ( n, \mathbb{R} )$ of pairwise commuting matrices, a constant matrix $g_0 \in \mathcal{I} \{ A_a \}$ and a set $\{ \xi^a ( z, \bar{z} ) \}$ of solutions to the generalized Laplace equation \eqref{gen Laplace eq}.
The advantage of this method lies in the use of algebraic techniques to determine the set $\{ A_a \}$ and the matrix $g_0$.

To determine a pair of commuting matrices, we assumed that $A_1$ be a representative of some equivalence class of $\mathfrak{sl} ( n, \mathbb{R} )$ and $A_2$ a matrix in the centralizer of $A_1$.
In Section \ref{section: centralizer}, we determined the centralizers of the five types of equivalence classes of $\mathfrak{sl} ( n, \mathbb{R} )$.

For a set of more than two matrices, we proposed to find the largest commutative algebras $\mathfrak{A}$ included in the centralizers of equivalence classes of $\mathfrak{sl} ( n, \mathbb{R} )$.
The matrices $A_a$ are chosen from these commutative algebras.
In Section \ref{section: commutative algebras}, we found the five types of commutative algebras. 
From the commutative algebra $\mathfrak{J}_{1, \ldots, 1} \subset \mathbf{M}_n$, we obtained the set $\{ S_0, \ldots, S_{n - 2} \} \in \mathfrak{sl} ( n, \mathbb{R} )$, which is linearly independent.

Since the matrix $g_0$ belongs to the subspace $\mathcal{I} \{ A_a \}$, in Section \ref{section: subspace ia} we presented this set for each of the five types of equivalence classes of $\mathfrak{sl} ( n, \mathbb{R} )$.
In addition, we determined $\mathcal{I} ( \mathfrak{A} )$ for the commutative algebras $\mathfrak{A}$ obtained in Section \ref{section: commutative algebras} and $\mathcal{I} ( S_0, \ldots, S_i )$ for some $i \in \{ 1, \ldots, n - 2 \}$.

Everyone knows that calculating the exponential of a matrix is a long and complicated procedure.
That is why, in Section \ref{section: matrix exponential}, we determined the exponential of matrices in $\mathfrak{J}_m$ and $\mathfrak{J}^n$.
We also computed the exponential of a linear combination of matrices $S_0, \ldots, S_i$ for some $i \in \{ 1, \ldots, n - 2 \}$.

As examples, we built three metric tensors using our method.
These metric tensors depend on constant parameters.
Varying these parameters yields different solutions to the EFE.
The metrics of the four-dimensional base spaces are asymptotically flat.
For these examples, we only used the matrices $S_0$ and $S_1$; however, it is possible to obtain more solutions by using the other matrices $S_3, \ldots, S_{n - 1}$.

In summary, we introduced a new and simple method, along with the necessary tools to obtain exact solutions to the EFE in vacuum for any dimension greater than 4.
Furthermore, we showed how to write these metrics as a Kaluza-Klein metric.

\backmatter

\bmhead{Acknowledgements}

This work was partially supported by CONAHCyT M\'exico under grants \mbox{A1-S-8742}, 304001, 376127, 240512,  \mbox{FORDECYT-PRONACES} grant No. 490769 and \mbox{I0101/131/07 C-234/07} of the Instituto Avanzado de Cosmolog\'ia (IAC) collaboration (http://www.iac.edu.mx/).

\begin{appendices}

\section{Equivalence classes}   %

\begin{longtable}{ c c c c}
    $\mathbf{Class}$    &   $\mathbf{A}_1$    &   $\mathbf{A}_2$  &   $\mathbf{g}_0$  \\\hline\hline
    $[ A_1 ]$   &   $\left[\begin{array}{ccc}
        0 & 1 & \\
        & 0 & \\
        && 0 \\
    \end{array}\right]$ & $\left[\begin{array}{ccc}
        a_1 & a_2 & c \\
        & a_1 & \\
        & d & b   \\
    \end{array}\right]$  &   $\left[\begin{array}{ccc}
        A_1 & A_2 & C \\
        A_2 && \\
        C && B   \\
    \end{array}\right]$  \\\hline
    $[ A_2 ]$   &  $\left[\begin{array}{ccc}
        r && \\
        & r &\\
        && -2 r \\
    \end{array}\right]$ &   $\left[\begin{array}{ccc}
        a_{11} & a_{12} & \\
        a_{21} & a_{22} & \\
        && b \\
    \end{array}\right]$ &   $\left[\begin{array}{ccc}
        A_1 & A_2 & \\
        A_2 & A_3 & \\
        && B \\
    \end{array}\right]$ \\\hline
    $[ A_3 ]$   &   $\left[\begin{array}{ccc}
        0 & 1 & 0   \\
        & 0 & 1 \\
        && 0 \\
    \end{array}\right]$  &   $\left[\begin{array}{ccc}
        a_1 & a_2 & a_3   \\
        & a_1 & a_2 \\
        && a_1 \\
    \end{array}\right]$ &   $\left[\begin{array}{ccc}
        A_1 & A_2 & A_3 \\
        A_2 & A_3 & \\
        A_3 && \\
    \end{array}\right]$   \\\hline
    $[ A_4 ]$   &   $\left[\begin{array}{ccc}
        r & 1 & \\
        & r & \\
        && -2 r \\
    \end{array}\right]$ &   $\left[\begin{array}{ccc}
        a_1 & a_2 & \\
        & a_1 & \\
        && b \\
    \end{array}\right]$ &   $\left[\begin{array}{ccc}
        A_1 & A_2 & \\
        A_2 && \\
        && B \\
    \end{array}\right]$\\\hline
    $[ A_5 ]$   &   $\left[\begin{array}{ccc}
        r_1 && \\
        & r_2 & \\
        && r_3 \\
    \end{array}\right]$ &   $\left[\begin{array}{ccc}
        a && \\
        & b & \\
        && c \\
    \end{array}\right]$ &   $\left[\begin{array}{ccc}
        A && \\
        & B & \\
        && C \\
    \end{array}\right]$ \\\hline
    $[ A_6 ]$   &   $\left[\begin{array}{ccc}
        -2 r && \\
        & r & -s \\
        & s & r \\
    \end{array}\right]$ &   $\left[\begin{array}{ccc}
        a && \\
        & u & -v \\
        & v & u \\
    \end{array}\right]$ &   $\left[\begin{array}{ccc}
        A && \\
        & U & V \\
        & V & -U \\
    \end{array}\right]$\\\hline
\caption{Equivalence classes of $\mathfrak{sl} ( 3, \mathbb{R} )$.}
\label{table: eq classes sl3}
\end{longtable}

\begin{longtable}{ c c c c }
    $\mathbf{Class}$    &   $\mathbf{A}_1$    &   $\mathbf{A}_2$  &   $\mathbf{g}_0$  \\\hline\hline
    $[ A_1 ]$   &   $\left[\begin{array}{cccc}
        r &&& \\
        & r && \\
        && -r & \\
        &&& -r \\
    \end{array}\right]$ &   $\left[\begin{array}{cccc}
        a_{11} & a_{12} && \\
        a_{21} & a_{22} && \\
        && b_{11} & b_{12} \\
        && b_{21} & b_{22} \\
    \end{array}\right]$ &   $\left[\begin{array}{cccc}
        A_1 & A_2 && \\
        A_2 & A_3 && \\
        && B_1 & B_2 \\
        && B_2 & B_3 \\
    \end{array}\right]$ \\\hline
    $[ A_2 ]$   &   $\left[\begin{array}{cccc}
        0 & 1 && \\
        & 0 && \\
        && 0 & \\
        &&& 0 \\
    \end{array}\right]$ &   $\left[\begin{array}{cccc}
        a_1 & a_2 & c_1 & c_2 \\
        & a_1 && \\
        & d_1 & b_{11} & b_{12} \\
        & d_2 & b_{21} & b_{22} \\
    \end{array}\right]$  &   $\left[\begin{array}{cccc}
        A_1 & A_2 & C_1 & \\
        A_2 & A_3 & C_2 & \\
        C_1 & C_2 & B_1 & B_2 \\
        && B_2
    \end{array}\right]$  \\\hline
    $[ A_3 ]$   &   $\left[\begin{array}{cccc}
        r & 1 && \\
        & r && \\
        && -r & \\
        &&& -r \\
    \end{array}\right]$ &   $\left[\begin{array}{cccc}
        a_1 & a_2 && \\
        & a_1 && \\
        && b_{11} & b_{12} \\
        && b_{21} & b_{22}
    \end{array}\right]$  &   $\left[\begin{array}{cccc}
        A_1 & A_2 && \\
        A_2 &&& \\
        && B_1 & B_2 \\
        && B_2 & B_3
    \end{array}\right]$  \\\hline
    $[ A_4 ]$   &   $\left[\begin{array}{cccc}
        r &&& \\
        & r && \\
        && -r & -s \\
        && s & -r \\
    \end{array}\right]$ &   $\left[\begin{array}{cccc}
        a_{11} & a_{12} && \\
        a_{21} & a_{22} && \\
        && u & -v \\
        && v & u \\
    \end{array}\right]$ &   $\left[\begin{array}{cccc}
        A_1 & A_2 && \\
        A_2 & A_3 && \\
        && U & V \\
        && V & -U \\
    \end{array}\right]$ \\\hline
    $[ A_5 ]$   &   $\left[\begin{array}{cccc}
        0 & 1 && \\
        & 0 && \\
        && 0 & 1 \\
        &&& 0 \\
    \end{array}\right]$ &   $\left[\begin{array}{cccc}
        a_1 & a_2 & c_1 & c_2 \\
        & a_1 && c_1 \\
        d_1 & d_2 & b_1 & b_2 \\
        & d_1 && b_1
    \end{array}\right]$ &   $\left[\begin{array}{cccc}
        A_1 & A_2 & C_1 & C_2 \\
        A_2 && C_1 \\
        C_1 & C_2 & B_1 & B_2 \\
        C_2 && B_2 &
    \end{array}\right]$ \\\hline
    $[ A_6 ]$   &   $\left[\begin{array}{cccc}
        r & 1 && \\
        & r && \\
        && -r & 1 \\
        &&& -r \\
    \end{array}\right]$ &   $\left[\begin{array}{cccc}
        a_1 & a_2 && \\
        & a_1 && \\
        && b_1 & b_2 \\
        &&& b_1 \\
    \end{array}\right]$ &   $\left[\begin{array}{cccc}
        A_1 & A_2 && \\
        A_2 &&& \\
        && B_1 & B_2 \\
        && B_2 & \\
    \end{array}\right]$ \\\hline
    $[ A_7 ]$   &   $\left[\begin{array}{cccc}
        r & 1 && \\
        & r && \\
        && -r & -s \\
        && s & -r \\
    \end{array}\right]$ &   $\left[\begin{array}{cccc}
        a_1 & a_2 && \\
        & a_1 && \\
        && u & -v \\
        && v & u \\
    \end{array}\right]$ &   $\left[\begin{array}{cccc}
        A_1 & A_2 && \\
        A_2 &&& \\
        && U & V \\
        && V & -U \\
    \end{array}\right]$ \\\hline
    $[ A_8 ]$   &   $\left[\begin{array}{cccc}
        r & -s_1 && \\
        s_1 & r && \\
        && -r & -s_2 \\
        && s_2 & -r \\
    \end{array}\right]$ &   $\left[\begin{array}{cccc}
        u_1 & -v_1 && \\
        v_1 & u_1 && \\
        && u_2 & -v_2 \\
        && v_2 & u_2 \\
    \end{array}\right]$ &   $\left[\begin{array}{cccc}
        U_1 & V_1 && \\
        V_1 & -U_1 && \\
        && U_2 & V_2 \\
        && V_2 & -U_2 \\
    \end{array}\right]$ \\\hline
    $[ A_9 ]$   &   $\left[\begin{array}{cccc}
        0 & -s && \\
        s & 0 && \\
        && 0 & -s \\
        && s & 0 \\
    \end{array}\right]$ &   $\left[\begin{array}{cccc}
        u_{11} & -v_{11} & u_{12} & -v_{12} \\
        v_{11} & u_{11} & v_{12} & u_{12} \\
        u_{21} & -v_{21} & u_{22} & -v_{22} \\
        v_{21} & u_{21} & v_{22} & u_{22}
    \end{array}\right]$ &   $\left[\begin{array}{cccc}
        U_1 & V_1 & U_2 & V_2 \\
        V_1 & -U_1 & V_2 & -U_2 \\
        U_2 & V_2 & U_3 & V_3 \\
        V_2 & -U_2 & V_3 & -U_3
    \end{array}\right]$ \\\hline
    $[ A_{10} ]$  &   $\left[\begin{array}{cccc}
        r_1 &&& \\
        & r_1 && \\
        && r_2 & \\
        &&& r_3 \\
    \end{array}\right]$ &   $\left[\begin{array}{cccc}
        a_{11} & a_{12} && \\
        a_{21} & a_{22} && \\
        && b & \\
        &&& c \\
    \end{array}\right]$ &   $\left[\begin{array}{cccc}
        A_1 & A_2 && \\
        A_2 & A_3 && \\
        && B & \\
        &&& C \\
    \end{array}\right]$ \\\hline
    $[ A_{11} ]$  &   $\left[\begin{array}{cccc}
        r_1 & 1 && \\
        & r_1 && \\
        && r_2 & \\
        &&& r_3 \\
    \end{array}\right]$ &   $\left[\begin{array}{cccc}
        a_1 & a_2 && \\
        & a_1 && \\
        && b & \\
        &&& c \\
    \end{array}\right]$ &   $\left[\begin{array}{cccc}
        A_1 & A_2 && \\
        A_2 &&& \\
        && B & \\
        &&& C \\
    \end{array}\right]$ \\\hline
    $[ A_{12} ]$  &   $\left[\begin{array}{cccc}
        r & 1 && \\
        & r && \\
        && r & \\
        &&& -3 r \\
    \end{array}\right]$ &   $\left[\begin{array}{cccc}
        a_1 & a_2 & d & \\
        & a_1 && \\
        & e & b & \\
        &&& c \\
    \end{array}\right]$ &   $\left[\begin{array}{cccc}
        A_1 & A_2 & D & \\
        A_2 &&& \\
        D &&B & \\
        &&& C \\
    \end{array}\right]$ \\\hline
    $[ A_{13} ]$  &   $\left[\begin{array}{cccc}
        r &&& \\
        & r && \\
        && r & \\
        &&& -3 r \\
    \end{array}\right]$ &   $\left[\begin{array}{cccc}
        a_{11} & a_{12} & a_{13} & \\
        a_{21} & a_{22} & a_{23} & \\
        a_{31} & a_{32} & a_{33} & \\
        &&& b \\
    \end{array}\right]$ &   $\left[\begin{array}{cccc}
        A_1 & A_2 & A_3 & \\
        A_2 & A_4 & A_5 & \\
        A_3 & A_5 & A_6 & \\
        &&& B \\
    \end{array}\right]$ \\\hline
    $[ A_{14} ]$  &   $\left[\begin{array}{cccc}
        0 & 1 & 0 & \\
        & 0 & 1 & \\
        && 0 & \\
        &&& 0 \\
    \end{array}\right]$ &   $\left[\begin{array}{cccc}
        a_1 & a_2 & a_3 & c \\
        & a_1 & a_2 & 0 \\
        && a_1 & 0 \\
        0 & 0 & d & b
    \end{array}\right]$ &   $\left[\begin{array}{cccc}
        A_1 & A_2 & A_3 & C \\
        A_2 & A_3 && 0 \\
        A_3 &&& 0 \\
        C & 0 & 0 & B
    \end{array}\right]$ \\\hline
    $[ A_{15} ]$  &   $\left[\begin{array}{cccc}
        r & 1 & 0 & \\
        & r & 1 & \\
        && r & \\
        &&& -3 r \\
    \end{array}\right]$ &   $\left[\begin{array}{cccc}
        a_1 & a_2 & a_3 & \\
        & a_1 & a_2 & \\
        && a_1 & \\
        &&& b \\
    \end{array}\right]$ &   $\left[\begin{array}{cccc}
        A_1 & A_2 & A_3 & \\
        A_2 & A_3 && \\
        A_3 &&& \\
        &&& B \\
    \end{array}\right]$ \\\hline
    $[ A_{16} ]$  &   $\left[\begin{array}{cccc}
        r_1 &&& \\
        & r_2 && \\
        && r_3 & \\
        &&& r_4 \\
    \end{array}\right]$ &   $\left[\begin{array}{cccc}
        a &&& \\
        & b && \\
        && c & \\
        &&& d \\
    \end{array}\right]$ &   $\left[\begin{array}{cccc}
        A &&& \\
        & B && \\
        && C & \\
        &&& D \\
    \end{array}\right]$ \\\hline
    $[ A_{17} ]$  &   $\left[\begin{array}{cccc}
        0 & 1 & 0 & 0 \\
        & 0 & 1 & 0 \\
        && 0 & 1 \\
        &&& 0
    \end{array}\right]$ &   $\left[\begin{array}{cccc}
        a_1 & a_2 & a_3 & a_4 \\
        & a_1 & a_2 & a_3 \\
        && a_1 & a_2 \\
        &&& a_1
    \end{array}\right]$ &   $\left[\begin{array}{cccc}
        A_1 & A_2 & A_3 & A_4 \\
        A_2 & A_3 & A_4 & \\
        A_3 & A_4 && \\
        A_4 &&& \\
    \end{array}\right]$ \\\hline
    $[ A_{18} ]$  &   $\left[\begin{array}{cccc}
        r_1 &&& \\
        & r_2 && \\
        && r & -s \\
        && s & r \\
    \end{array}\right]$ &   $\left[\begin{array}{cccc}
        a &&& \\
        & b && \\
        && u & -v \\
        && v & u \\
    \end{array}\right]$ &   $\left[\begin{array}{cccc}
        A &&& \\
        & B && \\
        && U & V \\
        && V & -U \\
    \end{array}\right]$ \\\hline
    $[ A_{19} ]$  &   $\left[\begin{array}{cccc}
        0 & -s & 1 & 0 \\
        s & 0 & 0 & 1 \\
        && 0 & -s \\
        && s & 0
    \end{array}\right]$ &   $\left[\begin{array}{cccc}
        u_1 & -v_1 & u_2 & -v_2 \\
        v_1 & u_1 & v_2 & u_2 \\
        && u_1 & -v_1 \\
        && v_1 & u_1
    \end{array}\right]$ &   $\left[\begin{array}{cccc}
        U_1 & V_1 & U_2 & V_2 \\
        V_1 & -U_1 & V_2 & -U_2 \\
        U_2 & V_2 && \\
        V_2 & -U_2 &&
    \end{array}\right]$ \\\hline
\caption{Equivalence classes of $\mathfrak{sl} ( 4, \mathbb{R} )$.}
\label{table: eq classes sl4}
\end{longtable}

\section{Commutative algebras} %

\begin{longtable}{ c c l}
    $\mathbf{\mathfrak{A}}$ & $\mathbf{e^A}$ & \\\hline\hline
    $\left[\begin{array}{ccc}
        a_1 & a_2 & a_3 \\
        & a_1 & a_2 \\
        && a_1 \\
    \end{array}\right]$ & $\left[\begin{array}{ccc}
        X_1 & X_2 & X_3 \\
        & X_1 & X_2 \\
        && X_1 \\
    \end{array}\right]$ & $\begin{array}{l}
        X_1 = e^{a_1}   \\
        X_2 = e^{a_1} a_2 \\
        X_3 = e^{a_1} ( a_2^2/2 + a_3 )   \\
    \end{array}$ \\\hline
    $\left[\begin{array}{ccc}
        a_1 & a_2 & \\
        & a_1 & \\
        && b \\
    \end{array}\right]$ & $\left[\begin{array}{ccc}
        X_1 & X_2 & \\
        & X_1 & \\
        && Y \\
    \end{array}\right]$ & $\begin{array}{l}
        X_1 = e^{a_1} a_2 \\
        X_2 = e^{a_1} \\
        Y = e^{b} \\
    \end{array}$ \\\hline
    $\left[\begin{array}{ccc}
        a && \\
        & u & -v \\
        & v & u \\
    \end{array}\right]$ & $\left[\begin{array}{ccc}
        X && \\
        & U & -V \\
        & V & U \\
    \end{array}\right]$ & $\begin{array}{l}
        X = A e^{a} \\
        U = e^{u} \cos v\\
        V = e^{u} \sin v \\
    \end{array}$ \\\hline
    $\left[\begin{array}{ccc}
        a_1 && \\
        & a_2 & \\
        && a_3 \\
    \end{array}\right]$ & $\left[\begin{array}{ccc}
        X_1 && \\
        & X_2 & \\
        && X_3 \\
    \end{array}\right]$ & $\begin{array}{l}
        X_1 = e^{a_1} \\
        X_2 = e^{a_2} \\
        X_3 = e^{a_3} \\
    \end{array}$ \\\hline
\caption{Commutative algebras $\mathfrak{A}$ of $\mathfrak{sl} ( 3, \mathbb{R} )$.}
\label{table: comm alg sl3}
\end{longtable}

\begin{longtable}{c c l}
    $\mathbf{\mathfrak{A}}$ & $\mathbf{e^A}$ & \\\hline\hline
    $\left[\begin{array}{cccc}
        a_1 & a_2 && \\
        & a_1 && \\
        && b_1 & b_2 \\
        &&& b_1 \\
    \end{array}\right]$ & $\left[\begin{array}{cccc}
        X_1 & X_2 && \\
        & X_1 && \\
        && Y_1 & Y_2 \\
        &&& Y_1 \\
    \end{array}\right]$ & $\begin{array}{l}
        X_1 = e^{a_1} \\
        X_2 = e^{a_1} a_2 \\
        Y_1 = e^{b_1} \\
        Y_2 = e^{b_1} b_2 \\
    \end{array}$ \\\hline
    $\left[\begin{array}{cccc}
        a_1 & a_2 && \\
        & a_1 && \\
        && b & \\
        &&& c \\
    \end{array}\right]$ & $\left[\begin{array}{cccc}
        X_1 & X_2 && \\
        & X_1 && \\
        && Y_1 & \\
        &&& Y_2 \\
    \end{array}\right]$ & $\begin{array}{l}
        X_1 = e^{a_1} \\
        X_2 = e^{a_1} a_2 \\
        Y_1 = e^{b} \\
        Y_2 = e^{c} \\
    \end{array}$ \\\hline
    $\left[\begin{array}{cccc}
        a_1 & a_2 && \\
        & a_1 && \\
        && u & -v \\
        && v & u \\
    \end{array}\right]$ & $\left[\begin{array}{cccc}
        X_1 & X_2 && \\
        & X_1 && \\
        && U & -V \\
        && V & U \\
    \end{array}\right]$ & $\begin{array}{l}
        X_1 = e ^{a_1} \\
        X_2 =  e ^{a_1} a_2 \\
        U = e^{u} \cos v \\
        V = e^{u} \sin v \\
    \end{array}$ \\\hline
    $\left[\begin{array}{cccc}
        u_1 & -v_1 && \\
        v_1 & u_1 && \\
        && u_2 & -v_2 \\
        && v_2 & u_2 \\
    \end{array}\right]$ & $\left[\begin{array}{cccc}
        U_1 & -V_1 && \\
        V_1 & U_1 && \\
        && U_2 & -V_2 \\
        && V_2 & U_2 \\
    \end{array}\right]$ & $\begin{array}{l}
        U_1 = e ^{u_1} \cos v_1 \\
        V_1 = e ^{u_1} \sin v_1 \\
        U_2 = e ^{u_2} \cos v_2 \\
        V_2 = e ^{u_2} \sin v_2 \\
    \end{array}$ \\\hline
    $\left[\begin{array}{cccc}
        a_1 & a_2 & a_3 & \\
        & a_1 & a_2 & \\
        && a_1 & \\
        &&& b \\
    \end{array}\right]$ & $\left[\begin{array}{cccc}
        X_1 & X_2 & X_3 & \\
        & X_1 & X_2 & \\
        && X_1 & \\
        &&& Y \\
    \end{array}\right]$ & $\begin{array}{l}
        X_1 = e^{a_1}   \\
        X_2 = e^{a_1} a_2 \\
        X_3 = e^{a_1} ( a_2^2/2 + a_3 )   \\
        Y = e^b \\
    \end{array}$ \\\hline
    $\left[\begin{array}{cccc}
        a_1 & a_2 & a_3 & a_4 \\
        & a_1 & a_2 & a_3 \\
        && a_1 & a_2 \\
        &&& a_1 \\
    \end{array}\right]$ & $\left[\begin{array}{cccc}
        X_1 & X_2 & X_3 & X_4 \\
        & X_1 & X_2 & X_3 \\
        && X_1 & X_2 \\
        &&& X_1 \\
    \end{array}\right]$ & $\begin{array}{l}
        X_4 = e^{a_1} ( a_2^3/6 + a_2 a_3 + a_4 ) \\
        X_3 = e^{a_1} ( a_2^2/2 + a_3 ) \\
        X_2 = e^{a_1} a_2 \\
        X_1 = e^{a_1} \\
    \end{array}$ \\\hline
    $\left[\begin{array}{cccc}
        a &&& \\
        & b && \\
        && c & \\
        &&& d \\
    \end{array}\right]$ & $\left[\begin{array}{cccc}
        X_1 &&& \\
        & X_2 && \\
        && X_3 & \\
        &&& X_4 \\
    \end{array}\right]$ & $\begin{array}{l}
        X_1 = e^a \\
        X_2 = e^b \\
        X_3 = e^c \\
        X_4 = e^d  \\
    \end{array}$ \\\hline
    $\left[\begin{array}{cccc}
        a &&&  \\
        & b && \\
        && u & -v \\
        && v & u \\
    \end{array}\right]$ & $\left[\begin{array}{cccc}
        X_1 &&& \\
        & X_2 && \\
        && U & -V \\
        && V & U \\
    \end{array}\right]$ & $\begin{array}{l}
        X_1 = e^a \\
        X_2 = e^b \\
        U = e^u \cos v \\
        V = e^u \sin v \\
    \end{array}$ \\\hline
    $\left[\begin{array}{cccc}
        u_1 & -v_1 & u_2 & -v_2 \\
        v_1 & u_1 & v_2 & u_2 \\
        && u_1 & -v_1 \\
        && v_1 & u_1
    \end{array}\right]$ & $\left[\begin{array}{cccc}
        U_1 & -V_1 & U_2 & -V_2 \\
        V_1 & U_1 & V_2 & U_2 \\
        && U_1 & -V_1 \\
        && V_1 & U_1
    \end{array}\right]$ & $\begin{array}{l}
        U_1 = e^{u_1} \cos v_1 \\
        V_1 = e^{u_1} \sin v_1 \\
        U_2 = e^{u_1} ( u_2 \cos v_1 - v_2 \sin v_1 ) \\
        V_2 = e^{u_1} ( v_2 \cos v_1 + u_2 \sin v_1 ) \\
    \end{array}$ \\\hline
\caption{Commutative algebras $\mathfrak{A}$ of $\mathfrak{sl} ( 4, \mathbb{R} )$.}
\label{table: comm alg sl4}
\end{longtable}

\end{appendices}


\bibliography{sn-bibliography}

\end{document}